\renewcommand*{\backrefalt}[4]{%
    \ifcase #1 \footnotesize{(Not cited.)}%
    \or        \footnotesize{(Cited on page~#2.)}%
    \else      \footnotesize{(Cited on pages~#2.)}%
    \fi}
\renewcommand{\cite}[1]{\citep{#1}}
\newcommand{\mytitle}[1][]{}
\newcommand{\mysubtitle}[1][]{}
\newcommand{\myauthors}[1][]{Denizalp Goktas}
\newcommand{\myauthorsshort}[1][]{Goktas}
\title{T\^atonnement in Homothetic Fisher Markets}
\author{
 Denizalp Goktas\\
  Department of Computer Science\\
  Brown University \\
  Providence, RI\\
  \texttt{ denizalp\_goktas@brown.edu} \\
  %% examples of more authors
   \And
Jiayi Zhao\\
 Department of Computer Science\\
 Pomona College \\
  Claremont, CA\\
  \texttt{jzae2019\@mymail.pomona.edu} \\
  \And
  Amy Greenwald \\
  Department of Computer Science \\
    Brown University \\
  Providence, RI \\
   \texttt{ amy\_greenwald@brown.edu} \\
  %% \And
  %% Coauthor \\
  %% Affiliation \\
  %% Address \\
  %% \texttt{email} \\
}
\begin{document}
\maketitle
\begin{abstract}

A prevalent theme in the economics and computation literature is to identify natural price-adjustment processes by which sellers and buyers in a market can discover equilibrium prices.
An example of such a process is \emph{t\^atonnement}, an auction-like algorithm first proposed in 1874 by French economist Walras in which sellers adjust prices based on the Marshallian demands of buyers, i.e., budget-constrained utility-maximizing demands. 
A dual concept in consumer theory is a buyer's Hicksian demand, i.e., consumptions that minimize expenditure while achieving a desired utility level.
In this paper, we identify the maximum of the absolute value of the elasticity of the Hicksian demand, i.e., the maximum percentage change in the Hicksian demand of any good w.r.t.\@ the change in the price of some other good, as an economic parameter sufficient to capture and explain a range of convergent and non-convergent \emph{t\^atonnement\/} behaviors in a broad class of markets.
In particular, we prove the convergence of \emph{t\^atonnement\/} at a rate of $O(\nicefrac{(1+ \elastic^2)}{T})$, in homothetic Fisher markets with bounded price elasticity of Hicksian demand, i.e., Fisher markets in which consumers have preferences represented by homogeneous utility functions and the price elasticity of their Hicksian demand is bounded, where $\elastic$ is the maximum absolute value of the price elasticity of Hicksian demand across all buyers.
Our result not only generalizes known convergence results for CES Fisher markets, but extends them to mixed nested CES markets and Fisher markets with continuous, possibly non-concave, homogeneous utility functions.
Our convergence rate covers the full spectrum of nested CES utilities, including Leontief and linear utilities, unifying previously existing disparate convergence and non-convergence results.
In particular, for $\elastic = 0$, i.e., Leontief markets, we recover the best-known convergence rate of $O(\nicefrac{1}{T})$, and as $\elastic \to \infty$, e.g., linear Fisher markets, we obtain non-convergent behavior, as expected.
%\deni{Leontief is the unique market with 0 Hicksian elasticity, but linear is not the unique market with infinite Hicksian elasticity.}

\keywords{Market Equilibrium \and Market Dynamics \and Fisher Markets.}

\end{abstract} 

% \begin{CCSXML}
% <ccs2012>
%    <concept>
%        <concept_id>10003752.10010070.10010099.10010106</concept_id>
%        <concept_desc>Theory of computation~Market equilibria</concept_desc>
%        <concept_significance>500</concept_significance>
%        </concept>
%    <concept>
%        <concept_id>10003752.10010070.10010099.10010103</concept_id>
%        <concept_desc>Theory of computation~Exact and approximate computation of equilibria</concept_desc>
%        <concept_significance>500</concept_significance>
%        </concept>
%  </ccs2012>
% \end{CCSXML}

% % \ccsdesc[500]{Theory of computation~Market equilibria}
% \ccsdesc[500]{Theory of computation~Exact and approximate computation of market equilibria}

\section{Introduction}
\label{sec:intro}

% \deni{Missing some references in this intro will add them as I review the intro later.}
% \deni{Check that order of cross price is correct.}

\mydef{Competitive (or Walrasian or market) equilibrium} \cite{arrow-debreu,walras}, first studied by French economist L\'eon Walras in 1874, is the steady state of an economy---any 
%\sdeni{market}{}\deni{What is a market?} 
system governed by supply and demand \cite{walras}.
Walras assumed that each producer in an economy would act so as to maximize its profit, while consumers would make decisions that maximize their preferences over their available consumption choices; all this, while perfect competition prevails, meaning producers and consumers are unable to influence the prices that emerge.
%for different commodities.
Under these assumptions, the demand and supply of each commodity is a function of prices, as they are a consequence of the decisions made by the producers and consumers, having observed the prevailing prices.
A competitive equilibrium then corresponds to prices that solve the system of simultaneous equations with demand on one side and supply on the other, i.e., prices at which supply meets demand.
%for all markets in the system
Unfortunately, Walras did not provide conditions that guarantee the existence of such a solution, and the question of whether such prices exist remained open until \citeauthor{arrow-debreu}'s rigorous analysis of competitive equilibrium in their model of a competitive economy in the middle of last century \cite{arrow-debreu}.

The Arrow-Debreu model comprises a set of commodities; a set of firms, each deciding what quantity of each commodity to supply; and a set of consumers, each choosing a quantity of each commodity to demand in exchange for their endowment \cite{arrow-debreu}. 
\citeauthor{arrow-debreu} define a \mydef{competitive equilibrium} as a collection of consumptions, one per consumer, a collection of productions, one per firm, and prices, one per commodity, such that fixing equilibrium prices: (1) no consumer can increase their utility by deviating to an alternative affordable consumption, (2) no firm can increase profit by deviating to another production in their production set, and (3) the \mydef{aggregate demand} for each commodity (i.e., the sum of the commodity's consumption across all consumers) does not exceed to its \mydef{aggregate supply} (i.e., the sum of the commodity's production and endowment across firms and consumers, respectively), while the total value of the aggregate demand is equal to the total value of the aggregate supply, i.e., \mydef{Walras' law} holds.

\citeauthor{arrow-debreu} proceeded to show that their competitive economy could be seen as an \mydef{abstract economy}, which today is better known as a \mydef{pseudo-game} \cite{arrow-debreu, facchinei2010generalized}.
A pseudo-game is a generalization of a game in which the actions taken by each player impact not only the other players' payoffs, as in games, but also their set of permissible actions.
\citeauthor{arrow-debreu} proposed \mydef{generalized Nash equilibrium} as the solution concept for this model, an action profile from which no player can improve their payoff by unilaterally deviating to another action in the space of permissible actions determined by the actions of other players.
\citeauthor{arrow-debreu} further showed that any competitive economy could be represented as a pseudo-game inhabited by a fictional auctioneer, who sets prices so as to buy and resell commodities at a profit, as well as consumers and producers, who respectively, choose utility-maximizing consumptions of commodities in the budget sets determined by the prices set by the auctioneer, and profit-maximizing productions at the prices set by the auctioneer.
The elegance of the reduction from competitive economies to pseudo-games is rooted in a simple observation: the set of competitive equilibria of a competitive economy is equal to the set of generalized Nash equilibria of the associated pseudo-game, implying the existence of competitive equilibrium in competitive economies as a corollary of the existence of generalized Nash equilibria in pseudo-games, whose proof is a straightforward generalization of Nash's proof for the existence of Nash equilibria \cite{nash1950existence}.%
\footnote{\citet{mckenzie1959existence} would prove the existence of competitive equilibrium independently, but concurrently.
Much of his work, however, has gone unrecognized perhaps because his proof technique does not depend on this fundamental relationship between competitive and abstract economies.}

With the question of existence out of the way, this line of work on competitive equilibrium, which today is known as general equilibrium theory \cite{mckenzie2005classical}, turned its attention to questions of (1) efficiency, (under what assumptions are competitive equilibria Pareto-optimal?) (2) uniqueness (under what assumptions are competitive equilibria unique?), and (3) stability (under what conditions would a competitive economy settle into a competitive equilibrium?).
The first two questions were answered between the 1950s and 1970s \cite{arrow1951extension, arrow1958note, arrow-hurwicz, balasko1975some, debreu1951pareto, dierker1982unique, hahn1958gross, pearce193unique}, showing that (1) under suitable assumptions (e.g., see \citet{arrow-welfare}) competitive equilibrium demands are Pareto-optimal, and (2) competitive equilibria are unique in markets with an \mydef{excess demand function}, (i.e., the difference between the aggregate demand and supply functions), which satisfies the \mydef{gross substitutes} (GS) condition (i.e., the excess demand of any commodity increases if the price of any other commodity increases, fixing all other prices).
In regards to the question of stability, most relevant work is concerned with the convergence properties of a natural auction-like price-adjustment process, known as \mydef{t\^atonnement}, which mimics the behavior of the \mydef{law of supply and demand}, updating prices at a rate equal to the excess demand \cite{arrow1971general, kaldor1934classificatory}.
Research on \emph{t\^atonnement\/} in the economics literature is motivated by the fact that it can be understood as a plausible explanation of how prices move in real-world markets.
Hence, if one could prove convergence in all exchange economies, then perhaps it would be justifiable to claim real-world markets would also eventually settle at a competitive equilibrium.

\citet{walras}
%, who introduced \emph{t\^atonnement\/} in 1874, 
conjectured, albeit without conclusive evidence, that \emph{t\^atonnement\/} would converge to a competitive equilibrium.
While a handful of results guarantee the convergence of \emph{t\^atonnement\/} under mathematical conditions without widely agreed-upon economic interpretations \cite{nikaido1960stability, uzawa1960walras}, \citeauthor{arrow-hurwicz} [1958; 1960]
%(see also \cite{arrow1960competitive}) 
were the first to formally establish the convergence of \emph{t\^atonnement} to unique competitive equilibrium prices in a class of economically well-motivated competitive economies, namely those that satisfy the GS assumption.
Following this promising result, \citet{scarf1960instable} dashed all hope that \emph{t\^atonnement\/} would prove to be a universal price-adjustment process that converges in all economies, by showing that competitive equilibrium prices are unstable under \emph{t\^atonnement\/} dynamics in his eponymous competitive economy without firms, and with only three commodities and three consumers with Leontief preferences, i.e., \mydef{the Scarf exchange economy}.
Scarf's negative result seems to have discouraged further research by economists on the stability of competitive equilibrium \cite{fisher1975stability}.
Despite research on this question coming to a near halt, one positive outcome was achieved, on the convergence of a non-\emph{t\^atonnement\/} update rule known as \mydef{Smale's process} \cite{herings1997globally, kamiya1990globally, van1987convergent, smale1976convergent}, which updates prices at the rate of the product of the excess demand and the inverse of its Jacobian, in most competitive economies, even beyond GS, again suggesting the possibility that real-world economies could indeed settle at a competitive equilibrium. 

Nearly half a century after these seminal analyses of competitive economies, research on the stability of competitive equilibrium is once again coming to the fore, this time in computer science, perhaps motivated by applications of algorithms such as \emph{t\^atonnement\/} to load balancing over networks \cite{jain2013constrained}, or to pricing of transactions on crypotocurrency blockchains \cite{leonardos2021dynamical, liu2022empirical, reijsbergen2021transaction}.
% \deni{Does the rest of this paragraph feel out of place? Should we move to the related work?}
A detailed inquiry into the computational properties of market equilibria was initiated by \citet{devanur2008market}, who studied a special case of the Arrow-Debreu competitive economy known as the \mydef{Fisher market} \cite{brainard2000compute}.
This model, for which Irving Fisher computed equilibrium prices using a hydraulic machine in the 1890s, is essentially the Arrow-Debreu model of a competitive economy, but there are no firms, and buyers are endowed with only one type of commodity---hereafter good%
\footnote{In the context of Fisher markets, commodities are typically referred to as goods \citep{fisher-tatonnement}, as Fisher markets are often analyzed for a single time period only.
More generally, in Arrow-Debreu markets, where commodities vary by time, location, or state of the world, "an apple today" may be different than "an apple tomorrow". For consistency with the literature, we refer to commodities as goods. }---an artificial currency 
%\samy{but we note that}{because this distinction is unnecessary \samy{}{in a single time-period model}.\deni{I feel like this edit is too strong. In particular, a Fisher market is still an arrow-debreu market so you could think of goods as time and space stamped commodities but computer scientists just chose not to because their applications do often not have time and space stamped commodities. The point of the footnote is to say that computer scientists are not being precise but we stick to this terminology for consistency.}}
\cite{brainard2000compute, AGT-book}.
\citet{devanur2002market} exploited a connection first made by \citet{eisenberg1961aggregation} between the \mydef{Eisenberg-Gale program} and competitive equilibrium to solve Fisher markets assuming buyers with linear utility functions, thereby providing a (centralized) polynomial-time algorithm for equilibrium computation in these markets~\cite{devanur2002market,devanur2008market}.
Their work was built upon by \citet{jain2005market}, who extended the Eisenberg-Gale program to all Fisher markets in which buyers have \mydef{continuous, quasi-concave, and homogeneous} utility functions, and proved that the equilibrium of Fisher markets with such buyers can be computed in polynomial time by interior point methods. 
% Hereinafter, as is standard in the literature (see e.g., \cite{jain2005market}. functions are cont

Concurrent with this line of work on computing competitive equilibrium using centralized methods, a line of work on devising and proving 
% \samy{polynomial-time}{} \amy{maybe you don't want to stress comp'l efficiency until the next paragraph?}
convergence guarantees for decentralized price-adjustment processes (i.e., iterative algorithms that update prices according to a predetermined update rule) developed.
% \amy{why is this iterative process decentralized? the description doesn't imply decentralization.} \deni{These price adjustment processes are decentralized in the sense that the adjustment of the price of one good does not depend on the demand or supply of the other goods.} \amy{so it sounds more parallel than decentralized?}
This literature has focused on devising \emph{natural\/} price-adjustment processes, like \emph{t\^atonnement}, which might explain or imitate the movement of prices in real-world markets.
In addition to imitating the law of supply and demand, \emph{t\^atonnement} has been observed to replicate the movement of prices in lab experiments, where participants are given endowments and asked to trade with one another \cite{gillen2020divergence}.
%Beyond interest in understanding the convergence of natural price-adjustment processes for the aforementioned applications,
% \amy{this next sentence does not follow from the lab experiments before it:} 
Perhaps more importantly, the main premise of research on the stability of competitive equilibrium in computer science 
% \amy{this literature refers to the lab expts literature, and that is NOT the main premise of the experimental/behavioral literature!}
is that for competitive equilibrium to be justified, not only should it be backed by a natural price-adjustment process as economists have long argued, but it should also be computationally efficient \cite{AGT-book}.
% As Kamal Jain put it, ``If your laptop cannot find it, neither can the market'' \cite{AGT-book}.

The first result on this question is due to \citet{codenotti2005market}, who introduced a discrete-time version of \emph{t\^atonnement}, 
%proving an analog of \citeauthor{arrow-hurwicz}'s results in discrete time from a computational complexity perspective.
and showed that in exchange economies that satisfy \mydef{weak gross substitutes (WGS)}, the \emph{t\^atonnement\/} process converges to an approximate competitive equilibrium in a number of steps which is polynomial in the approximation factor and size of the problem.
Unfortunately, soon after this positive result appeared, \citet{papadimitriou2010impossibility} showed that it is impossible for a price-adjustment process based on the excess demand function to converge in polynomial time to a competitive equilibrium in general, ruling out the possibility of Smale's process (and many others)
%or any other excess-demand-based price adjustment 
justifying the notion of competitive equilibrium in all competitive economies.
Nevertheless, further study of the convergence of price-adjustment processes such as \emph{t\^atonnement\/} under stronger assumptions, or in simpler models than full-blown Arrow-Debreu competitive economies, remains worthwhile, as these processes are being deployed in practice \cite{jain2013constrained, leonardos2021dynamical, liu2022empirical, reijsbergen2021transaction}.

Toward this end, in this paper we make strides towards analyzing the computational complexity of discrete-time \emph{t\^atonnement\/} in \mydef{homothetic Fisher markets}, i.e., Fisher markets in which consumers have continuous and homothetic preferences.%
\footnote{We refer to Fisher markets that comprise buyers with a certain utility function by the name of the utility function, e.g., we call a Fisher market that comprises buyers with Leontief utility functions a Leontief Fisher market. We omit the ``continuous'' qualifier as competitive equilibrium is not guaranteed to exist when preferences are not continuous.}
An important concept in consumer theory is a buyer's Hicksian demand, i.e., consumptions that minimize expenditure while achieving a desired utility level.
In this paper, we identify the maximum elasticity of the Hicksian demand, i.e., the maximum percentage change in the Hicksian of any good w.r.t.\@ the change in the price of some other good, as an economic parameter sufficient to capture and explain a range of convergent and non-convergent \emph{t\^atonnement\/} behaviors in a broad class of markets.
In particular, we prove the convergence of \emph{t\^atonnement\/} in homothetic Fisher markets with bounded elasticity of Hicksian demand, i.e., Fisher markets in which consumers have preferences represented by homogeneous utility functions for which the elasticity of their Hicksian demand is bounded.
\if 0
% from abstract.tex
, at a rate of $O(\nicefrac{(1+ \elastic^2)}{T})$, where $\elastic$ is the maximum absolute value of the price elasticity of Hicksian demand across all buyers.
Our result not only generalizes existing convergence results for CES, and nested CES Fisher markets, but extends them beyond Fisher markets with concave utility functions.
Our convergence rate covers the full spectrum of (nested) CES utilities, including Leontief and linear utilities, unifying previously existing disparate convergence and non-convergence results.
In particular, for $\elastic = 0$, i.e., Leontief markets, we recover the best-known convergence rate of $O(\nicefrac{1}{T})$, and as $\elastic \to \infty$, i.e., linear Fisher markets, we obtain the non-convergence behaviour of \emph{t\^atonnement}.
\fi

\if 0
An important concept in consumer theory is the \mydef{expenditure minimization problem}, in which each buyer chooses a consumption of goods, called its \mydef{Hicksian demand}, which minimizes its expenditure constrained by its desired utility level. \amy{maybe copy my edit from the abstract. i don't think we need to mention EMP here. just Hicksian demand.}\deni{yes agreed, will do so when we agree on my edits in abstract}
In this paper, we identify the \mydef{maximum absolute value of the price elasticity of Hicksian demand}, i.e., the maximum percentage change in the Hicksian demand of any good w.r.t.\@ the change in the price of some other good, as a sufficient market parameter for the analysis of the convergence of \emph{t\^atonnement} in homothetic Fisher markets.
Via this parameter, we capture and explain a range of convergent and non-convergent \emph{t\^atonnement} behaviors.
\fi

\paragraph{Interpretation via Pseudo-Game Theory}

%We now attempt to shed light on our contributions with respect to the growing literature on the computational complexity of pseudo-games and related mathematical problems, as we believe that our identification of the price elasticity of Hicksian demand as a sufficient market parameter to obtain convergence of \emph{t\^atonement\/} becomes clearer. 

Recall that in the pseudo-game associated with a competitive economy \cite{arrow-debreu}, a (fictional) auctioneer sets prices for commodities, firms choose what quantity of each commodity to produce, and consumers choose what quantity of each commodity to consume in exchange for their endowment.
%This model is a not a game, but rather a pseudo-game, because the prices set by the sellers determine the value of the consumers' endowments, i.e., their budget, which in turn determines the consumptions of goods that they can afford.
Running \emph{t\^atonnement\/} in this pseudo-game amounts to the auctioneer running a first-order method, namely a gradient ascent dynamic on their profit function,
%\amy{on excess demand?}
while the consumers and firms reply with their best response.
Interestingly, if the competitive economy satisfies WGS, then the excess demand function is monotone\footnote{\sdeni{}{Technically speaking, the excess demand is strictly pseudomonotone, but we ignore this distinction for simplicity.}} 
% \amy{maybe i don't/the reader won't know what monotone means here?}
over all \emph{t\^atonnement\/} trajectories. 
% \deni{The pseudo-game is not monotone over the entire price simplex, if consumers are best responding.}\amy{what does it mean to be monotone over a trajectory? i don't understand.}
%Turning the competitive economy effectively into a monotone pseudo-game. 
This pseudo-game can then be understood as a monotone variational inequality \cite{facchinei2010generalized}, %which is monotone over the \emph{t\^atonnement\/} trajectories, 
whose solutions are competitive equilibrium prices. 

As \emph{t\^atonnement\/} is intended to be an explanation of real-world market behavior, proofs of its convergence would ideally rely on justifiable economic assumptions. 
% \samy{that can be \samy{measured and confirmed via experiments}{empirically validated}}{}.\amy{do the assumptions that your proofs meet this criterion? can they in fact be measured and validated via empirics? can we just leave it at ``justifiable economic assumptions'' w/o saying anything more than that?} 
Via the connection between pseudo-games games and competitive economies, establishing convergence of \emph{t\^atonnement\/} 
%(e.g., \cite{codenotti2005market, cole2008fast, fisher-tatonnement} in WGS economies) 
can be reduced to discovering justifiable economic assumptions %\amy{let's discuss! not clear. yes WGS is one such assumption. but i am not seeing why that means every tatonnement convergence theorem can be interpreted in this way. maybe rewrite the sentence: sometimes justifiable economic assumptions lead to math'l conditions that ensure convergence}
% \deni{I think I agree, a bit brain fried rn, but I'll get back to this.} 
to impose so that the ensuing pseudo-game or variational inequality satisfies suitable mathematical conditions for convergence. 

Unfortunately, without imposing significant additional assumptions (such as the excess demand function of the competitive economy being Lipschitz-smooth; see, for example, \citet{golowich2020eglast}) or relying on more complex update rules such as \mydef{extragradient descent} \cite{korpelevich1976extragradient}, first-order methods are not guaranteed to converge in last iterates%
\footnote{We note that the standard convergence metric in the \emph{t\^atonnement\/} literature is convergence in last iterates (see, for example, \citet{fisher-tatonnement}).}
in monotone 
% \amy{is this language ok? it seems to strong, since the PG is only monotone wrt tatonnement trajectories. we also say call the PG monotone in the table. if it is okay there, maybe it is also ok here? or maybe the table needs a */caveat in the caption.}
pseudo-games or monotone variational inequalities.
Undeterred, in addition to assuming WGS, \citet{cole2008fast} impose economic assumptions on the Marshallian own-price elasticity of demand and Marsallian income elasticity of demand, which imply Lipschitz-smoothness of the excess demand function over \emph{t\^atonnement\/} trajectories, and obtain convergence of \emph{t\^atonnement\/} in last iterates in Fisher markets with WGS and bounded price/income elasticity of Marshallian demand. %, i.e., Fisher markets for which the Marshallian demand changes boundedly whenever price change boundedly. \amy{boundedly is not really a word. ``boundedly'' rational? i dunno.}
% Additionally one can show that with a carefully chosen \emph{t\^atonnement\/} update rule the excess demand satisfies additional mathetmatical properties throughout \emph{t\^atonnement}'s trajectory such as strong monotonicity \cite{codenotti2005market}. 
% \samy{Finally, with some additional work \samy{}{but without further assumptions}, one can prove the convergence of first-order methods (and hence, \emph{t\^atonnement}) in average iterates in monotone \amy{?} pseudo-games and monotone variational inequalities \samy{}{(and hence, competitive economies)}.}{} \amy{this, to me, seems TOTALLY beside the point!}

% \samy{If we restrict our attention to competitive economies that are Fisher markets, then it can be shown that the pseudo-game associated with a competitive economy reduces to \amy{is?} a zero-sum pseudo-game.}{}
Paralleling the duality between WGS competitive economies and monotone pseudo-games, any homothetic Fisher market is equivalent to a zero-sum game.
Moreover, this zero-sum game can be further reduced to a convex potential, i.e., the Eisenberg-Gale program's dual, whose solutions are competitive equilibrium prices \cite{devanur2002market, fisher-tatonnement}.
With this equivalence in hand, we seek to identify economically justifiable assumptions that translate into mathematical conditions on the excess demand function that are sufficient for convergence, 
%because it is not possible to obtain convergence (in last iterates) in zero-sum games without imposing such conditions.
because first-order methods run on zero-sum games and convex potentials are otherwise not guaranteed to converge to an optimal solution in last iterates.
One obvious candidate condition is Lipschitz-smoothness of the excess demand function, but this property does not hold in Leontief Fisher markets, a flavor of homothetic markets in which \emph{t\^atonnement\/} converges!
Our research has led to the discovery that assuming bounded elasticity of Hicksian demand yields an excess demand function that is Lipschitz-continuous as well as Bregman-smooth w.r.t.\@ to the KL divergence over trajectories of \emph{t\^atonnement},%
\footnote{Bregman-smoothness is a generalization of Lipschitz-smoothness introduced by \citet{cheung2018dynamics} following work by \citet{grad-prop-response}; see \Cref{sec:prelim} for the mathematical definition.} two properties which are sufficient to guarantee the convergence of \emph{t\^atonnement} in homothetic Fisher markets. 
\emph{Our contribution, then, is to identify the economic assumptions that imply the requisite mathematical properties that yield convergence of \emph{t\^atonnement\/} in last iterates.}

%We note that the assumption of bounded elasticity of Hicksian demand is crucial to \samy{prove these results}{show these properties of the excess demand function}, which makes sense since it is not possible to obtain convergence in zero-sum games without such properties.

\begin{table}[]
\centering
\begin{tabular}{|c|c|c|}
\hline
Economy type & Pseudo-game type & Mathematical object      \\
\hline \hline
WGS Economy & Monotone pseudo-game & Monotone Variational Inequality \\
Homothetic Fisher market & Zero-sum game & Convex Potential \\
\hline
\end{tabular}
\caption{Summary of the equivalences among economy types, pseudo-game types, and mathematical objects.}
\end{table}

\subsection{Technical Contributions}

Earlier work \cite{cheung2014analyzing, fisher-tatonnement} has established a convergence rate of $\left(1- \Theta(1)\right)^T$ for CES Fisher markets excluding the linear and Leontief cases, and of $O\left(\nicefrac{1}{T}\right)$ for Leontief 
%\amy{i am assuming one of cheung2014analyzing, fisher-tatonnement covers this case?}\deni{Yes, fisher-tatonnement for leontief and ces,  cheung2014analyzing for nested CES} 
and nested%
\footnote{See Chapter 10 of \citet{cheung2014analyzing}.}
CES Fisher markets, where $T \in \N_+$ is the number of iterations for which \emph{t\^atonnement\/} is run.
In linear Fisher markets, however, \emph{t\^atonnement\/} does not converge.
We generalize these results by proving a convergence rate of $O(\nicefrac{(1+ \elastic^2)}{T})$, where $\elastic$ is the maximum absolute value of the price elasticity of Hicksian demand across all buyers. 
% Our result not only generalizes existing convergence results for CES, and nested CES Fisher markets, but extends them to mixed (nested) CES markets (Leontief included) and Fisher markets with concave utility functions.
Our convergence rate covers the full spectrum of homothetic Fisher markets, including mixed CES markets, i.e., CES markets with linear, Leontief, and (nested) CES buyers, unifying previously existing disparate convergence and non-convergence results.
In particular, for $\elastic = 0$, i.e., Leontief Fisher markets, we recover the best-known convergence rate of $O(\nicefrac{1}{T})$, and as $\elastic \to \infty$, i.e., linear Fisher markets, we obtain the non-convergent behaviour of \emph{t\^atonnement} \cite{cole2019balancing}. We summarize known convergence results in light of our results in \Cref{fig:utility-functions}.

We observe that, in contrast to general competitive economies, in homothetic Fisher markets, concavity of the utility functions is not necessary for the existence of competitive equilibrium (\Cref{new-convex}).
A computational analog of this result also holds, namely that \emph{t\^atonnement\/} converges in homothetic Fisher markets, even when buyers' utility functions are non-concave. 
Our results parallel known results on the convergence of \emph{t\^atonnement\/} in WGS markets, where concavity of utility functions is again not necessary for convergence \cite{codenotti2005market}. 

\begin{figure}
\begin{subfigure}{0.45\textwidth}
\resizebox{\columnwidth}{!}
{
    % \hspace{-2em}
    % \input{figure1} 
    \includegraphics[]{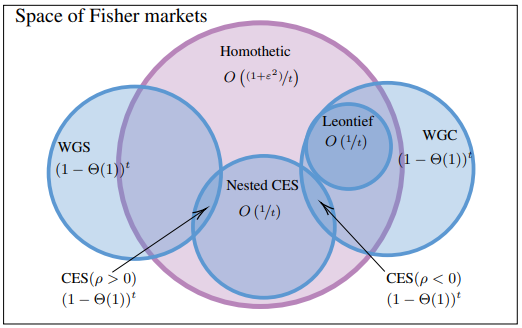}
}
% \hspace{-4em}
\caption{
    % \scriptsize 
    The convergence rates of \emph{t\^atonnement} for different Fisher markets. We color previous contributions in blue, and our contribution in red, i.e., we study homothetic Fisher markets where $\elastic$ is the maximum absolute value of the price elasticity of Hicksian demand across all buyers. We note that the convergence rate for WGS markets does not apply to markets where the price elasticity of Marshallian demand is unbounded, e.g., linear Fisher markets; likewise, the convergence rate for nested CES  Fisher markets does not apply to linear or Leontief Fisher markets.}
    \label{fig:utility-functions}
\end{subfigure}
\quad \quad
\begin{subfigure}{0.45\textwidth}
 \resizebox{1.1\columnwidth}{!}
{
 \hspace{-2em}
 \includegraphics[]{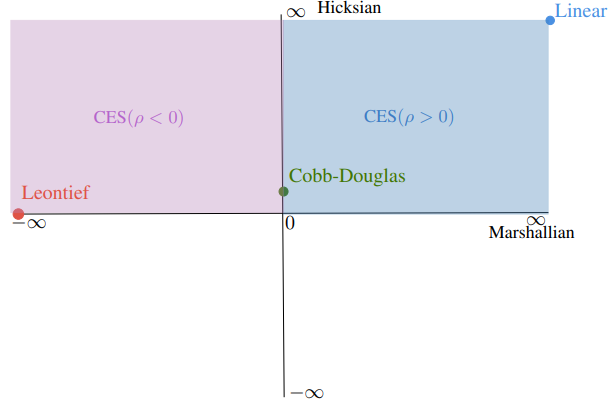}
} 
\caption{Cross-price elasticity taxonomy of well-known homogeneous utility functions. There are no previously studied utility functions in the space of utility functions with negative Hicksian cross-price elasticity. Future work could investigate this space and prove faster convergence rates than those provided in this paper. We note that our convergence result covers the entire spectrum of this taxonomy (excluding limits of the $y$-axis).}\label{fig:taxonomy}
\end{subfigure}
\caption{A summary of known results in Fisher markets.}
\end{figure}
% \paragraph{Our Approach}

% \deni{Modify this and add something on how new convex program's dual allows you to obtain a progress bound but EG dual does not.}
%\sdeni{}{This means assuming access to \mydef{Marshallian demand oracle}, i.e., a function that takes as input prices and a budget and returns a utility-maximizing consumption for a buyer that satisfies their budget constraint.} \amy{conclusion?}

%A corollary of our result is that \emph{t\^atonnement\/} converges in \mydef{extended CES} markets \deni{Fix this to be instead about generalized CES preferences.},%
% \footnote{Extended CES utilities are an extension of CES utilities to non-concave domains, i.e., $\rho$ is not restricted to be strictly less than 1. See \Cref{sec:prelim} for the precise definition.}
%when the Hicksian demand is unique for all prices in the unit simplex.

\subsection{Related Work}

Following \citeauthor{codenotti2005market}'s [\citeyear{codenotti2005market}] initial analysis of \emph{t\^atonnement\/} in competitive economies that satisfy WGS, \citet{garg2004auction} introduced an auction algorithm that also converges in polynomial time for linear exchange economies.
More recently, \citet{bei2015tatonnement} established faster convergence bounds for \emph{t\^atonnement\/} in WGS exchange economies.

Another line of work considers price-adjustment processes in variants of Fisher markets.
\citet{cole2008fast} analyzed \emph{t\^atonnement\/} in a real-world-like model satisfying WGS called the ongoing market model.
In this model, \emph{t\^atonnement\/} once-again converges in polynomial-time \cite{cole2008fast, cole2010discrete}, and it has the advantage that it can be seen as an abstraction for market processes.
% \amy{is the reader supposed to know what in-market processes are?}\deni{it's not a really well-defined term, it just means that it is an abstraction of behavior that might occur in real world markets, such as companies storing under sold goods in storage facilities.}\amy{so how about we cross out ``in''? b/c i could imagine what market processes are. but in-market processes sounds like something technical.}
\citeauthor{cole2008fast}'s results were later extended by \citet{cheung2012tatonnement} to ongoing markets with \mydef{weak gross complements}, i.e., the excess demand of any commodity weakly increases if the price of any other commodity weakly decreases, fixing all other prices, and ongoing markets with a mix of WGC and WGS commodities.
The ongoing market model these two papers study contains as a special case the Fisher market; however \citet{cole2008fast} assume bounded own-price elasticity of Marshallian demand, and bounded income elasticity of Marshallian demand, while \citet{cheung2012tatonnement} assume, in addition to \citeauthor{cole2008fast}'s assumptions, bounded adversarial market elasticity, which can be seen as a variant of bounded cross-price elasticity of Marshallian demand, from below.
With these assumptions, these results cover Fisher markets with a small range of the well-known CES utilities, including CES Fisher markets with $\rho \in [0, 1)$ and WGC Fisher markets with $\rho \in (- 1, 0]$.%
\footnote{We refer the reader to \Cref{sec:prelim} for a definition of CES utilities in terms of the substitution parameter $\rho$.}

\citet{fisher-tatonnement} built on this work by establishing the convergence of \emph{t\^atonnement\/} in polynomial time in nested CES Fisher markets, excluding the limiting cases of linear and Leontief markets, but nonetheless extending polynomial-time convergence guarantees for \emph{t\^atonnement\/} to Leontief Fisher markets as well.
More recently, \citet{cheung2018amortized} showed that \citeauthor{fisher-tatonnement}'s [\citeyear{fisher-tatonnement}] result extends to an asynchronous version of \emph{t\^atonnement}, in which good prices are updated during different time periods. 
In a similar vein, \citet{cheung2019tracing} analyzed \emph{t\^atonnement\/} in online Fisher markets, determining that \emph{t\^atonnement\/} tracks competitive equilibrium prices closely provided the market changes slowly.

Another price-adjustment process that has been shown to converge to market equilibria in Fisher markets is \mydef{proportional response dynamics}, first introduced by \citet{first-prop-response} for linear utilities; then expanded upon and shown to converge by \cite{proportional-response} for all CES utilities; and very recently shown to converge in Arrow-Debreu exchange economies with linear and CES ($\rho \in [0,1)$) utilities by \citeauthor{branzei2021proportional}. 
The study of the proportional response process was proven fundamental when \citeauthor{fisher-tatonnement} noticed its relationship to gradient descent.
This discovery opened up a new realm of possibilities in analyzing the convergence of market equilibrium processes.
For example, it allowed \citet{cheung2018dynamics} to generalize the convergence results of proportional response dynamics to Fisher markets for buyers with mixed CES utilities.
This same idea was applied by \citet{fisher-tatonnement} to prove the convergence of \emph{t\^atonnement\/} in Leontief Fisher markets, using the equivalence between mirror descent \cite{boyd2004convex}
on the dual of the Eisenberg-Gale program 
%\sdeni{}{(without explicitly constructing the dual)} %\deni{Are we sure we need a caveat here?} \amy{we are taking the position that no one knew the dual before this paper (although we are not making a big deal about it), so how did they prove this equivalence?} 
and \emph{t\^atonnement}, first observed by \citet{devanur2008market}.
%\amy{didn't one of the reviews say that Devanur was not the first to observe this?}\deni{No, they said that Devanur was the first to observe this.}
% \amy{comment from reviewer: Page 2: Devanur et al. [31] discovered ..." -> It seems this connection was known earlier, e.g., see Eisenberg (1961).}
More recently, \citet{gao2020first} developed 
%first-order 
methods to solve the Eisenberg-Gale convex program in the case of linear, quasi-linear, and Leontief Fisher markets.

An alternative to the (global) competitive economy model, in which an agent's trading partners are unconstrained, is the \citet{kakade2004graphical} model of a graphical economies.
This model features local markets, in which each agent can set its own prices for purchase only by neighboring agents, and likewise can purchase only from neighboring agents. 
Auction-like price-adjustment processes have been shown to converge in variants of this model assuming WGS \cite{andrade2021graphical}.

\section{Preliminaries}
\label{sec:prelim}
\paragraph{Notation.}
We use caligraphic uppercase letters to denote sets (e.g., $\calX$); bold lowercase letters to denote vectors (e.g., $\price, \bm \pi$);
bold uppercase letters to denote matrices 
{(e.g., $\allocation$) and}
lowercase letters to denote scalar quantities (e.g., $x, \delta$).
We denote the $i$th row vector of a matrix (e.g., $\allocation$) by the corresponding bold lowercase letter with subscript $i$ (e.g., $\allocation[\buyer])$. 
Similarly, we denote the $j$th entry of a vector (e.g., $\price$ or $\allocation[\buyer]$) by the corresponding Roman lowercase letter with subscript $j$ (e.g., $\price[\good]$ or $\allocation[\buyer][\good]$).
%
% We denote functions by a letter determined by the value of the function, e.g., $f$ if the mapping is scalar-valued, $\f$ if the mapping is vector-valued, and $\calF$ if the mapping is set-valued. 
If a correspondence $\calA: \calX \rightrightarrows \calY$ is singleton valued for some $\x \in \calX$, for notational convenience, we treat it as an element of $\calX$, i.e., $\calA(\x) \in \calX$, rather than a subset of it, i.e., $\calA(\x) \subset \calX$.
We denote the set of numbers $\left\{1, \hdots, n\right\}$ by $[n]$, the set of natural numbers by $\N$, the set of real numbers by $\R$, the set of non-negative real numbers by $\R_+$ and the set of strictly positive real numbers by $\R_{++}$.  
% \samy{We denote the natural logarithm by $\log$.}{}\amy{this really is unnecessary! :)}
% \sdeni{We denote by $\project[X]$ the Euclidean projection operator onto the set $X \subset \R^{n}$: i.e., $\project[X](\x) = \argmin_{\z \in X} \left\| \x - \z \right\|_2$.}{} 
We let $\simplex[n] = \{\x \in \R_+^n \mid \sum_{i = 1}^n x_i = 1\}$. We denote the interior of any set $\calA$ by $\mathrm{int}(A)$. We define $\ball[\varepsilon][\x] = \{ \z \in \calZ \mid ||\z - \x || \leq \varepsilon \}$ to be the closed $\varepsilon$-ball centered at $\x$, where $\calZ$ and $\|\cdot \|$ will be clear from context.
% 
% \deni{Might be able to remove this.}
% We also define some set operations. 
% Unless otherwise stated, the sum of a scalar by a set and of two sets is defined as the Minkowski sum, e.g., $c + A = \{c + a \mid a \in A\}$ and $A+B = \{a + b \mid a \in A, b \in B\}$, and the product of a scalar by a set and two sets is defined as the Minkowski product, e.g., $cA = \{ca \mid a \in A \}$ and $AB = \{ab \mid a \in A, b \in B\}$.
%\enspace .
%\end{align}
%
%\noindent
We denote the partial derivative of a function $f: X \to \R$ for $X \subset \R^\numbuyers$ w.r.t. $x_i$ at a point $\x = \y$ by $\subdiff[x_i] f(\y)$.
We define the gradient $\grad[\x]: C^1(\calX) \to C^0(\calX) $ as the operator which takes as input a functional $f: \calX \to \R$, and outputs a vector-valued function consisting of the partial derivatives of $f$ w.r.t. $\x$. Finally, by notational overload, we define the subdifferential $\subgrad$ of a function $f$ at a point $\bm{a} \in U$ by $\subdiff[\bm{x}] f(\bm{a}) = \{\subgrad \mid f(\bm{x}) \geq f(\bm{a}) + \subgrad^T (\bm{x} - \bm{a}) \}$.
% we refer to price vector $\price[][t$ by $\price[][t]$, and denote $\price(t+1)- \price(t)$ by $\pricediff$. 

% \amy{notation probably wants to go in the notation section!}\deni{I would not disagree but at that point we have not talked about tatonnement. Maybe we can move under a heading to the notation section putting a forward reference?}
% \amy{i have seen (recently, maybe Costis' paper), this sequence notation defined generically. not for $q$ and $h$, specifically, but for any sequence of, say, $x$'s. so you just have to define it once, regardless of whether you instantiate it with $q$ and $h$ and whatever else specific to t\^atonnement later.}

\subsection{Mathematical Preliminaries}
% \deni{REVIEW AND UPDATE THIS.}
% \sdeni{}{We say that a vector $\subgrad \in \mathbb{R}^{n}$ is a \mydef{subgradient} of a continuous function $f: U \to \mathbb{R}$ at $\bm{a} \in U$ if for all $\bm{x} \in U$,
%
%\begin{align}
% $f(\bm{x}) \geq f(\bm{a}) + \subgrad^T (\bm{x} - \bm{a})$.
% }
% 
% 
% 
% 
% Recall that a set $U$ is open if for all $x \in U$ there exists an $\epsilon > 0$ such that the open ball $B_{\epsilon}(x)$ centered at $x$ with radius $\epsilon$ is a subset of $U$, i.e., $B_{\epsilon}(x) \subset U$.
% A function $f: \R^m \to \R$ is said to be \mydef{continuous} if $f^{-1}(U)$ is open for every $U \subset \R$.
% A function $f: \R^m \to \R$ is said to be concave if $\forall \lambda \in (0, 1), \x, \y \in \R^m, f(\lambda \x + (1-\lambda)\y) \geq \lambda f(\x) +(1-\lambda)f(\y)$ and \mydef{strictly concave} if strict inequality holds.
A function $f: \R^m \to \R$ is said to be \mydef{homogeneous of degree $k \in \N_+$} if $\forall \allocation[ ] \in \R^m, \lambda > 0, f(\lambda \allocation[ ]) = \lambda^k f(\allocation[ ])$.
Unless otherwise indicated, without loss of generality, a homogeneous function is assumed to be homogeneous of degree 1.
Fix any norm $\| \cdot \|$. Given $\calA \subset \R^d$, the function $\obj: \calA \to \R$ is said to be $\lipschitz[\obj]$-\mydef{Lipschitz-continuous} iff $\forall \x_1, \x_2 \in \calX, \left\| \obj(\x_1) - \obj(\x_2) \right\| \leq \lipschitz[\obj] \left\| \x_1 - \x_2 \right\|$.
If the gradient of $\obj$ is $\lipschitz[\grad \obj]$-Lipschitz-continuous, we then refer to $\obj$ as $\lipschitz[\grad \obj]$-\mydef{Lipschitz-smooth}.%
\subsection{Mirror Descent}

Consider the optimization problem $\min_{\bm{x} \in V} f(\bm{x})\label{optimization-problem}$, where $f: \R^n \to \R$ is a differentiable convex function 
% that is not necessarily differentiable 
and $V$ is the feasible set of solutions.
% the \mydef{projected subgradient method} \cite{bertsekas1997nonlinear}:
% \begin{align}
%     \bm{x}(t+1) &= \project[V](\bm{x}(t) - \gamma_t   \subgrad(t)) \label{subgradient-descent} && \text{for }t = 0, 1, 2, \ldots \\
%     \subgrad(t) &\in \subdiff[\bm{x}] f(\bm{x}(t)) \\
%     \bm{x}(0) &\in V\label{subgradient-descent-init} 
% \end{align}
% \noindent
% Here
% % $\project[V]$ is the projection operator onto the set $V$, and 
% $\gamma_t > 0$ is the step size at time $t$. 
%\enspace ,
% %\end{align}
% where $\subgrad \in \subdiff[\bm{x}]f(\bm{y})$.
% The \mydef{mirror descent} update rule is as follows:
%
A standard method for solving this problem is the \mydef{mirror descent algorithm} \cite{boyd2004convex}:
\begin{align}
    \bm{x}(t+1) &= \argmin_{\bm{x} \in V} \left\{ \lapprox[f][\x][\bm{x}(t)] + \gamma_t \divergence[h][\bm{x}][\bm{x}(t)] \right\} && \text{for }t = 0, 1, 2, \ldots \label{generalized-descent} \\
    \bm{x}(0) &\in \mathbb{R}^{n}\label{generalized-descent2}
\end{align}

\noindent
Here, $\gamma_t > 0$ is the step size at time $t$, $\lapprox[f][\bm{x}][\bm{y}]$ is the \mydef{linear approximation} of $f$ at $\bm{y}$, that is $\lapprox[f][\bm{x}][\y] = f(\y) + \grad f(\y)^T (\x - \y)$, and $\divergence[h][\bm{x}][\bm{x}(t)]$ is the \mydef{Bregman divergence} of a convex differentiable \mydef{kernel} function $h(\bm{x})$ defined as $\divergence[h][\bm{x}][\bm{y}] = h(\bm{x}) - \lapprox[h][\bm{x}][\bm{y}]$ \cite{bregman1967relaxation}. 
In particular, when $h(\bm{x})= \frac{1}{2}||\bm{x}||_2^2$, $\divergence[h][\bm{x}][\bm{y}] = \frac{1}{2} ||\bm{x} - \bm{y}||_2^2$.
In this case, mirror descent reduces to projected gradient descent \cite{boyd2004convex}. 
% (\Crefrange{subgradient-descent}{subgradient-descent-init}).
% \begin{align}
%     \bm{x}(t+1) &= \argmin_{\bm{x} \in V} \left\{\lapprox[f][\bm{x}][\x(t)] + \frac{1}{2\mu_t} ||\bm{x} - \bm{x}(t)||_2^2 \right\} && \text{for }t = 0, 1, 2, \ldots  \\
%     \bm{x}(0) &\in \mathbb{R}^{n}
% \end{align}
If instead the kernel is the 
% scaled 
weighted entropy $h(\x) =  \sum_{i \in [n]} \left(x_{i} \log(x_{i}) - x_i \right)$, 
% \samy{given $c > 0$,}{} \amy{not seeing $c$ used anywhere? did i miss it?}
%then 
the Bregman divergence reduces to the \mydef{generalized Kullback-Leibler (KL) divergence} \cite{joyce2011kullback}:
\begin{align}
    \divergence[\mathrm{KL}][\x][\y] = 
    % c 
    \sum_{i \in [n]} \left[ x_i \log\left( \frac{x_i}{y_i}\right) - x_i +  y_i \right]
    \enspace ,
\end{align}

\noindent
which, when $V = \R^m_{++}$, yields the following simplified
%\amy{why simplified?}\deni{because the constraint space is the unit simplex the denominator is equal to 1.} 
\mydef{entropic descent} update rule:
% , where as usual $\subgrad(t) \in \subdiff[\x]f(\x(t))$
%\amy{i think you need a projection step here as well, in the form of a normalizing constant (i think!).}\deni{I think I know what you are talking about Amy but that specific update rule is when $V = \{\x \mid \sum_{i \in [n]} x_i = 1, x_i \geq 0\}$, i.e., when the set of feasible solutions is the probability simplex. That also makes sense because when the probability simplex is the feasible space, normalizing the prices gets us back to the probability simplex. In this particular case since we do not restrict ourselves to the probability simplex, we do not need the normalization.I am happy to elaborate more on the intuition behind the update step below if you would like to but the main point is that the below update step has no projection step because it never leaves the feasible space of solutions! Additionally, you can find the below update rule used in \href{https://www.sciencedirect.com/science/article/pii/S0899825619300491}{\color{blue}{definition 2.9 in the t\^atonnment beyond gross substitutes paper!}}.}
\begin{align}
    & \forall \good \in \goods & x_{\good}^{(t+1)} &= x_{\good}^{(t)} \exp \left\{\frac{-\subdiff[x_\good]f(\x^{(t)})}{\gamma_t} \right\} && \text{for }t = 0, 1, 2, \ldots \label{exp:subgradient-descent} \\
    & & x_{\good}^{(0)} &\in \R_{++}
    \label{exp:subgradient-descent-init}
\end{align}
% \samy{\Crefrange{exp:subgradient-descent}{exp:subgradient-descent-init} do not include a projection step, because when the initial iterate is within $\mathrm{int}(\simplex[\numgoods])$, the update rule guarantees that subsequent iterates remain within this set.}{} \amy{we are no longer comparing to another method that includes a projection step, i don't think.}

A function $f$ is said to be \mydef{$\gamma$-Bregman-smooth} \cite{cheung2018dynamics} w.r.t.\ a Bregman divergence with kernel function $h$ if $f(\bm{x}) \leq \lapprox[f][\bm{x}][\bm{y}] + \gamma \divergence[h][\bm{x}][\bm{y}]$.
% \amy{maybe give this condition a name as well: $f(\bm{x}) \leq \lapprox[f][\bm{x}][\bm{y}] + \gamma \divergence[h][\bm{x}][\bm{y}]$.}
\citet{grad-prop-response} showed that if the objective function $f(\bm{x})$ of a convex optimization problem is \mydef{$\gamma$-Bregman} w.r.t.\@ to some Bregman divergence $\divergence[h]$, then mirror descent with Bregman divergence $\divergence[h]$ converges to an optimal solution $f(\bm{x}^*)$ at
%a sublinear rate---specifically, at 
a rate of $O \mleft( \nicefrac{1}{t} \mright)$.
% \begin{theorem}{\cite{grad-prop-response}}
% Suppose $f$ and $h$ are convex, and for some $\gamma > 0$, we have that $f$ is $\gamma$-Bregman with respect to some Bregman divergence with kernel function $h$.
% %\enspace .
% %\end{align}
% %
% If $\bm{x}^*$ is a minimizer of $f$, then for all $t \in \N$, the following holds for mirror descent
% %the process described in (\Crefrange{subgradient-descent}{subgradient-descent-init})
% with fixed step size:
% %\begin{align}
% $f(\bm{x}(t)) - f(\bm{x}^*) \leq \nicefrac{\gamma}{t} \left( \divergence[h][\bm{x}^*][\bm{x}(0)] \right)$.
% % \amy{check formatting. wanted something inline.}
% %\enspace .
% %\end{align}
% \end{theorem}
We require a slightly modified version of this theorem, introduced by \citet{fisher-tatonnement}, where it suffices for the $\gamma$-Bregman-smoothness property to hold only for consecutive pairs of iterates.

\begin{theorem}[\citet{grad-prop-response},\citet{fisher-tatonnement}]
\label{theorem-devanur}
Let $\{\x^{t}\}_{t}$ be the iterates generated by mirror descent with Bregman divergence $\divergence[h]$. Suppose $f$ and $h$ are convex, and for all $t \in \N$ and for some $\gamma > 0$, it holds that
%\begin{align}
$f(\bm{x}^{(t+1)}) \leq \lapprox[f][\bm{x}^{(t+1)}][\bm{x}^{(t)}] + \gamma \divergence[h][\bm{x}^{(t+1)}][\bm{x}^{(t)}]$. 
% \amy{just name the condition, since by now, it should have been defined previously.}
%\label{cond1-devanur}$
%\enspace .
%\end{align}
If $\bm{x}^*$ is a minimizer of $f$, then the following holds for mirror descent
%the process described in
%(\Crefrange{subgradient-descent}{subgradient-descent-init})
with fixed step size $\gamma$: for all $t \in \N$,
%\begin{align}
$f(\bm{x}^{(t)}) - f(\bm{x}^*) \leq \nicefrac{\gamma}{t} \, \divergence[h][\bm{x}^*][\bm{x}^{(0)}]$.
%\enspace .
%\label{devanur-bound}
%\end{align}
\end{theorem}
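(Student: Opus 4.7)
The plan is to run a fairly standard one-step descent analysis for mirror descent, but driven by the given per-step Bregman-smoothness hypothesis rather than a global smoothness assumption, and then to telescope. The crucial ingredients will be three: (i) the hypothesized one-step upper bound $f(\bm{x}^{(t+1)}) \leq \lapprox[f][\bm{x}^{(t+1)}][\bm{x}^{(t)}] + \gamma \divergence[h][\bm{x}^{(t+1)}][\bm{x}^{(t)}]$, (ii) the first-order optimality condition of the mirror descent subproblem, which (by step size $\gamma$) reads $\langle \nabla f(\bm{x}^{(t)}) + \gamma(\nabla h(\bm{x}^{(t+1)}) - \nabla h(\bm{x}^{(t)})), \bm{x} - \bm{x}^{(t+1)} \rangle \geq 0$ for all $\bm{x} \in V$, and (iii) the three-point identity for Bregman divergences $\divergence[h][\bm{x}][\bm{y}] - \divergence[h][\bm{x}][\bm{z}] - \divergence[h][\bm{z}][\bm{y}] = \langle \nabla h(\bm{y}) - \nabla h(\bm{z}), \bm{z} - \bm{x} \rangle$.

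First I would fix a $t$ and bound $f(\bm{x}^{(t+1)}) - f(\bm{x}^*)$ for the minimizer $\bm{x}^*$. I split this as $[f(\bm{x}^{(t+1)}) - f(\bm{x}^{(t)})] + [f(\bm{x}^{(t)}) - f(\bm{x}^*)]$, upper bound the first bracket using hypothesis (i), and upper bound the second bracket using convexity of $f$ as $\langle \nabla f(\bm{x}^{(t)}), \bm{x}^{(t)} - \bm{x}^* \rangle$. Adding these gives $f(\bm{x}^{(t+1)}) - f(\bm{x}^*) \leq \langle \nabla f(\bm{x}^{(t)}), \bm{x}^{(t+1)} - \bm{x}^* \rangle + \gamma \divergence[h][\bm{x}^{(t+1)}][\bm{x}^{(t)}]$. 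I then invoke (ii) with $\bm{x} = \bm{x}^*$ to rewrite the inner product in terms of the kernel gradients, and (iii) with $(\bm{x}, \bm{y}, \bm{z}) = (\bm{x}^*, \bm{x}^{(t)}, \bm{x}^{(t+1)})$ to convert those kernel-gradient differences into the telescoping combination $\gamma[\divergence[h][\bm{x}^*][\bm{x}^{(t)}] - \divergence[h][\bm{x}^*][\bm{x}^{(t+1)}] - \divergence[h][\bm{x}^{(t+1)}][\bm{x}^{(t)}]]$. The $\divergence[h][\bm{x}^{(t+1)}][\bm{x}^{(t)}]$ terms cancel exactly, yielding the clean one-step inequality
\[
f(\bm{x}^{(t+1)}) - f(\bm{x}^*) \leq \gamma\bigl(\divergence[h][\bm{x}^*][\bm{x}^{(t)}] - \divergence[h][\bm{x}^*][\bm{x}^{(t+1)}]\bigr).
\]

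Next I would telescope this inequality from $s=0$ to $s=t-1$, obtaining $\sum_{s=1}^{t} [f(\bm{x}^{(s)}) - f(\bm{x}^*)] \leq \gamma\,\divergence[h][\bm{x}^*][\bm{x}^{(0)}]$ after dropping the nonnegative terminal $\gamma\,\divergence[h][\bm{x}^*][\bm{x}^{(t)}]$. To convert the sum bound into a last-iterate bound, I need monotone decrease of the objective along the sequence: $f(\bm{x}^{(s+1)}) \leq f(\bm{x}^{(s)})$. This follows immediately by comparing the optimal value of the mirror descent subproblem to its value at the feasible point $\bm{x} = \bm{x}^{(s)}$ (which equals $f(\bm{x}^{(s)})$ since both the linearization and the Bregman divergence vanish there), then chaining with hypothesis (i). Given monotonicity, $t\,[f(\bm{x}^{(t)}) - f(\bm{x}^*)] \leq \sum_{s=1}^{t}[f(\bm{x}^{(s)}) - f(\bm{x}^*)]$, and dividing through yields the claimed $O(\gamma/t)$ rate.

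I don't expect a serious obstacle: the proof is essentially the classical Bregman-smoothness mirror descent argument of \citet{grad-prop-response}, with the single wrinkle that the smoothness inequality is only assumed between consecutive iterates rather than globally. The only point of care is that the cancellation of the $\gamma\,\divergence[h][\bm{x}^{(t+1)}][\bm{x}^{(t)}]$ term in the three-point identity is what makes this per-step (rather than global) hypothesis suffice, and that the monotone-descent argument likewise uses only the per-step hypothesis, so the weaker assumption flows through the whole argument without modification.
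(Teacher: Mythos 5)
Your proposal is correct, and it is essentially the same argument as the one behind \Cref{theorem-devanur}: the paper does not reprove this result but cites it from \citet{grad-prop-response} and \citet{fisher-tatonnement}, whose proof is exactly this combination of the per-iterate Bregman-smoothness bound, the subproblem's first-order optimality condition with the three-point identity to get the one-step inequality $f(\bm{x}^{(t+1)}) - f(\bm{x}^*) \leq \gamma\left(\divergence[h][\bm{x}^*][\bm{x}^{(t)}] - \divergence[h][\bm{x}^*][\bm{x}^{(t+1)}]\right)$, telescoping, and monotone descent of $f$ along the iterates to pass from the averaged bound to the last iterate. Your observation that the smoothness hypothesis is only needed between consecutive iterates is precisely the refinement of \citet{fisher-tatonnement} that the paper relies on, so no gap remains.
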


\subsection{Consumer Theory}

Let $\choiceset = \R_+^\numgoods$ be a set of possible consumptions over $\numgoods$ goods s.t.\@ for any $\allocation[ ] \in \R^\numgoods_+$ and $\good \in \goods$, $\allocation[ ][\good] \geq 0$ represents the amount of good $\good \in \goods$ consumed by consumer (hereafter, buyer) $\buyer$. 
The preferences of buyer $\buyer$ over different consumptions of goods can be represented by a \mydef{preference relation} $\succeq_\buyer$ over $\choiceset$ such that the buyer (resp.\@ weakly) prefers a choice $\x \in \choiceset$ to another choice $\y \in \choiceset$ iff $\x \succ_\buyer \y$ (resp.\@ $\x \succeq_\buyer \y$). 
A preference relation is said to be \mydef{complete} iff for all $\x, \y \in \choiceset$, either $\x \succeq_\buyer \y$ or $\y \succeq_\buyer \x$, or both.
A preference relation is said to be \mydef{transitive} if, for all $\x, \y, \z \in \choiceset$, $\x \succeq_\buyer \z$ whenever $\x \succeq_\buyer \y$ and $\y \succeq_\buyer \z$.
A preference relation is said to be \mydef{continuous} if for any sequence $\{ \x^{(n)}, \y^{(n)}\}_{n \in \N_+} \subset \choiceset \times \choiceset$ $(\x^{(n)}, \y^{(n)}) \to (\x, \y)$ and $\x^{(n)} \succeq_\buyer \y^{(n)}$ for all $n \in \N_+$, it also holds that $\x \succeq_\buyer \y$. 
A preference relation $\succeq_\buyer$ is said to be \mydef{locally non-satiated} iff for all $\x \in \choiceset$ and $\epsilon > 0$, there exists $\y \in \ball[\epsilon][\x]$ such that $\y \succ_\buyer \x$. 
A utility function $\util[\buyer]: \choiceset \to \R_+$ assigns a positive real value%
\footnote{Without loss of generality, we assume that utility functions are positive real-valued functions, since any real-valued function can be made positive real-valued by passing it through the monotonic transformation $\x \mapsto e^\x$ without affecting the underlying preference relation.}
to elements of $\choiceset$, i.e., to all possible consumptions.
% \amy{do you call this an allocation rather than a consumption b/c it is in $\R^{\numgoods}$ not $\R^{\numgoods}_+$?, or is that a typo?} 
% \sdeni{}{Let $\numgoods \in \N$ be the number of goods, we define utility functions as functionals on $\R^{\numgoods + 1}_+$ such that the $(\numgoods +1)^{th}$ dimension represents money (or the artifical currency). That is, a utility function $\util[\buyer]: \R^{\numgoods}_+ \times \R_+ \to \R$ assigns a real value to elements of $\R^{\numgoods + 1} $, i.e., to every possible allocation of goods and money. Note that this definition is more general than the definition of utility functions often used in the Fisher market literature, and simply implies that buyers can attribute value to money}.\footnote{\sdeni{}{A broader discussion of utility functions that attribute value to money can be found in \cref{sec:money}. Note that this definition is also in line with Fisher markets as Fisher markets are a special case of an Arrow-Debreu market where money is a good in the market which only buyers are endowed with.}}
% (Theorem 1, page 87 \cite{arrow1971general})
Every continuous utility function represents some complete, transitive, and continuous preference relation $\succeq_\buyer$ over goods s.t.\@ if $\util[\buyer](\x) \geq \util[\buyer](\y)$ for two bundles of goods $\x, \y \in \R^\numgoods$, then $\x \succeq_\buyer \y$ \cite{arrow1971general}. 
% Throughout this work, we assume utility functions are continuous.
% \footnote{We assume without loss of generality that utility functions are positive-real-valued, since any real-valued continuous utility function can be made into positive-real-valued utility function via the monotone transformation $x \mapsto e^{x}$ without modifying the underlying preference relation.}

In this paper, we consider the general class of \mydef{homothetic} preferences $\succeq_\buyer$ s.t.\@ for any consumption $\x, \y \in \X$ and $\lambda \in \R_+$, $\x \succeq_\buyer \y$ and $\y \succeq_\buyer \x$ implies $\lambda \x \succeq_\buyer \lambda \y$ and $\lambda \y\succeq_\buyer \lambda \x$, respectively. 
A preference relation $\succeq_\buyer$ is complete, transitive, continuous, and homothetic iff it can be represented via a continuous and homogeneous utility function $\util[\buyer]$ of arbitrary degree \cite{arrow1971general}.%
\footnote{Throughout this work, without loss of generality, we assume that complete, transitive, continuous, and homothetic preference relations are represented via a homogeneous utility function of degree 1, since any homogeneous utility function of degree $k$ can be made homogeneous of degree 1 without affecting the underlying preference relation by passing the utility function through the monotonic transformation $x \mapsto \sqrt[k]{x}$.} 
We note that any homogeneous utility function $\util[\buyer]$ represents locally non-satiated preferences, since for all $\epsilon > 0$ and $\x \in \choiceset$, there exists an allocation $(1+\nicefrac{\varepsilon}{\left\| \x \right\|})\x$ s.t.\@ $\util[\buyer]((1+\nicefrac{\varepsilon}{\left\| \x \right\|})\x) = (1+\nicefrac{\varepsilon}{\left\| \x \right\|})\util[\buyer](\x)> \util[\buyer](\x)$, and $[\x - (1+\nicefrac{\varepsilon}{\left\| \x \right\|})\x] \in \ball[\varepsilon][\x]$. 
% \amy{i think you want to say: $\x - (1+\nicefrac{\varepsilon}{\left\| \x \right\|})\x \in B_\epsilon (\x)$. usage should be consistent with the mention of $B_\epsilon$ in the first paragraph of this section.}

The class of homogeneous utility functions includes the well-known \mydef{constant elasticity of substitution (CES)} utility function family, parameterized by a substitution parameter $-\infty \leq \rho_\buyer \leq 1$, and
given by
%\begin{align}
$\util[\buyer](\allocation[\buyer]) = \sqrt[{\rho_\buyer}]{ \sum_{\good \in \goods} \valuation[\buyer][\good] \allocation[\buyer][\good]^{\rho_\buyer}}$ with each utility function parameterized by the vector of valuations $\valuation[\buyer] \in \mathbb{R}_+^{\numbuyers}$, where each $\valuation[\buyer][\good]$ quantifies the value of good $j$ to buyer $i$. CES utilities are said to be \mydef{gross substitutes} (resp. \mydef{gross complements}) CES if $\rho_\buyer > 0$ ($\rho_\buyer < 0$).
% \footnote{For certain utility functions, such as quasilinear utilities, the vector of valuations captures the monetary value of goods.}.
\mydef{Linear utility} functions are obtained when $\rho$ is $1$ (goods are perfect substitutes), while \mydef{Cobb-Douglas} and \mydef{Leontief utility} functions are obtained when $\rho \to 0$ and $\rho \to -\infty$ (goods are perfect complements), respectively:
\begin{equation*}
\begin{aligned}
    &\text{Linear:}\\ 
    &\util[\buyer](\allocation[\buyer]) = \sum_{\good \in \goods} \valuation[\buyer][\good] \allocation[\buyer][\good]
\end{aligned}
\quad \vline \quad
\begin{aligned}
    &\text{Cobb-Doulas:}\\
    &\util[\buyer](\allocation[\buyer]) = \prod_{\good \in \goods} \allocation[\buyer][\good]^{\valuation[\buyer][\good]}
\end{aligned}
\quad \vline \quad
\begin{aligned}
    &\text{Leontief:}\\
    & \util[\buyer](\allocation[\buyer]) = \min_{\good : \valuation[\buyer][\good] \neq 0} \frac{\allocation[\buyer][\good]}{\valuation[\buyer][\good]}
\end{aligned}
\end{equation*}
%     &\text{Linear:} 
%     &\text{Cobb-Doulas:} &\text{Leontief:}\\ &\util[\buyer](\allocation[\buyer]) = \sum_{\good \in \goods} \valuation[\buyer][\good] \allocation[\buyer][\good]
%     &\util[\buyer](\allocation[\buyer]) = \prod_{\good \in \goods} \allocation[\buyer][\good]^{\valuation[\buyer][\good]}
%     & \util[\buyer](\allocation[\buyer]) = \min_{\good : \valuation[\buyer][\good] \neq 0} \frac{\allocation[\buyer][\good]}{\valuation[\buyer][\good]}
% \end{align}
%\amy{added emphasis!}
%This assumption is important as it guarantees that a utility function cannot become saturated within the interior of a budget set.
% If a buyer's utility function represents locally non-satiated preferences, there always exists a better bundle for that buyer if their budget increases.

Associated with any consumption $\x \in \choiceset$ are \mydef{prices} $\price \in \R^\numgoods_+$ s.t.\ for all goods $\good \in \goods$, $\price[\good] \geq 0$ denotes the price of good $\good$.
A \mydef{demand correspondence} $\calF: \R^\numgoods_+ \to \choiceset$ takes as input prices $\price \in \R^\numgoods_+$ and outputs a set of consumptions $\calF(\price)$.
If $\calF$ is singleton-valued for all $\price \in \R^\numgoods_+$, then it is called a \mydef{demand function}. 
\sdeni{}{Given a demand function 
%$\f: \R^\numgoods_+ \to \choiceset$, 
$\f$, we define the \mydef{elasticity} $\elastic[f_i][x_j]: \R^\numgoods \to \R$ of output $f_i(\x)$ w.r.t.\@ the $j$th input $x_j$ evaluated at $\x = \y$ as $\elastic[f_i][x_j](\y) = \subdiff[x_j]f_i(\y) \frac{y_j}{f_i(\y)}$.}{}

A good $\good \in \goods$ is said to be a \mydef{substitute (resp.\@ complement) w.r.t. a demand function $\f$} for a good $k \in \goods \setminus \{\good\}$ if the demand $f_\good(\price)$ is increasing (resp.\@ decreasing) in $\price[k]$.
If a buyer's demand $f_\good(\price)$ for good $\good$ is instead weakly increasing (resp.\@ decreasing), good $\good$ is said to be a \mydef{weak substitute} (resp.\@ \mydef{weak complement}) for good $k$.

Next, we define the \mydef{consumer functions} \cite{mas-colell, jehle2001advanced}. 
% \amy{please proofread these next four paragraphs carefully! some stuff is in ``wine'' parentheses. do we need this stuff, or should we comment out?}
%
The \mydef{indirect utility function} $\indirectutil[\buyer]: \mathbb{R}^{\numgoods}_+ \times \mathbb{R}_+ \to \mathbb{R}_+$ takes as input prices $\price$ and a budget $\budget[\buyer]$ and outputs the maximum utility the buyer can achieve at that prices within that budget, i.e., $ \indirectutil[\buyer](\price, \budget[\buyer]) = \max_{\allocation[ ] \in \choiceset : \price \cdot \allocation[ ] \leq \budget[\buyer]} \util[\buyer](\allocation[ ])$. 
% \wine{If the utility function is continuous, then the indirect utility function is continuous and homogeneous of degree 0 in $\price$ and $\budget[\buyer]$ jointly, non-increasing in $\price$, strictly increasing in $\budget[\buyer]$, and convex in $\price$ and $\budget[\buyer]$.} \amy{dumb question probably, but is convex in each of $\price$ and $\budget[\buyer]$ different from joint convexity?}{}

The \mydef{Marshallian demand} is a correspondence $\marshallians[\buyer]: \R^{\numgoods}_+ \times \R_+ \rightrightarrows \choiceset$ that takes as input prices $\price$ and a budget $\budget[\buyer]$ and outputs the utility-maximizing allocations of goods at that budget, i.e., $ \marshallians[\buyer](\price, \budget[\buyer]) = \argmax_{\allocation[ ] \in \choiceset : \price \cdot \allocation[ ] \leq \budget[\buyer]} \util[\buyer](\allocation[ ])$.
% \wine{}{$\marshallian[\buyer](\price, \budget[\buyer]) = \argmax_{\allocation[ ] \in \mathbb{R}^\numgoods_+} \max_{\money[ ] \in \R_+: \price \cdot \allocation[ ] + \money[ ] \leq \budget[\buyer]} \util[\buyer](\allocation[ ], \money[ ])$.}
% \wine{The Marshallian demand is convex-valued if the utility function is continuous and concave, and unique if the utility function is continuous and strictly concave.}{}

The \mydef{expenditure function} $\expend[\buyer]: \mathbb{R}^{\numgoods}_+ \times \mathbb{R}_+ \to \mathbb{R}_+$ takes as input prices $\price$ and a utility level $\goalutil[\buyer]$ and outputs the minimum amount the buyer must spend to achieve that utility level at those prices, i.e., $\expend[\buyer](\price, \goalutil[\buyer]) = \min_{\allocation[ ] \in \choiceset: \util[\buyer](\allocation[ ]) \geq \goalutil[\buyer]} \price \cdot \allocation[ ]$. 
% \sdeni{}{$\expend[\buyer](\price, \goalutil[\buyer])  = \min_{\allocation[ ] \in \mathbb{R}^\numgoods_+, \money[ ] \in \R_+: \util[\buyer](\allocation[ ], \money[ ]) \geq \goalutil[\buyer]} \price \cdot \allocation[\buyer] + \money[ ]$}.
If the utility function $\util[\buyer]$ is continuous, then the expenditure function is continuous and homogeneous of degree 1 in $\price$ and $\goalutil[\buyer]$ jointly, non-decreasing in $\price$, strictly increasing in $\goalutil[\buyer]$, and concave in $\price$. 
% \amy{same dumb question as above}\deni{Changed notation, it should be clear now.}

The \mydef{Hicksian demand} is a correspondence $\hicksians[\buyer]: \mathbb{R}^{\numgoods}_+ \times \mathbb{R}_+  \rightrightarrows \mathbb{R}_+$ that takes as input prices $\price$ and a utility level $\goalutil[\buyer]$ and outputs the cost-minimizing allocations of goods at those prices and utility level,  i.e., $\hicksians[\buyer](\price, \goalutil[\buyer]) = \argmin_{\allocation[ ] \in \choiceset: \util[\buyer](\allocation[ ]) \geq \goalutil[\buyer]} \price \cdot \allocation[ ]$.

In this paper, we study buyers whose utilities are s.t.\ Hicksian demand elasticity is well defined, i.e., buyers whose Hicksian demand is singleton-valued, since demands must be unique (and differentiable) for elasticity to be well-defined.%
\footnote{The definition of elasticity can be extended to non-unique and non-differentiable demands with additional care (see, for instance, \citet{cheung2014analyzing}). Nonetheless, we present our analysis in this restricted setting, as it is sufficient to capture all convergent and non-convergent behaviors of \emph{t\^atonnement}, and allows for a simpler analysis.} 
As such, $\marshallians[\buyer](\price, \budget[\buyer]) = \{ \marshallian[\buyer](\price, \budget[\buyer]) \}$ and $\hicksians[\buyer](\price, \budget[\buyer]) = \{ \hicksian[\buyer](\price, \budget[\buyer]) \}$. We note that by \Crefrange{expend-to-budget}{hicksian-marshallian} (\Cref{sec_ap:proofs}), uniqueness of the Hicksian demand implies implies uniqueness of the Marshallian demand and vice-versa.
\amy{if Hicksian demand is singleton-valued, does that imply that Marshallian demand is as well? i ask, b/c the first sentence in this paragraph is about Hicksian demand only, but the second sentence is about both.}\deni{Yes it does.}\amy{do we say this somewhere, or should this be obvious to the reader?} \deni{Added the last sentence to address your comment.}

Fixing a buyer $\buyer \in \buyers$, a good $\good \in \goods$ is said to be a \mydef{gross (resp.\@ net) substitute} for a good $k \in \goods \setminus \{\good\}$ if it is a substitute w.r.t.\ Marshallian (resp.\ Hicksian) demand. 
If $\good$ is instead a weak substitute w.r.t.\ Marshallian (resp.\ Hicksian) demand, then it is called a \mydef{weak gross (resp.\ net) substitute}. Gross and net complements, and their weak counterparts, are defined analogously.

Finally, the \mydef{cross-price elasticity of Marshallian (resp.\@ Hicksian) demand} for good $j$ w.r.t.\@ the price of good $k \neq j$ at price $\price$ and budget $\budget[\buyer]$ (resp.\@ utility level $\goalutil[\buyer]$) is given by $\elastic[{\marshallian[\buyer][\good]}][{\price[k]}](\price, \budget[\buyer])$ (resp.\@ $\elastic[{\hicksian[\buyer][\good]}][{\price[k]}](\price, \goalutil[\buyer])$). 
If $k = j$, then we instead have the \mydef{Hicksian (resp.\ Marshallian) own-price elasticity of demand}. 

\subsection{Fisher Markets}

A \mydef{Fisher market} consists of $\numbuyers$ buyers and $\numgoods$ divisible goods \cite{brainard2000compute}. 
Each buyer $\buyer \in \buyers$ has a budget $\budget[\buyer] \in \mathbb{R}_{+}$ and a utility function $\util[\buyer]: \mathbb{R}_{+}^{\numgoods} \to \mathbb{R}$. 
As is standard in the literature, we assume there is one unit of each good, and one unit of currency available in the market, i.e. $\sum_{\buyer \in \buyers} \budget[\buyer] = 1$  \cite{AGT-book}. 
An instance of a Fisher market is given by a tuple $(\numbuyers, \numgoods, \util, \budget)$, where $\util = (\util[1], \hdots, \util[\numbuyers])$, and $\budget \in \R_{+}^{\numbuyers}$ is the vector of buyer budgets.
We abbreviate as $(\util, \budget)$, when $\numbuyers$ and $\numgoods$ are clear from context.

An \mydef{allocation} $\allocation$ is a map from goods to buyers, represented as a matrix s.t. $\allocation[\buyer][\good] \ge 0$ denotes the amount of good $\good \in \goods$ allocated to buyer $\buyer \in \buyers$. 
Goods are assigned \mydef{prices} $\price \in \mathbb{R}_+^{\numgoods}$. 

\begin{definition}[Competitive Equilibrium]
    A tuple $(\allocation^*, \price^*)$ is said to be a \mydef{competitive (or Walrasian) equilibrium} of a Fisher market $(\util, \budget)$ if 1.~buyers are utility maximizing constrained by their budget, i.e., $\forall \buyer \in \buyers, \allocation[\buyer]^* \in \marshallian[\buyer] (\price^*, \budget[\buyer])$; and 2.~the market clears, i.e., $\forall \good \in \goods,  \price[\good]^* > 0 \Rightarrow \sum_{\buyer \in \buyers} \allocation[\buyer][\good]^* = 1$ and $\price[\good]^* = 0 \Rightarrow\sum_{\buyer \in \buyers} \allocation[\buyer][\good]^* \leq 1$. 
\end{definition}

When the buyers' utility functions in a Fisher market are all of the same type, we qualify the market by the name of the utility function, e.g., a linear Fisher market. A \mydef{mixed CES Fisher market} is a Fisher market which comprises CES buyers with possibly different substitution parameters.
%Further, we refer to markets which comprise buyers with CCH utility functions that represent locally non-satiated preferences as \mydef{homothetic markets}.
\deni{Maybe remove this:} Considering properties of goods, rather than buyers,
a (Fisher) market satisfies \mydef{gross substitutes} (resp. \mydef{gross complements}) if all pairs of goods in the market are gross substitutes (resp.\@ gross complements).
We define net substitute Fisher markets and net complements Fisher markets similarly.
%\sdeni{A Fisher market is \mydef{mixed \sdeni{(resp. gross mixed)}} if all pairs of goods are either \sdeni{}{net}(gross) complements or (gross) substitutes.}{} \deni{We do not use this anywhere and it is confusing.}
We refer the reader to \Cref{fig:utility-functions} for a summary of the relationships among various Fisher markets.
A Fisher market is called \mydef{homothetic} if the buyers' utility functions are continuous and homogeneous.

%\subsection{Computation of Fisher Market Equilibria}

% If $(\util, \budget)$ is a quasilinear Fisher market, then the optimal solutions, $\price^*$ and $\allocation^*$, to the primal and dual of \mydef{Devanur's program} \cite{devanur2009fisher}, respectively, constitute an equilibrium of the Fisher market:
% %
% \begin{equation*}
% \begin{aligned}[c]
%     &\text{\textbf{Primal}} \\ &\min_{\price \in \R^\numgoods_+, \bm{\beta} \in \R_+^\numbuyers } & \sum_{\good \in \goods} \price[\good] - \sum_{\buyer \in \buyers} \budget[\buyer] \log{\beta_{\buyer}} \\
%     &\forall \buyer \in \buyers, \good \in \goods & \price[\good] \geq \valuation[\buyer][\good] \beta_{\buyer} \\
%     &\forall \buyer \in \buyers & \beta_{\buyer} \leq 1 \\
% \end{aligned}
% \quad \vline \quad
% \begin{aligned}[c]
%     &\text{\textbf{Dual}} \\ &\max_{\allocation \in \R^{\numbuyers \times \numgoods}_+, \bm{y} \in \R_+^\numbuyers} & \sum_{\buyer \in \buyers} \budget[\buyer] \log{\util[\buyer]} - y_{\buyer} \\
%     & \forall \buyer \in \buyers  & \util[\buyer] \leq \sum_{\good \in \goods} \valuation[\buyer][\good] \allocation[\buyer][\good] + y_{\buyer}  \\
%     & \forall \good \in \goods & \sum_{\buyer \in \buyers} \allocation[\buyer][\good] \leq 1 \\
% \end{aligned}
% \end{equation*}

Given a Fisher market $(\util, \budget)$, we define the \mydef{aggregate demand} correspondence $\demand: \R^\numgoods_+ \rightrightarrows \R^\numgoods_+$ at prices $\price$ as the sum of the Marshallian demand at $\price$, given budgets $\budget$, i.e., $\demand(\price) = \sum_{\buyer \in \buyers} \marshallian[\buyer](\price, \budget[\buyer])$.
The \mydef{excess demand} correspondence  $\excess: \mathbb{R}^{\numgoods} \rightrightarrows \mathbb{R}^{\numgoods}$ of a Fisher market $(\util, \budget)$, which takes as input prices and outputs a set of excess demands at those prices, is defined as the difference between the aggregate demand for and the supply of each good: i.e.,
%\begin{align*}
    $\excess(\price) = \demand(\price) - \ones[\numgoods]$ 
%    \enspace ,
%\end{align*}
where $\ones[\numgoods]$ is the vector of ones of size $\numgoods$, and $\demand(\price) - \bm{1}_{\numgoods} = \{\x - \ones[\numgoods] \mid \forall \x \in \demand(\price)\}$.
% = \left\{\x - \bm{1}_{\numgoods} \mid \x \in \sum_{\buyer \in \buyers}\marshallian[\buyer](\price, \budget[\buyer]) \right\}

The \mydef{discrete \emph{t\^atonnement\/} process} for Fisher markets is a decentralized, natural price adjustment, defined as:
\begin{align*}
    \price[][t+1] = \price[][t] + G(\subgrad^{(t)}) && \text{for } t = 0, 1, 2, \hdots 
    % \label{tatonnement}
    \\
    \subgrad^{(t)} \in \excess(\price[][t])\\
    \price[][0] \in \mathbb{R}^{\numgoods}_+
    % \label{tatonnement2}
    \enspace ,
\end{align*}

\noindent
where $G: \mathbb{R}^{\numgoods} \to \mathbb{R}^{\numgoods}$ is a coordinate-wise monotonic function such that, 
% \amy{need to add for all $\bm{x} \in \mathbb{R}^m$} 
for all $\good \in \goods, \x, \y \in \R^m$, if $x_{\good} \geq  y_{\good}$, then $G_{\good}(\bm{x}) \geq G_{\good}(\bm{y})$. 
Intuitively, \emph{t\^atonnement\/} is an auction-like process in which the sellers increase (resp.\@ decrease) the prices of goods whenever the demand (resp.\@ supply) is greater than the supply (resp.\@ demand).

Given a sequence of prices $\{\price^{(\iter)}\}_\iter$, for any $\iter \in \N_+$, we denote buyer $\buyer$'s Marshallian demand for good $\good$, i.e., $\marshallian[\buyer][\good](
\price[][\iter], \budget[\buyer])$, by $\marshallian[\buyer][\good]^{(\iter)}$;
the aggregate demand for good $\good$, i.e., $ \demand[\good](\price[][\iter])$, by $\demand[\good]^{(\iter)}$; and buyer $\buyer$'s Hicksian demand for good $\good$ at utility level 1, i.e.,  $\hicksian[\buyer][\good](\price[][\iter], 1)$, by $\hicksian[\buyer][\good]^{(\iter)}$. 
Additionally, for any fixed $\iter \in \N_+$, which will be clear from context, we define $\pricediff \doteq \price^{(\iter+1)} - \price^{(\iter)}$.

\subsection{Homothetic Fisher Markets}

Suppose that $(\util, \budget)$ is a continuous, concave, and homogeneous Fisher market.
The optimal solutions ($\allocation^*, \price^*)$ to the primal and dual of \mydef{Eisenberg-Gale program}
constitute a competitive equilibrium of $(\util, \budget)$ \cite{devanur2002market,eisenberg1959consensus,jain2005market}:%
\footnote{The dual as presented here was formulated by \citet{goktas2022consumer}.}
\begin{equation*}
\begin{aligned}[c]
    &\text{\textbf{Primal}} \\  
    & \max_{\allocation \in \R_+^{\numbuyers \times \numgoods}} & \sum_{\buyer \in \buyers} \budget[\buyer] \log{\left(\util[\buyer](\allocation[\buyer])\right)} \\
    & \text{subject to} & \sum_{\buyer \in \buyers} \allocation[\buyer][\good]  \leq 1 \quad \quad \forall \good \in \goods\label{feasibility-constraint-EG}
\end{aligned}
\quad \vline \quad
\begin{aligned}[c]
    &\text{\textbf{Dual}} \\  &\min_{\price \in \simplex[\numgoods]} &\sum_{\good \in \goods} \price[\good] + \sum_{\buyer \in \buyers} \left[ \budget[\buyer] \log{ \left( \indirectutil[\buyer](\price, \budget[\buyer]) \right)} - \budget[\buyer] \right]
\end{aligned}
\end{equation*}

% \begin{align}
% \text{\textbf{Primal}} \notag \\
%     & \max_{\allocation \in \R_+^{\numbuyers \times \numgoods}} & \sum_{\buyer \in \buyers} \budget[\buyer] \log{\left(\util[\buyer](\allocation[\buyer])\right)} \\
%     & \text{subject to} & \sum_{\buyer \in \buyers} \allocation[\buyer][\good] & \leq 1 &&\forall \good \in \goods\label{feasibility-constraint-EG}\\
% \text{\textbf{Dual}} \notag \\
%     &\min_{\price \in \R^\numgoods_+} &\sum_{\good \in \goods} \price[\good] + \sum_{\buyer \in \buyers} \left[ \budget[\buyer] \log{ \left( \indirectutil[\buyer](\price, \budget[\buyer]) \right)} - \budget[\buyer] \right]
% \end{align}

Recently, \citet{goktas2022consumer} proposed a convex program that is equivalent to the Eisenberg-Gale convex program, but whose optimal value differs from that of the Eisenberg-Gale convex program by an additive constant.
The authors state their results only for continuous, concave, and homogeneous Fisher markets; however, their proof is valid for \emph{all\/} homothetic Fisher markets, 
% even homogeneous quasilinear utilities%
% \footnote{This claim follows from the construction we provide in \Cref{sec:prelim}.} 
including those in which the buyers' utility functions are non-concave:

\begin{theorem}[\citet{goktas2022consumer}]
\label{new-convex}
The optimal solutions $(\allocation^*, \price^*)$ to the following primal and dual convex programs correspond to competitive equilibrium allocations and prices, respectively, of the 
% \samy{continuous, concave, and homogeneous}{homothetic} \amy{isn't the whole point that the theorem holds more generally?}
homothetic Fisher market $(\util, \budget)$:
\begin{equation*}
\begin{aligned}[c]
    &\text{\textbf{Primal}} \\  
    & \max_{\allocation \in \R_+^{\numbuyers \times \numgoods}} & \sum_{\buyer \in \buyers} \left[ \budget[\buyer] \log{\util[\buyer]\left(\frac{\allocation[\buyer]}{\budget[\buyer]}\right)} + \budget[\buyer] \right] \\
    & \text{subject to} & \sum_{\buyer \in \buyers} \allocation[\buyer][\good]  \leq 1 \quad \quad \forall \good \in \goods\\
    % && \allocation[\buyer][\good] &\geq 0 &&\forall \buyer \in \buyers, \good \in \goods 
\end{aligned}
\quad \vline \quad
\begin{aligned}[c]
    &\text{\textbf{Dual}} \\  &\min_{\price \in \simplex[\numgoods]} \potential(\price) \doteq \sum_{\good \in \goods} \price[\good] - \sum_{\buyer \in \buyers} \budget[\buyer] \log{ \left( \expend[\buyer] (\price, 1) \right)}
\end{aligned}
\end{equation*}
\end{theorem}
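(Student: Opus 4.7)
The plan is to split the argument into two parts: a reduction that handles concave homogeneous utilities by appealing to the classical Eisenberg-Gale theorem, and a direct weak/strong duality argument that extends the claim to arbitrary continuous homogeneous utilities. For the reduction, homogeneity of degree 1 of $\util[\buyer]$ gives $\util[\buyer](\allocation[\buyer]/\budget[\buyer]) = \util[\buyer](\allocation[\buyer])/\budget[\buyer]$, so the proposed primal objective equals the classical Eisenberg-Gale primal objective plus the additive constant $\sum_\buyer (\budget[\buyer] - \budget[\buyer] \log \budget[\buyer])$. On the dual side, homogeneity combined with the Marshallian-Hicksian duality identity yields $\indirectutil[\buyer](\price, \budget[\buyer]) = \budget[\buyer]/\expend[\buyer](\price, 1)$; substituting into the Eisenberg-Gale dual shows it differs from $\potential$ by the same additive constant. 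Hence the argmax/argmin coincide, and for concave $\util[\buyer]$ the claim follows from the classical result.

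For the general (possibly non-concave) homothetic case, I would argue directly. First, $\potential$ is convex regardless of the shape of $\util[\buyer]$, because $\expend[\buyer](\price, 1) = \inf_{\allocation[ ] : \util[\buyer](\allocation[ ]) \geq 1} \price \cdot \allocation[ ]$ is a pointwise infimum of linear functions in $\price$, hence concave and homogeneous of degree 1 in $\price$. To establish weak duality, note that $\allocation[\buyer]/\budget[\buyer]$ is feasible for the expenditure minimization at utility level $\util[\buyer](\allocation[\buyer]/\budget[\buyer])$, which together with homogeneity of $\expend[\buyer]$ in its utility argument yields $\util[\buyer](\allocation[\buyer]/\budget[\buyer]) \cdot \expend[\buyer](\price, 1) \leq \price \cdot \allocation[\buyer]/\budget[\buyer]$. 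Taking logs, weighting by $\budget[\buyer]$, summing over $\buyer$, and applying Jensen's inequality (with weights $\budget[\buyer]$, using $\sum_\buyer \budget[\buyer] = 1$) together with feasibility $\sum_\buyer \allocation[\buyer] \leq \ones[\numgoods]$ and the simplex constraint $\price \cdot \ones[\numgoods] = 1$ yields the bound: the modified primal value at $\allocation$ is at most $\potential(\price)$.

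To close the duality gap and produce an equilibrium, let $\price^*$ minimize $\potential$ over $\simplex[\numgoods]$ (which exists by compactness and continuity) and set $\allocation^*[\buyer] \doteq \marshallian[\buyer](\price^*, \budget[\buyer])$. By Shephard's lemma, $\grad[\price] \expend[\buyer](\price^*, 1) = \hicksian[\buyer](\price^*, 1)$, and combining with the Marshallian-Hicksian identity $\marshallian[\buyer](\price, \budget[\buyer]) = (\budget[\buyer]/\expend[\buyer](\price, 1)) \hicksian[\buyer](\price, 1)$ gives $\grad \potential(\price^*) = -\excess(\price^*)$. First-order optimality for minimizing a convex function on the simplex then delivers market clearing at $\price^*$, so $(\allocation^*, \price^*)$ satisfies both buyer optimality (by definition of $\marshallian[\buyer]$) and market clearing, i.e., is a competitive equilibrium. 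One finally checks that every inequality in the weak-duality chain tightens at $(\allocation^*, \price^*)$: the expenditure inequality because the budget binds at the Marshallian demand, Jensen because $\price^* \cdot \allocation^*[\buyer]/\budget[\buyer] = 1$ is constant across buyers, and the feasibility step because of market clearing. This pins down strong duality and certifies primal optimality of $\allocation^*$. The main obstacle I anticipate is exactly this strong-duality step in the non-concave case, where the primal is not a convex program and standard convex-duality theorems do not apply on the primal side; the workaround is to construct the primal candidate from a dual optimum and verify its optimality a posteriori through tightness of the weak-duality bound, rather than attacking the primal program directly.
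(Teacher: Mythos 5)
You should know up front that the paper never proves \Cref{new-convex} itself: the theorem is imported from \citet{goktas2022consumer}, with only the remark that the proof given there remains valid when utilities are homogeneous but non-concave. Judged on its own terms, your proposal is essentially correct, and it runs on the same machinery the paper relies on elsewhere: the additive-constant relation to Eisenberg--Gale via $\util[\buyer](\allocation[\buyer]/\budget[\buyer]) = \util[\buyer](\allocation[\buyer])/\budget[\buyer]$ and $\indirectutil[\buyer](\price, \budget[\buyer]) = \budget[\buyer]/\expend[\buyer](\price,1)$; convexity of $\potential$ from concavity of $\expend[\buyer](\cdot,1)$ as an infimum of linear functions; weak duality from $\util[\buyer](\allocation[\buyer]/\budget[\buyer])\, \expend[\buyer](\price,1) \leq \price \cdot \allocation[\buyer]/\budget[\buyer]$ combined with Jensen and $\sum_{\good \in \goods} \price[\good] = \sum_{\buyer \in \buyers} \budget[\buyer] = 1$; and recovery of an equilibrium from a dual minimizer through Shephard's lemma and $\marshallian[\buyer](\price, \budget[\buyer]) = (\budget[\buyer]/\expend[\buyer](\price,1))\, \hicksian[\buyer](\price,1)$, which is exactly the content of \Cref{excess-demand}.

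Three steps need tightening before this is a complete proof. First, in a general homothetic market (e.g., one with linear buyers) $\expend[\buyer](\cdot,1)$ is not differentiable and demands are correspondences, so Shephard's lemma and the link to excess demand must be used in subdifferential form, $\subdiff[\price] \potential(\price^*) = -\excess(\price^*)$ as in \Cref{excess-demand}; optimality then yields the existence of \emph{some} selection of Marshallian demands that clears the market, not ``the'' Marshallian demand, and you should also verify that demands exist at $\price^*$ when some coordinates vanish (here the fact that $\expend[\buyer](\price^*,1) > 0$, forced by finiteness of $\potential(\price^*)$ at the minimizer, is what you need). Second, ``first-order optimality on the simplex delivers market clearing'' is too quick as written: the normal-cone/KKT condition only gives $\excess[\good](\price^*) \leq \mu$ for all $\good$, with equality whenever $\price[\good]^* > 0$; you need Walras' law, $\price^* \cdot \excess(\price^*) = 0$, which holds because homogeneous utilities are locally non-satiated, to conclude $\mu = 0$ and hence the clearing condition in the paper's definition of competitive equilibrium. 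Third, your tightness check certifies that the constructed pair $(\allocation^*, \price^*)$ is optimal and is an equilibrium; the statement also asserts that optimal solutions correspond to equilibria, which requires the reverse direction --- any primal optimum must make every inequality in your weak-duality chain tight against $\price^*$, and those equalities force budget exhaustion, budget-constrained utility maximization, and clearing. Your framework delivers this, but the step is not written out. None of these is a conceptual obstruction; the overall route, including handling the non-concave case by building the primal candidate from the dual optimum and closing the gap a posteriori, is sound.
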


% \amy{wait, is this theorem really correct? can you really just drop this term: $\sum_{\buyer \in \buyers} \budget[\buyer]\log \budget[\buyer]$. the eqm prices are the same, but i don't think it is accurate to call this the actual dual of the EG program, since its objective value does not equal the objective value of the dual.}

Since the objective function of the primal in \Cref{new-convex} is in general non-concave (i.e., if utilities $\util$ are not concave), strong duality need not hold; however, the dual is still guaranteed to be convex \cite{boyd2004convex}.
This observation suggests that even if the problem of computing competitive equilibrium \emph{allocations\/} is non-concave, the problem of computing competitive equilibrium \emph{prices\/} can still be convex. 
Additionally, since this convex program differs from the Eisenberg-Gale program by an additive constant, we obtain as a corollary that solutions to the Einseberg-Gale program also correspond to competitive equilibria in \emph{all\/} homothetic Fisher markets, including those in which the buyers' utility functions are non-concave.

An interesting property of this convex program is that its dual expresses competitive equilibrium prices via expenditure functions, and just like the Eisenberg-Gale program's dual objective \cite{fisher-tatonnement, devanur2008market}, the gradient of its objective $\potential(\price)$ at any price $\price$ is equal to the negative excess demand in the market at those prices.

\if 0
\begin{align}
\potential(\price) \doteq \sum_{\good \in \goods} \price[\good] - \sum_{\buyer \in \buyers} \budget[\buyer] \log{ \left( \partial_{\goalutil[\buyer]}{\expend[\buyer] (\price, \goalutil[\buyer])} \right)}
\label{eq:cpf}
\end{align}
\fi

\begin{theorem}[\citet{goktas2022consumer}]
\label{excess-demand}
Given any homothetic Fisher market $(\util, \budget)$, the subdifferential of the dual of the program in \Cref{new-convex} at any price $\price$ is equal to the negative excess demand in $(\util, \budget)$ at price $\price$: i.e.,
%
%\begin{align}   
$\subdiff[\price] \potential(\price)
%    \subdiff[\price] \left( \sum_{\good \in \goods} \price[\good] - \sum_{\buyer \in \buyers} \budget[\buyer] \log{ \left( \partial_{\goalutil[\buyer]}{\expend[\buyer] (\price, \goalutil[\buyer])} \right)} \right) 
= - \excess(\price)$.
%\end{align}
\end{theorem}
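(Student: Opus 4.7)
The plan is to differentiate $\potential(\price) = \sum_{\good \in \goods} \price[\good] - \sum_{\buyer \in \buyers} \budget[\buyer] \log(\expend[\buyer](\price, 1))$ term-by-term and identify the result as $-\excess(\price)$. The linear term contributes $\ones[\numgoods]$ trivially. For the second term, I would apply the chain rule to the composition $\log \circ \expend[\buyer](\cdot, 1)$, which gives the expression $\sum_\buyer \frac{\budget[\buyer]}{\expend[\buyer](\price, 1)} \subdiff[\price] \expend[\buyer](\price,1)$. Some care is needed because $\expend[\buyer](\cdot, 1)$ is concave (so its ``subdifferential'' in the expenditure-function sense is actually the superdifferential), but after the negation from $-\log$ the overall objective $\potential$ is convex, and the composition rule gives the correct convex subdifferential.

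The next step is to invoke Shephard's lemma, which identifies $\subdiff[\price] \expend[\buyer](\price, 1)$ with the Hicksian demand set $\hicksians[\buyer](\price, 1)$. Under the paper's standing assumption that Hicksian demand is singleton-valued, this simplifies to $\hicksian[\buyer](\price, 1)$. Shephard's lemma in this non-smooth form follows from the envelope theorem applied to the expenditure minimization program; since the paper's setting only requires continuous homogeneous (not necessarily concave) utilities, I would double-check that the minimizer $\hicksian[\buyer](\price, 1)$ exists and that the standard envelope argument goes through, which it does because $\price \mapsto \price \cdot \allocation[ ]$ is linear in $\price$ for each feasible $\allocation[ ]$ and the infimum of linear functions is concave.

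The final step is to convert Hicksian demand into Marshallian demand. Because $\util[\buyer]$ is homogeneous of degree $1$, the expenditure function is homogeneous of degree $1$ in the target utility, so $\expend[\buyer](\price, \goalutil[\buyer]) = \goalutil[\buyer] \, \expend[\buyer](\price, 1)$ and $\hicksian[\buyer](\price, \goalutil[\buyer]) = \goalutil[\buyer] \, \hicksian[\buyer](\price, 1)$. Combining this with the standard identity $\marshallian[\buyer](\price, \budget[\buyer]) = \hicksian[\buyer](\price, \indirectutil[\buyer](\price, \budget[\buyer]))$ and the fact that $\indirectutil[\buyer](\price, \budget[\buyer]) = \budget[\buyer]/\expend[\buyer](\price, 1)$ for homothetic preferences yields
\begin{equation*}
\frac{\budget[\buyer]}{\expend[\buyer](\price, 1)} \hicksian[\buyer](\price, 1) = \marshallian[\buyer](\price, \budget[\buyer]).
\end{equation*}
Substituting back gives $\subdiff[\price] \potential(\price) = \ones[\numgoods] - \sum_\buyer \marshallian[\buyer](\price, \budget[\buyer]) = \ones[\numgoods] - \demand(\price) = -\excess(\price)$, as required.

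The main obstacle is not any one calculation but the careful handling of non-concavity: since the excerpt explicitly allows $\util[\buyer]$ to be non-concave, I would want to justify Shephard's lemma and the Hicksian/Marshallian identity without relying on differentiability of $\util[\buyer]$ or concavity arguments on the primal. The homogeneity of $\util[\buyer]$ (together with continuity and local non-satiation, which the excerpt already observes follows from homogeneity) is what makes this go through, because it ensures that Hicksian demands scale linearly in the utility target and that the expenditure function is continuous, homogeneous, and concave in $\price$ regardless of concavity of $\util[\buyer]$ itself; this is precisely the observation that lets the dual stay convex even when the primal is not.
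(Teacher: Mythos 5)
Your proposal is correct and follows essentially the same route as the argument behind the cited result: differentiate the dual objective term-by-term, apply Shephard's lemma (\Cref{shephard}) to identify $\subdiff[\price] \expend[\buyer](\price, 1)$ with the Hicksian demand, and convert to Marshallian demand via the homotheticity identity (\Cref{equiv-def-demand}), giving $\subdiff[\price] \potential(\price) = \ones[\numgoods] - \demand(\price) = -\excess(\price)$. The paper itself imports this theorem from \citet{goktas2022consumer} rather than reproving it, but the supporting facts you invoke (Shephard's lemma, $\indirectutil[\buyer](\price, \budget[\buyer]) = \nicefrac{\budget[\buyer]}{\expend[\buyer](\price,1)}$, and the Hicksian--Marshallian identity under local non-satiation) are exactly the ones stated in its appendix, so your handling of non-concavity via properties of the expenditure function matches the intended argument.
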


% Note that, since the primal is non-concave and strong duality does not hold, solutions to the primal are not guaranteed to correspond to competitive equilibrium allocations. In essence, by optimizing prices rather than allocations one can abstract away the consumer's non-concave problem, since the dual of any constrained program is convex.\amy{add reference}

\citet{fisher-tatonnement} define a class of markets called \mydef{convex potential function (CPF)} markets.
A market is a CPF market, if there exists a convex potential function $\varphi$ such that $\subdiff[\price] \varphi(\price) = - \excess(\price)$.
A corollary of \Cref{excess-demand} is that all homothetic Fisher markets are CPF markets. 
This in turn implies that mirror descent on $\varphi$ over the unit simplex is equivalent to \emph{t\^atonnement\/} in all homothetic Fisher markets, for some monotone function $G$.
Using this equivalence, we can pick a particular kernel function $h$, and then potentially use \Cref{theorem-devanur} to establish convergence rates for \emph{t\^atonnement}.

Unfortunately, \emph{t\^atonnement\/} does not converge to equilibrium prices in all homothetic Fisher markets, e.g., linear Fisher markets \cite{cole2019balancing}, which suggests the need for additional restrictions on the class of homothetic Fisher markets.
\citet{goktas2022consumer} suggest the maximum absolute value of the  Marshallian price demand elasticity, i.e., $c = \max_{\good, k, \buyer} \max_{(\price, \budget) \in \simplex[\numgoods] \times \simplex[\numbuyers] \times [\numbuyers] } \left\|\elastic[{\marshallian[\buyer][\good]}][{\price[k]}](\price, \budget[\buyer]) \right \|$, as a possible market parameter to use to establish a convergence rate of $O(\nicefrac{(1 + c)}{T})$.
However, 
%\sdeni{this suggestion is incorrect, as \emph{t\^atonnement\/} converges in Leontief Fisher markets, where $c \to \infty$ in these markets.
\citeauthor{cole2008fast}'s [\citeyear{cole2008fast}] results suggest that it is unlikely that Marshallian demand elasticity could be enough, since the proof techniques used in work that makes this assumption require one to quantify the direction of the change in demand as a function of the change in the prices of the other goods, and hence only apply when one assumes WGS or WGC \cite{cole2008fast}.
%, making them too specialized to apply to all homothetic markets. 
% Perhaps more importantly, the Marshallian cross-price demand of elasticity is in general unbounded at price $\bm{0}$.
% Hence, unless
% \amy{ENDS MID SENTENCE!}
% \deni{Add something on how our previous bound conjecture is actually not going to work.}

\section{Market Parameters}

One of the main contributions of this paper is the observation that the maximum absolute value of the price elasticity of Hicksian demand in a homothetic Fisher market is sufficient to analyze the convergence of \emph{t\^atonnement}.
To this end, in this section, we anaylyze Hicksian demand price elasticity, exposit some of its properties in homothetic markets, and argue why it is a natural parameter to consider in the analysis of \emph{t\^atonnement}.

We first note that for Leontief utilities, the Hicksian cross-price elasticity of demand is equal to $0$, while for linear utilities the Hicksian cross-price elasticity of demand is, by convention, $\infty$.%
\footnote{The limit of Hicksian price elasticity of demand as $\rho \to 1$ is not well defined, i.e., if $\rho \to 1^{-}$ the limit is $+\infty$, while if $\rho \to 1^{+}$ the limit is $- \infty$.
However, for linear utilities, as the Hicksian demand for a good can only go up when the price of another good goes up, we set the elasticity of Hicksian price elasticity of demand for linear utilities to be $+\infty$, by convention. We refer the reader to \citet{ramskov2001elasticities} for a primer on elasticity of demand.} 
For Cobb-Douglas utilities, the Hicksian cross-price elasticity of demand is strictly positive and upper bounded by $1$, %\sdeni{(assuming wlog valuations are normalized, i.e., $\sum_{\good \in \goods} \valuation[\buyer][\good] = 1$)}{}, 
but it is not the same for all pairs of goods.
%\sdeni{To the best of our knowledge, although well-motivated as a class of preferences, there exists no well-defined class of utility functions for which the elasticity of Hicksian demand could be defined as $-\infty$, i.e., a class of utility functions for which the Hicksian demand could be unboundedly low, if the price of another good is increased.}
% We note that for quasilinear utilities, assuming linear valuations and costs, the Hicksian price elasticity of demand is $-\infty$, since increasing the price of any good $k$ decreases the buyer $i$'s buck-per-bang for good $k$, $\frac{\price[k]}{\valuation[\buyer][k] - \price[k]}$, which in turn decreases the Hicksian demand of any good $\good$.
Note that the behavior of the Hicksian cross-price elasticity of demand is radically different than that of the Marshallian cross-price elasticity of demand, for which the elasticities of linear, Cobb-Douglas, and Leontief utilities are respectively given as $\infty$, $0$, and $-\infty$. 
A taxonomy of utility classes as a function of price elasticity of demand (both Marshallian and Hicksian) is shown in \Cref{fig:taxonomy}.
% \sadie{I think the elasticity here is not correct?}

We start our analysis with following lemma, which shows that the Hicksian price elasticity of demand is constant across all utility levels in homothetic Fisher markets.
This property implies that the Hicksian demand price elasticity at one unit of utility provides sufficient information about the market's reactivity to changes in prices, even without any information about the buyers' utility levels.
This information is crucial when trying to bound the changes in Hicksian demand from one iteration of \emph{t\^atonnement\/} to another, since buyers' utilities can change.%
\footnote{We include all omitted results and proofs in \Cref{sec_ap:proofs}.}

\begin{restatable}{lemma}{lemmahomoelasticity}
\label{lemma:homo_elasticity}
For any Hicksian demand $\hicksian[\buyer]$ associated with a homogeneous utility function $\util[\buyer]$, for all $\good, k \in \goods, \price \in \R^\numgoods_+, \goalutil[\buyer] \in \R_+$, it holds that
%\begin{align}
    $\elastic[{\hicksian[\buyer][\good]}][{\price[k]}](\price, \goalutil[\buyer]) = \elastic[{\hicksian[\buyer][\good]}][{\price[k]}](\price, 1) = 1$.
%\end{align}
\end{restatable}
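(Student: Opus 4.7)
The plan is to extract both equalities from the single structural fact that homogeneity of degree~$1$ of $\util[\buyer]$ transfers to homogeneity of degree~$1$ of $\hicksian[\buyer]$ in its utility-level argument, namely $\hicksian[\buyer](\price, \goalutil[\buyer]) = \goalutil[\buyer] \, \hicksian[\buyer](\price, 1)$ for every $\price \in \R^\numgoods_+$ and $\goalutil[\buyer] > 0$. First I would establish this scaling identity by the change of variable $\allocation = \goalutil[\buyer] \, \allocation'$ inside the EMP $\min_{\allocation} \price \cdot \allocation$ subject to $\util[\buyer](\allocation) \geq \goalutil[\buyer]$: homogeneity of $\util[\buyer]$ reduces the constraint to $\util[\buyer](\allocation') \geq 1$ and turns the objective into $\goalutil[\buyer] \, \price \cdot \allocation'$; since a positive rescaling of the objective leaves the argmin unchanged, the minimizer is exactly $\goalutil[\buyer] \, \hicksian[\buyer](\price, 1)$.

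For the first equality, I would differentiate the scaling identity componentwise in $\price[k]$ to obtain $\subdiff[{\price[k]}] \hicksian[\buyer][\good](\price, \goalutil[\buyer]) = \goalutil[\buyer] \, \subdiff[{\price[k]}] \hicksian[\buyer][\good](\price, 1)$. Plugging this into the elasticity definition $\elastic[f][x_j](\y) = \subdiff[{x_j}] f(\y) \cdot y_j / f(\y)$, the factor of $\goalutil[\buyer]$ from the derivative cancels against the $\goalutil[\buyer]$ implicit in $\hicksian[\buyer][\good](\price, \goalutil[\buyer]) = \goalutil[\buyer] \, \hicksian[\buyer][\good](\price, 1)$ appearing in the denominator, leaving $\elastic[{\hicksian[\buyer][\good]}][{\price[k]}](\price, \goalutil[\buyer]) = \elastic[{\hicksian[\buyer][\good]}][{\price[k]}](\price, 1)$. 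Note that the $\price[k]$ and the point $\price$ are held fixed throughout, so no additional hypothesis beyond the scaling identity is needed.

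For the final equality, I would invoke Euler's theorem for the degree-$1$ homogeneous map $\goalutil[\buyer] \mapsto \hicksian[\buyer][\good](\price, \goalutil[\buyer])$: differentiating the scaling identity in $\goalutil[\buyer]$ gives $\subdiff[{\goalutil[\buyer]}] \hicksian[\buyer][\good](\price, \goalutil[\buyer]) = \hicksian[\buyer][\good](\price, 1) = \hicksian[\buyer][\good](\price, \goalutil[\buyer]) / \goalutil[\buyer]$, so the corresponding elasticity evaluates to $\hicksian[\buyer][\good](\price, 1) \cdot \goalutil[\buyer] / [\goalutil[\buyer] \, \hicksian[\buyer][\good](\price, 1)] = 1$ uniformly in $\price$, $\good$, and $\buyer$, delivering the constant $1$ stated in the lemma.

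The main obstacle I anticipate is regularity. The preliminaries already restrict attention to singleton-valued Hicksian demands, so the scaling identity holds pointwise; the derivatives in the subsequent steps exist wherever $\expend[\buyer]$ is smooth enough for Shephard's lemma $\hicksian[\buyer][\good] = \subdiff[{\price[\good]}] \expend[\buyer]$ to be differentiated once more in $\price[k]$ and once in $\goalutil[\buyer]$, which follows from strict convexity of preferences in the relevant argument. At boundary prices where some $\hicksian[\buyer][\good](\price, 1)$ vanishes, the elasticities must be read in the one-sided limit; since the scaling identity is preserved in that limit, neither equality breaks down there.
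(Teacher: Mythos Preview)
Your proposal is correct and, for the first equality, essentially identical to the paper's argument: both use the scaling identity $\hicksian[\buyer](\price,\goalutil[\buyer]) = \goalutil[\buyer]\,\hicksian[\buyer](\price,1)$ (which the paper cites from \citet{goktas2022consumer} rather than rederiving from the EMP) to cancel the factor of $\goalutil[\buyer]$ in numerator and denominator of the elasticity.

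For the ``$=1$'' clause there is a small but real difference worth noting. You correctly read the constant~$1$ as the elasticity of $\hicksian[\buyer][\good]$ with respect to the \emph{utility level} $\goalutil[\buyer]$, and you obtain it by differentiating the scaling identity in $\goalutil[\buyer]$; this is the reading supported by the paper's own footnote and conclusion. The paper's written proof, however, tries to extract the~$1$ from the \emph{price}-elasticity expression $\frac{\price[k]}{\hicksian[\buyer][\good](\price,1)}\,\subdiff[{\price[k]}]\hicksian[\buyer][\good](\price,1)$ by invoking Euler's theorem and the ``degree-$1$ homogeneity'' of Hicksian demand. Taken literally that step does not work: Hicksian demand is degree~$0$ in prices, and Euler's theorem yields a \emph{sum} over~$k$, not a single term. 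Your route---differentiate in $\goalutil[\buyer]$ and read off the utility-level elasticity directly---is the clean way to obtain the intended conclusion, and it avoids the ambiguity present in the paper's own argument.
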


% \deni{To add: picture of Hicksian demand elasticity, elasticity of demand is constant w.r.t. income in homothetic Fisher markets, }

With the above lemma in hand, we now explain why the Hicksian demand price elasticity%
\footnote{Going forward we refer to the Hicksian price elasticity of demand, as simply Hicksian demand elasticity, because Hicksian price elasticity of demand w.r.t.\ utility level is always 1.}
%\deni{This result holds w.r.t. any utility level?} 
is a better market parameter by which to analyze the convergence of \emph{t\^atonnement\/} than the Marshallian demand price elasticity. 
\citet{fisher-tatonnement, cheung2014analyzing} use the dual of the Eisenberg-Gale program as a potential to measure the progress that \emph{t\^atonnement\/} makes at each step, for (nested) CES and Leontief utilities.
Under these functional forms, the authors are able to explain a change in the value of the buyers' indirect utilities as a function a change in prices, based on which they bound the change in the second term of the dual $\sum_{\buyer \in \buyers} \budget[\buyer]\log(\indirectutil[\buyer](\price, \budget[\buyer])) - \budget[\buyer]$ from one time period to the next.
Using this bound, they show that \emph{t\^atonnement\/} makes steady progress towards equilibrium.

However, in general homothetic Fisher markets, knowing how much the Marshallian demand for each good changes from one iteration of \emph{t\^atonnement\/} to another does not tell us how much the buyers' utilities change.
More concretely, suppose that the Marshallian demand of a buyer $\buyer$ has changed by an additive vector $\Delta \marshallian[\buyer]$ from time $\iter$ to time $\iter + 1$, then the difference in indirect utilities from one period to another is given by $\util[\buyer](\marshallian[\buyer]^{(\iter+1)}) - \util[\buyer](\marshallian[\buyer]^{(\iter)}) = \util[\buyer](\marshallian[\buyer]^{(\iter)} + \Delta \marshallian[\buyer]) - \util[\buyer](\marshallian[\buyer]^{(\iter)})$.
Without additional information about the utility functions, e.g., Lipschitz continuity, it is impossible to bound this difference, because utilities can change by an unbounded amount from one period to another.
Hence, even if the Marshallian price elasticity of demand and the changes in prices from one period to another were known, it would only allow us to bound the difference in demands, and not the difference in indirect utilities.
To get around this difficulty, one could consider making an assumption about the boundedness of the indirect utility function's price elasticity, or the utility function's Lipschitz-continuity, but such assumptions would not be economically justified, since utility functions are merely representations of preference orderings without any inherent meaning of their own.

We can circumvent this issue by instead looking at the dual of the convex program in \Cref{new-convex}.
In this dual, the indirect utility term is replaced by the expenditure function.
The advantage of this formulation is that if one knows the amount by which prices change from one iteration to the next, as well as the Hicksian demand elasticity, then we can easily bound the change in spending from one period to another. 

The following lemma is crucial to proving the convergence of \emph{t\^atonnement\/}. 
This lemma allows us to bound the changes in buyer spending across all time periods, thereby allowing us to obtain a global convergence rate.
In particular, it shows that the change in spending between two consecutive iterations of \emph{t\^atonnement\/} can be bounded as a function of the prices and the Hicksian demand elasticity.

More formally, suppose that we would like to bound the percentage change in expenditure at one unit of utility from one iteration to another, i.e., $\frac{\expend[\buyer](\price[][\iter+1], 1) - \expend[\buyer](\price[][\iter], 1)}{\expend[\buyer](\price[][\iter], 1)}$, using a first order Taylor expansion of $\expend[\buyer](\price[][\iter] + \pricediff, 1)$ around $\price[][\iter]$.
By Taylor's theorem \cite{graves1927riemann}, we have: $\expend[\buyer](\price[][\iter] + \pricediff, 1) = \expend[\buyer](\price[][\iter], 1) + \left<\grad[\price] \expend[\buyer](\price[][\iter], 1), \pricediff \right> + \nicefrac{1}{2} \left< \grad[\price]^2 \expend[\buyer](\price[][\iter] + c \pricediff, 1) \pricediff, \pricediff \right>$ for some $c \in (0, 1)$.
Re-organizing terms around, we get $\expend[\buyer](\price[][\iter+1], 1) - \expend[\buyer](\price[][\iter], 1) =  \left<\grad[\price] \expend[\buyer](\price[][\iter], 1), \pricediff \right> + \nicefrac{1}{2} \left< \grad[\price]^2 \expend[\buyer](\price[][\iter] + c \pricediff, 1) \pricediff, \pricediff \right>$. 
Dividing both sides by $\expend[\buyer](\price[][\iter], 1)$ we obtain:
\begin{align*}
    \frac{\left<\grad[\price] \expend[\buyer](\price[][\iter], 1), \pricediff \right> + \nicefrac{1}{2} \left< \grad[\price]^2 \expend[\buyer](\price[][\iter] + c \pricediff, 1) \pricediff, \pricediff \right>}{\expend[\buyer](\price[][\iter], 1)} 
\end{align*}
We can now apply Shephard's lemma \cite{shephard}, a corollary of the envelope theorem \cite{afriat1971envelope, milgrom2002envelope}, to the numerator, which allows us to conclude that for all buyers $\buyer \in \buyers$, $\grad[\price] \expend[\buyer](\price, \goalutil[\buyer]) = \hicksian[\buyer](\price, \goalutil[\buyer])$.
Next, using the definition of the expenditure function in the denominator, we obtain the following:
\begin{align}
    &= \frac{\left<\hicksian[\buyer](\price[][\iter], 1), \pricediff \right> }{\left<\hicksian[\buyer](\price[][\iter], 1), \price[][\iter] \right>} + \frac{\nicefrac{1}{2} \left< \grad[\price]^2 \expend[\buyer](\price[][\iter] + c \pricediff, 1) \pricediff, \pricediff \right>}{\expend[\buyer](\price[][\iter], 1)}\label{eq:expend_change}
\end{align}
%\sdeni{Note that since the Hicksian elasticity of demand is bounded, $\hicksian[\buyer](\price[][\iter], 1)$ is unique and the above expression is well-defined.}{} 
If the change in prices is bounded, and the Hicksian demand elasticity is known, then one can bound the first term in \Cref{eq:expend_change} with ease.
It remains to be seen if the second term can be bounded.
The following lemma provides an affirmative answer to that question.
In particular, we show that the second-order error term in the Taylor approximation above can be bounded as a function of the maximum absolute value of the Hicksian demand elasticity.
We note that in the following lemma, by \Cref{equiv-def-demand}, the Marshallian demand is unique, because the Hicksian demand is a singleton for bounded elasticity of Hicksian demand. 

% \deni{I have to add expenditure function error bound error.}

\begin{lemma}
\label{lemma:expend_change}
Fix $\buyer \in \buyers$ and $\iter \in \N_+$ and let $\pricediff = \price[][\iter+1] - \price[][\iter]$. Suppose that $\frac{|\pricediff[\good]|}{\price[\good][\iter]} \leq \frac{1}{4}$, then for all buyers $\buyer \in \buyers$, and for some $c \in (0,1)$, it holds that:
\begin{align}
    \left|\frac{\budget[\buyer]}{2} \frac{\left< \grad[\price]^2 \expend[\buyer](\price[][\iter] + c \pricediff, 1) \pricediff, \pricediff \right>}{\expend[\buyer](\price[][\iter], 1)} \right| \leq  \frac{5\elastic }{6}    \sum_{\good} \frac{(\pricediff[\good])^2}{\price[\good][\iter]}  \marshallian[\buyer][\good](\price[][\iter] + c \pricediff, \budget[\buyer])
    \enspace ,
\end{align}
where $\elastic \doteq \max_{\price \in \simplex[\numgoods],  \good, k \in \goods} \left| \elastic[{\hicksian[\buyer][\good]}][{\price[k]}](\price, 1) \right|$.
\end{lemma}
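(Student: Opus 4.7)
The plan is to combine Shephard's lemma, the homothetic-Fisher relationship between Hicksian and Marshallian demand, and the stated bound $|\pricediff[\good]| \leq \price[\good][\iter]/4$ to reduce the Hessian-based quadratic form on the left to a diagonal quadratic form matching the right. First I would invoke Shephard's lemma, $\grad[\price] \expend[\buyer](\price, \goalutil[\buyer]) = \hicksian[\buyer](\price, \goalutil[\buyer])$, so that the Hessian of the expenditure function equals the Jacobian of Hicksian demand, and rewrite each entry via the definition of elasticity as $\subdiff[{\price[k]}]\hicksian[\buyer][\good](\price[][\iter] + c\pricediff, 1) = \frac{\hicksian[\buyer][\good](\price[][\iter]+c\pricediff, 1)}{\price[k][\iter]+c\pricediff[k]}\,\elastic[{\hicksian[\buyer][\good]}][{\price[k]}](\price[][\iter]+c\pricediff, 1)$. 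This puts the elasticity $\elastic$ in a position where it can be pulled out uniformly.

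Next I would use the homothetic identity $\hicksian[\buyer][\good](\price, 1) = \frac{\expend[\buyer](\price, 1)}{\budget[\buyer]}\, \marshallian[\buyer][\good](\price, \budget[\buyer])$, which follows from the degree-$1$ homogeneity of $\util[\buyer]$ (hence of $\hicksian[\buyer]$ in the utility level) together with $\indirectutil[\buyer](\price, \budget[\buyer]) = \budget[\buyer]/\expend[\buyer](\price, 1)$. Substituting this converts Hicksian quantities into Marshallian ones and turns the quantity to bound into
\begin{align*}
\frac{\expend[\buyer](\price[][\iter] + c\pricediff, 1)}{2\,\expend[\buyer](\price[][\iter], 1)} \left| \sum_{\good, k} \frac{\marshallian[\buyer][\good](\price[][\iter]+c\pricediff, \budget[\buyer])\,\elastic[{\hicksian[\buyer][\good]}][{\price[k]}](\price[][\iter]+c\pricediff, 1)}{\price[k][\iter] + c\pricediff[k]}\,\pricediff[\good]\,\pricediff[k] \right|.
\end{align*}

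Three elementary bounds, supplied by the hypothesis $|\pricediff[\good]| \leq \price[\good][\iter]/4$, now combine to produce the constant $\nicefrac{5}{6}$. Coordinatewise we have $\tfrac{3}{4}\,\price[\good][\iter] \leq \price[\good][\iter] + c\pricediff[\good] \leq \tfrac{5}{4}\,\price[\good][\iter]$ for every $c \in (0,1)$. Monotonicity and degree-$1$ homogeneity of $\expend[\buyer]$ in prices yield $\expend[\buyer](\price[][\iter] + c\pricediff, 1) \leq \tfrac{5}{4}\,\expend[\buyer](\price[][\iter], 1)$, while $\price[\good][\iter]/(\price[\good][\iter] + c\pricediff[\good]) \leq \nicefrac{4}{3}$. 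Finally I would use $|\elastic[{\hicksian[\buyer][\good]}][{\price[k]}](\cdot, 1)| \leq \elastic$ along with the Slutsky symmetry $\marshallian[\buyer][\good]\,\elastic[{\hicksian[\buyer][\good]}][{\price[k]}] / \price[k] = \marshallian[\buyer][k]\,\elastic[{\hicksian[\buyer][k]}][{\price[\good]}] / \price[\good]$ and the degree-zero homogeneity identity $\sum_k \elastic[{\hicksian[\buyer][\good]}][{\price[k]}](\cdot, 1) = 0$ to dominate the absolute value of the double sum by $\elastic \sum_\good \marshallian[\buyer][\good](\price[][\iter] + c\pricediff, \budget[\buyer])\,\pricediff[\good]^2/(\price[\good][\iter] + c\pricediff[\good])$. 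Chaining these gives $\tfrac{5}{4}\cdot\tfrac{1}{2}\cdot\elastic\cdot\tfrac{4}{3} = \tfrac{5\elastic}{6}$, as required.

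The main obstacle is the third bound on the double sum: a naive triangle inequality together with AM-GM accumulates factors of $\numgoods$, so the argument must use structural properties of the Slutsky matrix. I would work in the rescaled variables $r_\good \doteq \pricediff[\good]/(\price[\good][\iter] + c\pricediff[\good])$ and the Marshallian budget shares $w_\good \doteq (\price[\good][\iter] + c\pricediff[\good])\,\marshallian[\buyer][\good](\price[][\iter] + c\pricediff, \budget[\buyer])/\budget[\buyer]$. Using the two identities above, the double sum equals $-\tfrac{1}{2}\sum_{\good, k} w_\good\,\elastic[{\hicksian[\buyer][\good]}][{\price[k]}](\price[][\iter] + c\pricediff, 1)\,(r_\good - r_k)^2$, i.e., a weighted graph-Laplacian energy of the symmetric, negative-semidefinite matrix $M_{\good k} \doteq w_\good \elastic[{\hicksian[\buyer][\good]}][{\price[k]}](\price[][\iter] + c\pricediff, 1)$. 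Because $-M$ is PSD with $\mathbf{1}$ in its kernel, its quadratic form equals the $\mathbf{1}$-centred one, and the spectral decomposition of the associated symmetrised matrix $W^{-1/2} M W^{-1/2}$ (which is PSD with kernel $W^{1/2}\mathbf{1}$) together with the entrywise bound from $\elastic$ produces the required domination by $\elastic \sum_\good w_\good r_\good^2 = \elastic \sum_\good \marshallian[\buyer][\good]\,\pricediff[\good]^2/(\price[\good][\iter] + c\pricediff[\good])$, which is exactly what feeds the final chain.
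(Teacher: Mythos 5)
Your preparatory steps essentially reproduce the paper's: Shephard's lemma turns the Hessian of $\expend[\buyer]$ into the Jacobian of $\hicksian[\buyer](\cdot,1)$, each entry is rewritten through the elasticity at the intermediate price $\price[][\iter]+c\pricediff$, and the identity $\hicksian[\buyer][\good](\price,1)=\frac{\expend[\buyer](\price,1)}{\budget[\buyer]}\marshallian[\buyer][\good](\price,\budget[\buyer])$ is exactly \Cref{equiv-def-demand}, which the paper invokes at the end rather than the start; your $\nicefrac{3}{4}$--$\nicefrac{5}{4}$ price comparisons and the arithmetic $\frac{5}{4}\cdot\frac{1}{2}\cdot\frac{4}{3}=\frac{5}{6}$ repackage the paper's use of $\price[\good][\iter]+c\pricediff[\good]\geq\frac{3}{4}\price[\good][\iter]$ together with the $\frac{4}{5}$ denominator comparison obtained from \Cref{law-of-demand-corollary} and \Cref{hicksian-is-expend-minimizer}. (A minor slip: $\sum_\good w_\good r_\good^2$ equals $\frac{1}{\budget[\buyer]}\sum_\good \marshallian[\buyer][\good](\price[][\iter]+c\pricediff,\budget[\buyer])\,(\pricediff[\good])^2/(\price[\good][\iter]+c\pricediff[\good])$, not that sum itself; the missing $\budget[\buyer]$ cancels against the same factor hidden in your double sum, so this is only bookkeeping.)

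The genuine gap is the crux: the claimed domination of the double sum by $\elastic\sum_\good w_\good r_\good^2$. Your reduction to the weighted Laplacian energy, using Slutsky symmetry and $\sum_k \varepsilon_{\good k}=0$ (where $\varepsilon_{\good k}$ denotes $\elastic[{\hicksian[\buyer][\good]}][{\price[k]}](\price[][\iter]+c\pricediff,1)$), is correct, and the observation that the symmetrised matrix has entries $\sqrt{|\varepsilon_{\good k}\varepsilon_{k\good}|}\leq\elastic$ is the same symmetrisation trick the paper uses; but the final leap, ``symmetric PSD $+$ positive kernel vector $+$ entrywise bound $\Rightarrow$ largest eigenvalue at most the entrywise bound,'' is false. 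For instance, with three goods and equal budget shares $w_\good=\nicefrac{1}{3}$, take $-M$ proportional to the Laplacian of the path graph $1\!-\!2\!-\!3$ (scale $c>0$): it is symmetric, PSD, has $\mathbf{1}$ in its kernel, the pointwise elasticities $\varepsilon_{\good k}=M_{\good k}/w_\good$ are bounded in absolute value by $6c$, yet $\sup_r r^\top(-M)r/\sum_\good w_\good r_\good^2=9c$. So the pointwise entrywise bound does not control the quadratic form; to rescue the step you would need either further structure of substitution matrices of homothetic preferences or the fact that $\elastic$ is a supremum over all of $\simplex[\numgoods]$, and the sketch supplies neither, leaving the key inequality unproven. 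Note also that the route you set aside as ``naive'' is in fact the paper's: it takes absolute values inside the double sum, writes $|\subdiff[{\price[k]}]\hicksian[\buyer][\good]|$ as the geometric mean $\sqrt{|\subdiff[{\price[k]}]\hicksian[\buyer][\good]|\,|\subdiff[{\price[\good]}]\hicksian[\buyer][k]|}$ via symmetry, bounds both elasticities by $\elastic$, applies AM-GM to pass to the diagonal sum, and finishes with \Cref{law-of-demand-corollary}, \Cref{hicksian-is-expend-minimizer}, and \Cref{equiv-def-demand}. Your worry that the pairwise AM-GM over all $(\good,k)$ produces $\numgoods$ times the diagonal sum is a fair question to press against that accounting, but replacing it with the spectral claim above does not repair anything, because that claim is not true as stated; either justify a spectral bound using genuinely more structure, or follow the paper's chain and address the dimension-factor question directly there.
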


\begin{proof}[Proof of \Cref{lemma:expend_change}]
% First, we note that $\grad[\price]^2 \expend[\buyer]$ is negative semi-definite, meaning that we have $\left< \grad[\price]^2 \expend[\buyer](\price[][\iter] + c \pricediff, 1) \pricediff, \pricediff \right> \leq 0$, and hence the bound is not trivial.
By Shephard's lemma \cite{shephard} (\Cref{shephard}, \Cref{sec_ap:proofs}), it holds that 
$\left< \grad[\price]^2 \expend[\buyer](\price[][\iter] + c \pricediff, 1) \pricediff, \pricediff \right> = \left< \grad[\price] \hicksian[\buyer](\price[][\iter] + c \pricediff, 1) \pricediff, \pricediff \right>$. Then, it follows that:
\begin{align}
    &\left|\frac{\budget[\buyer]}{2} \frac{\left< \grad[\price]^2 \expend[\buyer](\price[][\iter] + c \pricediff, 1) \pricediff, \pricediff \right>}{\left<\hicksian[\buyer]^t, \price[][\iter] \right>} \right| \notag\\
    &= \left|\frac{\budget[\buyer]}{2} \frac{\left< \grad[\price] \hicksian[\buyer](\price[][\iter] + c \pricediff, 1) \pricediff, \pricediff \right>}{\left<\hicksian[\buyer]^t, \price[][\iter] \right>} \right| && \text{(Shepherd's Lemma)} \notag \\
    &\leq \frac{\budget[\buyer]}{2} \frac{\sum_{\good, k } \left| \pricediff[\good] \right| \left|\subdiff[{\price[k]}] \hicksian[\buyer][\good](\price[][\iter] + c \pricediff, 1) \right| \left|\pricediff[k] \right|}{\left<\hicksian[\buyer]^t, \price[][\iter] \right>} \notag \\
    &= \frac{\budget[\buyer]}{2} \frac{\sum_{\good, k } \left| \pricediff[\good]\right| \sqrt{\left| \subdiff[{\price[k]}] \hicksian[\buyer][\good](\price[][\iter] + c \pricediff, 1) \right|} \sqrt{\left|\subdiff[{\price[k]}] \hicksian[\buyer][\good](\price[][\iter] + c \pricediff, 1) \right|}  \left|\pricediff[k] \right| }{\left<\hicksian[\buyer]^t, \price[][\iter] \right>} \notag\\
    % &= \frac{\budget[\buyer]}{2} \frac{\sum_{\good, k } \left|\pricediff[\good]\right| \sqrt{\left|\subdiff[{\price[k]}] \hicksian[\buyer][\good](\price[][\iter] + c \pricediff, 1)\right|} \sqrt{\left|\subdiff[{\price[k]}] \hicksian[\buyer][\good](\price[][\iter] + c \pricediff, 1) \right|}  \left|\pricediff[k] \right| }{\left<\hicksian[\buyer]^t, \price[][\iter] \right>} \\
    &= \frac{\budget[\buyer]}{2} \frac{\sum_{\good, k } \left|\pricediff[\good]\right| \sqrt{\left|\subdiff[{\price[j]}] \hicksian[\buyer][k](\price[][\iter] + c \pricediff, 1)\right|} \sqrt{\left|\subdiff[{\price[k]}] \hicksian[\buyer][\good](\price[][\iter] + c \pricediff, 1) \right|}  \left|\pricediff[k]\right| }{\left<\hicksian[\buyer]^t, \price[][\iter] \right>} \label{eq:lipschitz_error_bound_homo} 
    % && \text{(Symmetry of Expenditure Hessian)}
\end{align}

\noindent 
where the last was obtained from the symmetry of $\grad[\price]^2 \expend[\buyer](\price, \goalutil[\buyer]) = \grad[\price]^2 \expend[\buyer](\price, \goalutil[\buyer])^T$ for all $\buyer \in \buyers, \price \in \R^\numgoods_+, \goalutil[\buyer] \in \R_+$ \cite{mas-colell},  which  combined with Shepherd's lemma gives us $ \grad[\price] \hicksian[\buyer] (\price, \goalutil[\buyer]) =  \grad[\price] \hicksian[\buyer] (\price, \goalutil[\buyer])^T $, i.e., for all $\good, k \in \goods$, $\subdiff[{\price[\good]}] \hicksian[\buyer][k] (\price, \goalutil[\buyer]) = \subdiff[{\price[k]}] \hicksian[\buyer][\good] (\price, \goalutil[\buyer])$. 
% \sadie{Still not sure about \cref{eq:lipschitz_error_bound_1} :(}

Define the Hicksian demand elasticity of buyer $\buyer$ for good $\good$ w.r.t. the price of good $k$ as $\elastic[{\hicksian[\buyer][\good]}][{\price[k]}](\price, \goalutil[\buyer]) = \subdiff[{\price[k]}] \hicksian[\buyer][\good](\price, \goalutil[\buyer]) \frac{\price[k]}{\hicksian[\buyer][\good](\price, \goalutil[\buyer])}$.  Since utility functions are homogeneous, by \Cref{lemma:homo_elasticity} we have for all $\goalutil[\buyer] \in \R_+$, $\elastic[{\hicksian[\buyer][\good]}][{\price[k]}](\price, \goalutil[\buyer]) = \subdiff[{\price[k]}] \hicksian[\buyer][\good](\price, \goalutil[\buyer]) \frac{\price[k]}{\hicksian[\buyer][\good](\price, \goalutil[\buyer])} = \subdiff[{\price[k]}] \hicksian[\buyer][\good](\price, 1) \frac{\price[k]}{\hicksian[\buyer][\good](\price, 1)}$. Re-organizing expressions, we get $ \subdiff[{\price[k]}] \hicksian[\buyer][\good](\price, 1) = \elastic[{\hicksian[\buyer][\good]}][{\price[k]}](\price, 1) \frac{\hicksian[\buyer][\good](\price, 1)}{\price[k]}$. Going back to \Cref{eq:lipschitz_error_bound_homo}, we get:

\begin{align*}
    &= \frac{\budget[\buyer]}{2} \frac{\sum_{\good, k } \left|\pricediff[\good] \right| \sqrt{\left|\elastic[{\hicksian[\buyer][k]}][{\price[\good]}](\price[][\iter] + c \pricediff, 1)  \frac{\hicksian[\buyer][k](\price[][\iter] + c \pricediff, 1) }{\price[\good][\iter] + c \pricediff[\good]} \right|} \sqrt{\left|\elastic[{\hicksian[\buyer][\good]}][{\price[k]}](\price[][\iter] + c \pricediff, 1) \frac{\hicksian[\buyer][\good](\price[][\iter] + c \pricediff, 1)}{\price[k][\iter] + c \pricediff[k]}\right|}  \left|\pricediff[k] \right| }{\left<\hicksian[\buyer]^t, \price[][\iter] \right>} \\ 
    &= \frac{\budget[\buyer]}{2} \frac{\sum_{\good, k } \left|\pricediff[\good] \right| \sqrt{\left|\elastic[{\hicksian[\buyer][k]}][{\price[\good]}](\price[][\iter] + c \pricediff, 1) \right| \frac{\hicksian[\buyer][k](\price[][\iter] + c \pricediff, 1) }{\price[\good][\iter] + c \pricediff[\good]}} \sqrt{\left|\elastic[{\hicksian[\buyer][\good]}][{\price[k]}](\price[][\iter] + c \pricediff, 1) \right| \frac{\hicksian[\buyer][\good](\price[][\iter] + c \pricediff, 1)}{\price[k][\iter] + c \pricediff[k]}}  \left|\pricediff[k] \right| }{\left<\hicksian[\buyer](\price[][\iter], 1), \price[][\iter]\right>}
    % \\ &\leq \left|\frac{\budget[\buyer]}{2} \frac{\sum_{\good, k } \pricediff[\good] \sqrt{\elastic[{\hicksian[\buyer][k]}][{\price[\good]}](\price[][\iter] + c \pricediff) \frac{\hicksian[\buyer][k](\price[][\iter] + c \pricediff, 1)}{\price[\good] + c \pricediff[\good]}} \sqrt{\elastic[{\hicksian[\buyer][\good]}][{\price[k]}](\price[][\iter] + c \pricediff) \frac{\hicksian[\buyer][\good](\price[][\iter] + c \pricediff, 1)}{\price[k] + c \pricediff[k]}(\price[][\iter] + c \pricediff, 1)}  \pricediff[k] }{\left<\hicksian[\buyer]^t, \price[][\iter] \right>} \right|
\end{align*}
Letting $\elastic = \max_{\price \in \R_+^\numgoods, \goalutil[\buyer] \in \R_+, \good, k \in \goods} \left| \elastic[{\hicksian[\buyer][\good]}][{\price[k]}](\price, \goalutil[\buyer]) \right|$. Note that since utility functions are homogeneous, by \Cref{lemma:homo_elasticity} we have $\elastic = \max_{\price \in \R_+^\numgoods, \goalutil[\buyer] \in \R_+, \good, k \in \goods} \left| \elastic[{\hicksian[\buyer][\good]}][{\price[k]}](\price, \goalutil[\buyer]) \right| = \max_{\price \in \R_+^\numgoods,  \good, k \in \goods} \left| \elastic[{\hicksian[\buyer][\good]}][{\price[k]}](\price, 1) \right|$, which gives us:
\begin{align*}
    &\leq \frac{\elastic \budget[\buyer]}{2} \frac{\sum_{\good, k } \left|\pricediff[\good] \right| \sqrt{\frac{\hicksian[\buyer][k](\price[][\iter] + c \pricediff, 1) }{\price[\good][\iter] + c \pricediff[\good]}} \sqrt{ \frac{\hicksian[\buyer][\good](\price[][\iter] + c \pricediff, 1)}{\price[k][\iter] + c \pricediff[k]}}  \left|\pricediff[k] \right| }{\left<\hicksian[\buyer](\price[][\iter], 1), \price[][\iter]\right>}
\end{align*}
Since for all $\good \in \goods$, $\frac{|\pricediff[\good]|}{\price[\good]} \leq \frac{1}{4}$, we have that for all $\good \in \goods$ and for all $c \in [0, 1]$,  $\price[\good][\iter] + c \pricediff[\good]\geq \frac{3}{4} \price[\good][\iter]$, which gives:
\begin{align*}
    &\leq \frac{\elastic \budget[\buyer]}{2}  \frac{\sum_{\good, k } \left|\pricediff[\good] \right| \sqrt{\frac{\hicksian[\buyer][k](\price[][\iter] + c \pricediff, 1) }{\frac{3}{4} \price[\good][\iter]}} \sqrt{ \frac{\hicksian[\buyer][\good](\price[][\iter] + c \pricediff, 1)}{\frac{3}{4} \price[k][\iter]}}  \left|\pricediff[k] \right| }{\left<\hicksian[\buyer](\price[][\iter], 1), \price[][\iter]\right>} \\
    &= \frac{2\elastic \budget[\buyer] }{3}   \frac{\sum_{\good, k } \left|\pricediff[\good] \right| \sqrt{\frac{\hicksian[\buyer][\good](\price[][\iter] + c \pricediff, 1) }{ \price[\good][\iter]}} \sqrt{ \frac{\hicksian[\buyer][k](\price[][\iter] + c \pricediff, 1)}{ \price[k][\iter]}}  \left|\pricediff[k] \right| }{\left<\hicksian[\buyer](\price[][\iter], 1), \price[][\iter]\right>} \\
    &= \frac{2\elastic \budget[\buyer] }{3}   \frac{\sum_{\good, k \in \goods} \sqrt{\frac{\left|\pricediff[\good] \right|^2}{ \price[\good][\iter]} \hicksian[\buyer][\good](\price[][\iter] + c \pricediff, 1) \hicksian[\buyer][k](\price[][\iter] + c \pricediff, 1)  \frac{\left|\pricediff[k] \right|^2}{\price[k][\iter]}}}{\left<\hicksian[\buyer](\price[][\iter], 1), \price[][\iter]\right>} \\
    % &= \frac{2\elastic \budget[\buyer] }{3}   \frac{\sum_{\good  \in \goods} \left|\pricediff[\good] \right| \sqrt{\frac{\hicksian[\buyer][\good](\price[][\iter] + c \pricediff, 1) }{ \price[\good][\iter]}} \left(\sum_{k \in \goods} \sqrt{ \frac{\hicksian[\buyer][k](\price[][\iter] + c \pricediff, 1)}{ \price[k]^t}}  \left|\pricediff[k] \right|  \right)}{\left<\hicksian[\buyer]^t, \price[][\iter] \right>}
\end{align*}

Applying the AM-GM inequality, i.e., for all $x, y \in \R_+$, $\frac{x+y}{2} \geq \sqrt{xy}$, to the sum inside the numerator above, we obtain:
\begin{align*}
    &\leq \frac{2\elastic \budget[\buyer] }{3}   \frac{\sum_{\good, k \in \goods} \nicefrac{1}{2} \left(\frac{\left|\pricediff[\good] \right|^2}{ \price[\good][\iter]} \hicksian[\buyer][\good](\price[][\iter] + c \pricediff, 1) + \hicksian[\buyer][k](\price[][\iter] + c \pricediff, 1)  \frac{\left|\pricediff[k] \right|^2}{\price[k][\iter]} \right)}{\left<\hicksian[\buyer](\price[][\iter], 1), \price[][\iter]\right>} \\
    &\leq \frac{2\elastic \budget[\buyer] }{3} \frac{\sum_{\good} \frac{(\pricediff[\good])^2}{\price[\good][\iter]}  \hicksian[\buyer][\good](\price[][\iter] + c \pricediff, 1)}{\left<\hicksian[\buyer](\price[][\iter], 1), \price[][\iter]\right>} 
\end{align*}

\noindent
Since for all $\good \in \goods$, $\frac{|\pricediff[\good]|}{\price[\good][\iter]} \leq \frac{1}{4}$, we have for all $c \in [0, 1]$ that $\frac{4}{5} \sum_{\good} \hicksian[\buyer][\good](\price[][\iter], 1)(\price[\good][\iter] + c \pricediff[\good]) \leq \sum_{\good} \hicksian[\buyer][\good](\price[][\iter], 1)\price[\good][\iter]$:
\begin{align*}
    &\leq \frac{2\elastic \budget[\buyer] }{3} \frac{\sum_{\good} \frac{(\pricediff[\good])^2}{\price[\good][\iter]}  \hicksian[\buyer][\good](\price[][\iter] + c \pricediff, 1)}{\frac{4}{5} \sum_{\good} \hicksian[\buyer][\good](\price[][\iter], 1)(\price[\good][\iter] + c \pricediff[\good])} \\
    &= \frac{5\elastic \budget[\buyer] }{6}   \frac{\sum_{\good} \frac{(\pricediff[\good])^2}{\price[\good][\iter]}  \hicksian[\buyer][\good](\price[][\iter] + c \pricediff, 1)}{ \sum_{\good} \hicksian[\buyer][\good](\price[][\iter], 1)(\price[\good][\iter] + c \pricediff[\good])}\\
    &\leq \frac{5\elastic \budget[\buyer] }{6}  \frac{\sum_{\good} \frac{(\pricediff[\good])^2}{\price[\good][\iter]}  \hicksian[\buyer][\good](\price[][\iter] + c \pricediff, 1)}{ \sum_{\good} \hicksian[\buyer][\good](\price[][\iter] + c\pricediff, 1)(\price[\good][\iter] + c \pricediff[\good])} && \text{(\Cref{law-of-demand-corollary}, \Cref{sec_ap:proofs})}\\
    &= \frac{5\elastic }{6}   \sum_{\good} \frac{(\pricediff[\good])^2}{\price[\good][\iter]}  \marshallian[\buyer][\good](\price[][\iter] + c \pricediff, \budget[\buyer]) && \text{(\Cref{equiv-def-demand}, \Cref{sec_ap:proofs})}
\end{align*}
\end{proof}

Because we can bound the change in the expenditure function from one iteration of \emph{t\^atonnement\/} to the next, the Hicksian price elasticity of demand is a better tool with which to analyze the convergence of \emph{t\^atonnement\/} than Marshallian price elasticity of demand.
Additionally, as shown previously by \citet{fisher-tatonnement} (\Cref{kl-divergence-1}, \Cref{sec_ap:proofs}), we can further upper bound the price terms in \Cref{lemma:expend_change} by the KL divergence between the two prices 
In light of \Cref{theorem-devanur}, this result suggests that running mirror descent with KL divergence as the Bregman divergence on the dual of the convex program in \Cref{new-convex} could result in a \emph{t\^atonnement\/} update rule that converges to a competitive equilibrium.  

\section{Convergence Bounds for Entropic T\^atonnement}
\label{sec:convergence}

In this section, we analyze the rate of convergence of \mydef{entropic t\^atonnement}, which corresponds to the \emph{t\^atonnement\/} process given by mirror descent with weighted entropy as the kernel function, i.e., entropic descent. 
% $h(\x) = \sum_{i \in [n]} \left(x_{i} \log(x_{i}) - x_i \right)$,
% which corresponds to mirror descent with the generalized Kullback-Leibler (KL) divergence
% with $c = 6$, i.e., $\divergence[\mathrm{h}][\p][\q]= 6 \divergence[\mathrm{KL}][\p][\q] = 6 \sum_{i \in [n]} \left( p_i \log \left(\frac{p_i}{q_i}\right) + q_i - p_i \right)$, as the Bregman divergence, and a fixed step size $\gamma$. 
This particular update rule reduces to \Crefrange{exp:subgradient-descent}{exp:subgradient-descent-init}, and has been the focus of previous work \cite{fisher-tatonnement}.
We provide a sketch of the proof used to obtain our convergence rate in this section.
The omitted lemmas and proofs can be found in Appendix \ref{sec_ap:proofs}.

At a high level, our proof follows \citeauthor{fisher-tatonnement}'s [\citeyear{fisher-tatonnement}] proof technique for Leontief Fisher markets \cite{fisher-tatonnement}, although we encounter different lower-level technical challenges in generalizing to homothetic markets.
This proof technique works as follows.
First, we prove that under certain assumptions, the condition required by \Cref{theorem-devanur} holds when $f$ is the convex potential function for homothetic Fisher markets, i.e., $f = \potential$.
For these assumptions to be valid, we need to set $\gamma$ to be greater than a quadratic function of the maximum absolute value of the price elasticity of the Hicksian demand and the maximum Marshallian demand, for all goods throughout the \emph{t\^atonnment\/} process.
Further, since $\gamma$ needs to be set at the outset, we need to upper bound $\gamma$.
To do so, we derive a bound on the maximum demand for any good during \emph{t\^atonnement\/} in all homothetic Fisher markets, which in turn allows us to derive an upper bound on $\gamma$.
Finally, we use \Cref{theorem-devanur} to obtain the convergence rate of $O(\nicefrac{1+ \elastic^2}{t})$. 

The following lemma derives the conditions under which the antecedent of \Cref{theorem-devanur} holds for entropic \emph{t\^atonnement}.

\begin{restatable}{lemma}{ineqdevanur}
\label{ineq-devanur}
Consider a homothetic Fisher market $(\util, \budget)$ and let $\elastic = \max_{\price \in \simplex[\numgoods],  \good, k \in \goods} \left| \elastic[{\hicksian[\buyer][\good]}][{\price[k]}](\price, 1) \right|$. Then, the following holds for entropic \emph{t\^atonnement\/} when run on $(\util, \budget)$:
for all $t \in \N$,
$$    \potential(\price[][t+1]) \leq \lapprox[\potential][{\price[][t+1]}][{\price[][t]}]  + \gamma \divergence[\mathrm{KL}][{\price[][t+1]}][{\price[][t]}] \enspace ,$$
where $\gamma = \left(1 + \max\limits_{\good \in \goods} \left\{\sum\limits_{\buyer \in \buyers} \max\limits_{t \in \N_+} \indirectutil[\buyer](\price[][\iter], \budget[\buyer])  \max\limits_{\price \in \simplex[\numgoods]} \hicksian[\buyer][\good](\price, 1) \right\} \right) \left(6 + \frac{85 \elastic}{12} + \frac{25\elastic^2 }{72} \right)$
% \divergence[\mathrm{KL}][{\price[][t] + \pricediff}][{\price[][t]}]$.
\end{restatable}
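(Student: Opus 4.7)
My plan is to verify the Bregman-smoothness hypothesis of \Cref{theorem-devanur} for $f = \potential$, $h = $ weighted entropy, by directly Taylor-expanding $\potential(\price[][t+1])$ around $\price[][t]$ and matching the remainder term against the KL divergence. I would first use the explicit form of the dual objective from \Cref{new-convex} together with Shephard's lemma and the homogeneity identity $\budget[\buyer]/\expend[\buyer](\price, 1) = \indirectutil[\buyer](\price, \budget[\buyer])$ (which follows from $\expend[\buyer](\price, \indirectutil[\buyer](\price,\budget[\buyer])) = \budget[\buyer]$ and degree-1 homogeneity of $\expend[\buyer]$ in the utility argument). The linear term $\sum_\good \price[\good]$ contributes exactly zero to $\potential(\price[][t+1]) - \lapprox[\potential][{\price[][t+1]}][{\price[][t]}]$, so after these substitutions it suffices to upper-bound
\[
R_\buyer \;\doteq\; -\budget[\buyer]\!\left[\log \expend[\buyer](\price[][t+1], 1) - \log \expend[\buyer](\price[][t], 1) - \frac{\langle \hicksian[\buyer]^{(t)}, \pricediff\rangle}{\expend[\buyer](\price[][t], 1)}\right]
\]
for each buyer and then sum.

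Next I would apply the second-order Taylor expansion to $\log\expend[\buyer](\cdot, 1)$, using that its Hessian splits as $\grad_\price \hicksian[\buyer]/\expend[\buyer] - \hicksian[\buyer]\hicksian[\buyer]^T/\expend[\buyer]^2$. For some $c\in(0,1)$,
\[
R_\buyer \;=\; -\tfrac{\budget[\buyer]}{2}\,\frac{\langle \grad_\price \hicksian[\buyer](\price[][t]+c\pricediff, 1)\pricediff,\pricediff\rangle}{\expend[\buyer](\price[][t]+c\pricediff, 1)} \;+\; \tfrac{\budget[\buyer]}{2}\,\frac{\langle \hicksian[\buyer](\price[][t]+c\pricediff, 1),\pricediff\rangle^2}{\expend[\buyer](\price[][t]+c\pricediff, 1)^2}.
\]
The first summand is exactly the object bounded in \Cref{lemma:expend_change} (after reinstating $\grad^2 \expend[\buyer]$ via Shephard's lemma), contributing a term of order $\elastic \sum_\good (\pricediff[\good])^2/\price[\good][t]\cdot \marshallian[\buyer][\good](\price[][t]+c\pricediff,\budget[\buyer])$. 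The second summand I would bound by Jensen's inequality applied to expenditure shares $s_{\buyer\good} = \hicksian[\buyer][\good]\price[\good]/\expend[\buyer]$ (which sum to $1$), yielding a term of order $\sum_\good \marshallian[\buyer][\good](\price[][t]+c\pricediff,\budget[\buyer]) (\pricediff[\good])^2/\price[\good][t]$ with no elasticity factor. Summing over buyers and using $\marshallian[\buyer][\good] = \indirectutil[\buyer]\hicksian[\buyer][\good](\price, 1)$ (again by homogeneity) collects everything into the factor $1 + \max_\good\sum_\buyer \max_t \indirectutil[\buyer]\max_\price \hicksian[\buyer][\good]$ appearing in the claimed $\gamma$, multiplied by a polynomial in $\elastic$ whose leading terms come from the linear $\elastic$ bound of \Cref{lemma:expend_change}; the $\elastic^2$ contribution arises because the Hicksian demand at the perturbed price $\price[][t]+c\pricediff$ must itself be related back to the Hicksian demand on $\simplex[\numgoods]$, and a first-order perturbation bound in terms of $\elastic$ composed with the bound of \Cref{lemma:expend_change} yields the $\elastic^2$ term.

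The final step is to convert $\sum_\good (\pricediff[\good])^2/\price[\good][t]$ into $\divergence[\mathrm{KL}][\price[][t+1]][\price[][t]]$. Under entropic tâtonnement this is precisely the inequality proven by \citet{fisher-tatonnement} (the $\mathrm{KL}$-quadratic inequality cited in the excerpt as residing in \Cref{sec_ap:proofs}), and it yields a universal constant that absorbs into the numerical coefficients $6$, $85/12$, $25/72$.

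The main obstacle is that \Cref{lemma:expend_change} requires the per-coordinate bound $|\pricediff[\good]|/\price[\good][t] \le 1/4$, which need not hold for arbitrary entropic descent steps. I would address this by a case split: if every coordinate satisfies the bound, the Taylor analysis above applies directly; otherwise, for coordinates violating the bound, I would instead exploit the closed-form entropic update $\price[\good][t+1] = \price[\good][t]\exp(-\subgrad^{(t)}_\good/\gamma)$ and the convexity of $-\log \expend[\buyer]$ in $\price$ to show that for such large coordinates, $R_\buyer$ is already dominated by $\gamma\,\mathrm{KL}$ because $\divergence[\mathrm{KL}]$ grows superlinearly in $|\pricediff[\good]|/\price[\good][t]$ when that ratio is not small. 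Carefully combining the two cases under one value of $\gamma$—and tracking the AM--GM constants inherited from \Cref{lemma:expend_change}—yields exactly the stated expression for $\gamma$.
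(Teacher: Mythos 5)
Your overall skeleton matches the paper's: both arguments reduce the Bregman-smoothness condition to a second-order Taylor analysis of the expenditure term, use Shephard's lemma and the identity $\budget[\buyer]/\expend[\buyer](\price,1) = \indirectutil[\buyer](\price,\budget[\buyer])$, invoke \Cref{lemma:expend_change} for the Hessian term, and convert $\sum_\good (\pricediff[\good])^2/\price[\good][\iter]$ into $\divergence[\mathrm{KL}][{\price[][t+1]}][{\price[][t]}]$ via \Cref{kl-divergence-1}. Your decomposition through $\grad^2_\price \log \expend[\buyer] = \grad_\price\hicksian[\buyer]/\expend[\buyer] - \hicksian[\buyer]\hicksian[\buyer]^T/\expend[\buyer]^2$ with Jensen on expenditure shares is in fact cleaner than the paper's route (the paper expands $\expend[\buyer]$ itself, passes through $\log(1-x(1+x)^{-1}) \le \tfrac43 x^2 - x$, and squares the Taylor remainder in \Cref{lemma:expend_change_squared}, which is exactly where its $\elastic^2$ comes from). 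However, two steps of your plan have genuine gaps. First, your treatment of the precondition $|\pricediff[\good]|/\price[\good][\iter]\le\tfrac14$: the paper does not need a case split, because the stated $\gamma$ is engineered so that its first factor upper-bounds the aggregate demand along the trajectory (via $\demand[\good][\iter] = \sum_{\buyer}\indirectutil[\buyer](\price[][\iter],\budget[\buyer])\hicksian[\buyer][\good](\price[][\iter],1) \le \sum_\buyer \max_{t}\indirectutil[\buyer](\price[][\iter],\budget[\buyer])\max_{\price\in\simplex[\numgoods]}\hicksian[\buyer][\good](\price,1)$), and its second factor is at least $6$, so \Cref{price-change} guarantees $|\pricediff[\good]|/\price[\good][\iter]\le\tfrac14$ at every step; your "violation" branch is vacuous. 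As written, that branch is also unsubstantiated: a superlinear-growth-of-KL heuristic does not by itself dominate $R_\buyer$, since without the small-step condition the change in $\log\expend[\buyer]$ and the linear term $\langle\marshallian[\buyer]^{(\iter)},\pricediff\rangle$ are only controlled through the very elasticity machinery you are bypassing, and the claim is never reduced to anything checkable.

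Second, your proposed way of handling demand at the intermediate point $\price[][\iter]+c\pricediff$ — "a first-order perturbation bound in terms of $\elastic$" relating $\hicksian[\buyer][\good](\price[][\iter]+c\pricediff,1)$ back to the simplex — does not deliver the stated $\gamma$. Integrating the cross-price elasticity bound over a per-coordinate price change of order $\tfrac14$ multiplies the Hicksian demand by a factor like $e^{O(\numgoods\elastic)}$, not a polynomial in $\elastic$, so this route cannot produce the factor $\bigl(6+\tfrac{85\elastic}{12}+\tfrac{25\elastic^2}{72}\bigr)$ you claim; it also misidentifies the origin of the $\elastic^2$ term (in the paper it comes from squaring the Taylor error inside \Cref{fix-buyer-bound}, not from any perturbation of the Hicksian demand). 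The fix is the paper's: Hicksian demand at utility level $1$ is homogeneous of degree $0$ in prices, so $\hicksian[\buyer][\good](\price[][\iter]+c\pricediff,1) \le \max_{\price\in\simplex[\numgoods]}\hicksian[\buyer][\good](\price,1)$ directly (with Berge's theorem giving attainment), and concavity of $\expend[\buyer](\cdot,1)$ bounds the denominator by $\min\{\expend[\buyer](\price[][\iter],1),\expend[\buyer](\price[][\iter+1],1)\}$, so that $\marshallian[\buyer][\good](\price[][\iter]+c\pricediff,\budget[\buyer])$ is controlled by $\max_t\indirectutil[\buyer](\price[][\iter],\budget[\buyer])\max_{\price\in\simplex[\numgoods]}\hicksian[\buyer][\good](\price,1)$ with no elasticity cost. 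With that substitution, and with the observation that the stated $\gamma$ itself enforces the $\tfrac14$ price-change bound, your decomposition would go through (indeed with only a linear dependence on $\elastic$, which proves the lemma a fortiori since a smaller admissible $\gamma$ implies the stated inequality); but as described, those two steps do not close.
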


\begin{proof}[Proof of \Cref{ineq-devanur}]

Fix $\iter \in \N$. First, notice that $\gamma \geq 6$ under the choice of $\gamma$ given as in the Lemma statement. Hence, by \Cref{price-change}, we have $\forall \good \in \goods, \frac{|\pricediff[\good]|}{\price[\good][\iter]} \leq \frac{1}{4}$. With this in mind, we have:

\begin{align*}
    &\potential(\price[][t+1]) - \lapprox[\potential][{\price[][t+1]}][{\price[][t]}]\\
    &= \potential(\price[][t+1]) - \potential(\price[][t]) + \excess(\price[][t]) \cdot \left(\price[][t+1] - \price[][t]\right)\\
    &= \sum_{\good \in \goods} \!\!\!\!\left( \price[\good]^{(\iter)} + \pricediff[\good] \right) - \!\!\! \sum_{\buyer \in \buyers} \budget[\buyer] \log \left( \expend[\buyer](\price[][t+1], 1)\right) - \!\!\!\sum_{\good \in \goods} \!\! \price[\good]^{(\iter)} + \!\!\!\! \sum_{\buyer \in \buyers} \budget[\buyer] \log \left( \expend[\buyer](\price[][t] , 1)\right) + \!\!\!\! \sum_{\good \in \goods} \!\! \excess[\good](\price[][t]) \pricediff[\good]\\
    &= \sum_{\good \in \goods} \left( \price[\good]^{(\iter)} + \pricediff[\good] \right) - \!\!\!\! \sum_{\buyer \in \buyers} \budget[\buyer] \log \left( \expend[\buyer](\price[][t+1] , 1)\right) - \!\!\!\! \sum_{\good \in \goods} \!\! \price[\good]^{(\iter)} \!\!+ \!\!\! \sum_{\buyer \in \buyers} \budget[\buyer] \log \left( \expend[\buyer](\price[][t] , 1)\right) + \!\!\! \sum_{\good \in \goods} (\demand[\good]^{(\iter)} \!\! - 1) \pricediff[\good]\\
    &=  \sum_{\good \in \goods} \pricediff[\good] \demand[\good]^{(\iter)} - \sum_{\buyer \in \buyers} \budget[\buyer] \log \left( \expend[\buyer](\price[][t+1], 1)\right) + \sum_{\buyer \in \buyers} \budget[\buyer] \log \left( \expend[\buyer](\price[][t] , 1)\right) \\
    &= \left<\pricediff, \demand^{(\iter)} \right> + \sum_{\buyer \in \buyers} \budget[\buyer] \log \left( \frac{\expend[\buyer](\price[][t], 1)}{\expend[\buyer](\price[][t+1], 1)}\right)\\
    &=\left<\pricediff, \demand^{(\iter)} \right> + \sum_{\buyer \in \buyers} \budget[\buyer] \log \left( \frac{\expend[\buyer](\price[][t], 1)}{\expend[\buyer](\price[][t], 1) + \expend[\buyer](\price[][t+1], 1) - \expend[\buyer](\price[][t], 1)}\right)\\
    &=\left<\pricediff, \demand^{(\iter)} \right> + \sum_{\buyer \in \buyers} \budget[\buyer] \log \left( 1 - \frac{\expend[\buyer](\price[][t+1], 1) - \expend[\buyer](\price[][t], 1)}{\expend[\buyer](\price[][t], 1)} \left(1+ \frac{\expend[\buyer](\price[][t+1], 1) - \expend[\buyer](\price[][t], 1)}{\expend[\buyer](\price[][t], 1)} \right)^{-1} \right) 
    % \frac{\expend[\buyer](\price[][t], 1)}{\expend[\buyer](\price[][t], 1) + \expend[\buyer](\price[][t+1], 1) - \expend[\buyer](\price[][t], 1)}\right)\\
\end{align*}
where the last line is obtained by simply noting that $\forall a, b \in \R, \frac{a}{a+b} = 1 - \frac{b}{a}(1+ \frac{b}{a})^{-1}$. Using \Cref{fix-buyer-bound}, we then obtain:
\begin{align*}
    &\potential(\price[][t+1]) - \lapprox[\potential][{\price[][t+1]}][{\price[][t]}]\\ 
    &\leq \left<\pricediff, \demand^{(\iter)} \!\right> \!+ \!\!\! \sum_{\buyer \in \buyers} \left[ \left(\frac{4}{3}+\frac{20\elastic}{27}\right) \! \! \sum_{l \in \goods} \! \frac{\marshallian[\buyer][l]^{(\iter)}}{\price[l]^{(\iter)}}(\pricediff[l])^2 \! + \! \left(\frac{5 \elastic}{6} \!+ \!\! \frac{25\elastic^2 }{324} \right) \!\! \sum_{l \in \goods} \!\! \frac{\marshallian[\buyer][l](\price[][t] \! + \! c \pricediff, \budget[\buyer])}{\price[l]^{(\iter)}}(\pricediff[l])^2 \! - \!\left< \marshallian[\buyer]^{(\iter)}\!\!, \pricediff \right> \right]\\
    &= \left<\pricediff, \demand^{(\iter)} \!\right> \!+\! \left(\frac{4}{3}+\frac{20\elastic}{27}\right) \!\! \sum_{l \in \goods}\! \frac{\demand[l]^{(\iter)}}{\price[l]^{(\iter)}}(\pricediff[l])^2  + \left(\frac{5 \elastic}{6} + \frac{25\elastic^2 }{324} \right)  \sum_{l \in \goods}\frac{\demand[l](\price[][t] + c \pricediff, \budget)}{\price[l]^{(\iter)}}(\pricediff[l])^2 - \left< \demand^{(\iter)}, \pricediff \right>\\
    % &= \left<\pricediff, \demand^{(\iter)} \right> + \frac{4}{3} \sum_{l \in \goods} \frac{\demand[l]^{(\iter)}}{\price[l]^{(\iter)}}(\pricediff[l])^2  + \left(\frac{155 \elastic}{42} + \frac{25\elastic^2 }{324} \right) \sum_{\buyer \in \buyers} \sum_{l \in \goods} \frac{\marshallian[\buyer][l](\price[][t] + c \pricediff, \budget[\buyer])}{\price[l]^{(\iter)}}(\pricediff[l])^2 -  \left< \demand^{(\iter)}, \pricediff \right>\\
    \!\!&=  \left(\frac{4}{3}+\frac{20\elastic}{27}\right) \sum_{l \in \goods} \frac{\demand[l]^{(\iter)}}{\price[l]^{(\iter)}}(\pricediff[l])^2  + \left(\frac{5 \elastic}{6} + \frac{25\elastic^2 }{324} \right)\sum_{l \in \goods} \frac{\demand[l](\price[][t] + c \pricediff, \budget)}{\price[l]^{(\iter)}}(\pricediff[l])^2 \\
    % &\leq  \frac{4}{3} \sum_{l \in \goods} \frac{\demand[l]^{(\iter)}}{\price[l]^{(\iter)}}(\pricediff[l])^2  + \sum_{\buyer \in \buyers}\left(\frac{155 \elastic}{42} + \frac{25\elastic^2 }{324} \right) \sum_{l \in \goods} \frac{\demand[l](\price[][t] + c \pricediff, \budget[\buyer])}{\price[l]^{(\iter)}}(\pricediff[l])^2 \\
    \!\!&\leq  \left(\!\frac{4}{3}\!+\!\frac{20\elastic}{27}\!\right)\! \sum_{l \in \goods}\! \! \demand[l]^{(\iter)} \!\left(\frac{9}{2}\right)\! \divergence[\mathrm{KL}][{\price[\good]^{(\iter)} \!+\! \pricediff[\good]}][{\price[\good]^{(\iter)}}] \! + \!\left(\frac{5 \elastic}{6} \!+\! \frac{25\elastic^2 }{324} \right) \! \sum_{l \in \goods} \!\demand[l](\price[][t] \! \! + c \pricediff, \budget) \left(\frac{9}{2} \right) \divergence[\mathrm{KL}][{\price[\good]^{(\iter)} \! \! + \pricediff[\good]}][{\price[\good]^{(\iter)}}]
\end{align*}
\noindent where the last line follows from \Cref{kl-divergence-1}. Continuing,
\begin{align}
    \!\!&= \left(6+\frac{10\elastic}{3}\right) \sum_{l \in \goods} \demand[l]^{(\iter)} \divergence[\mathrm{KL}][{\price[l]^{(\iter)} + \pricediff[l]}][{\price[l]^{(\iter)}}]  + \left(\frac{15 \elastic}{4} + \frac{25\elastic^2 }{72} \right) \sum_{l \in \goods}\demand[l](\price[][t] + c \pricediff, \budget)  \divergence[\mathrm{KL}][{\price[l]^{(\iter)} + \pricediff[l]}][{\price[l]^{(\iter)}}] \notag\\
    \!\!&\leq \max_{\substack{\good \in \goods,\\ t \in \N_+}} \! \left\{\demand[\good][t] \!\right\} \! \left(6\!+\!\frac{10\elastic}{3}\right)\!\sum_{l \in \goods} \!\!\divergence[\mathrm{KL}][{\price[l]^{(\iter)} \!\! +\pricediff[l]}][{\price[l]^{(\iter)}}]  \!+\! \notag \\
    &\max_{\substack{\good \in \goods,\\ t \in \N_+,\\ c \in [0,1]}} \left\{\demand[\good](\price[][t] \! \! + c \pricediff)\right\}\! \left(\frac{15 \elastic}{4} \!+\! \frac{25\elastic^2 }{72} \right) \!\! \sum_{l \in \goods}\!\! \divergence[\mathrm{KL}][{\price[l]^{(\iter)} \!\! + \pricediff[l]}][{\price[l]^{(\iter)}}] \notag \\
    \!\!&\leq \max_{\good \in \goods, t \in \N_+, c \in [0,1]} \left\{\demand[\good](\price[][t] + c \pricediff)\right\} \left(6 + \frac{85 \elastic}{12} + \frac{25\elastic^2 }{72} \right)  \divergence[\mathrm{KL}][{\price[][t] + \pricediff}][{\price[][t]}]\label{eq:intermediate_bound_bregman_smooth_fisher}
\end{align}

\sdeni{}{
By \Cref{equiv-def-demand}, we can re-write the aggregate demand for all goods $\good \in \goods$, as follows: 
\begin{align*}
    \demand[\good](\price[][t] + c \pricediff) &= \sum_{\buyer \in \buyers} \marshallian[\buyer][\good]((1-c) \price[][t] + c \price[][\iter + 1], \budget[\buyer])\\
    &= \sum_{\buyer \in \buyers} \frac{\hicksian[\buyer][\good]((1-c) \price[][t] + c \price[][\iter + 1], 1) \budget[\buyer]}{\expend[\buyer]((1-c) \price[][t] + c \price[][\iter + 1], 1)}
\end{align*}

Now, by Danskin's maximum theorem \cite{danskin1966thm}, we know that the expenditure function is concave in prices, that is, for all $c \in [0, 1]$, we have $(1-c) \expend[\buyer](\price[][t], 1) + c \expend[\buyer](\price[][\iter + 1], 1) \leq \expend[\buyer]((1-c) \price[][t] + c \price[][\iter + 1], 1)$. Hence, continuing we have for all $\good \in \goods$:
\begin{align*}
    \demand[\good](\price[][t] + c \pricediff) &= \sum_{\buyer \in \buyers} \frac{\hicksian[\buyer][\good]((1-c) \price[][t] + c \price[][\iter + 1], 1) \budget[\buyer]}{\expend[\buyer]((1-c) \price[][t] + c \price[][\iter + 1], 1)}\\
    &\leq \sum_{\buyer \in \buyers} \frac{\hicksian[\buyer][\good]((1-c) \price[][t] + c \price[][\iter + 1], 1) \budget[\buyer]}{(1-c) \expend[\buyer](\price[][t], 1) + c \expend[\buyer](\price[][\iter + 1], 1)}
\end{align*}

Further, by Lemma 5 of \citet{goktas2022consumer}, since the expenditure function is homogeneous of degree $0$ in prices, notice that we have for all $\good \in \goods$, $\max_{\price \in \R^\numgoods_+ / \{ \zeros \}} \hicksian[\buyer][\good](\price, 1) = \max_{\price \in \simplex[\numgoods]} \hicksian[\buyer][\good](\price, 1)$. Note that $\max_{\price \in \simplex[\numgoods]} \hicksian[\buyer][\good](\price, 1)$ is well-defined since $\simplex[\numgoods]$ is compact, $\hicksian[\buyer][\good](\price, 1)$ exists for all $\price \in \R^\numgoods_+$, and is by Berge's maximum theorem \cite{berge1997topological} continuous in homothetic Fisher markets. Since by the entropic t\^atonnement update rule for all time-steps $\iter \in \N_+$, and goods $\good \in \goods$, $\price[\good][\iter] > 0$, we then have $\hicksian[\buyer][\good]((1-c) \price[][t] + c \price[][\iter + 1], 1) \leq \max_{\price \in \simplex[\numgoods]} \hicksian[\buyer][\good](\price, 1)$. Hence, continuing, we have for all $\good \in \goods$:
\begin{align*}
    \demand[\good](\price[][t] + c \pricediff) &\leq \sum_{\buyer \in \buyers} \frac{\max_{\price \in \simplex[\numgoods]} \hicksian[\buyer][\good](\price, 1) \budget[\buyer]}{(1-c) \expend[\buyer](\price[][t], 1) + c \expend[\buyer](\price[][\iter + 1], 1)}
\end{align*}
Taking a maximum over $c \in [0, 1]$ and $\iter \in \N_+$, and $\good \in \goods$, we have for all goods $\good \in \goods$:
\begin{align*}
    \max_{\good \in \goods, t \in \N_+, c \in [0,1]} \demand[\good](\price[][t] + c \pricediff) &\leq \max_{\good \in \goods, t \in \N_+, c \in [0,1]}  \sum_{\buyer \in \buyers} \frac{\max_{\price \in \simplex[\numgoods]} \hicksian[\buyer][\good](\price, 1) \budget[\buyer]}{(1-c) \expend[\buyer](\price[][t], 1) + c \expend[\buyer](\price[][\iter + 1], 1)}\\
    &\leq \max_{\good \in \goods}  \sum_{\buyer \in \buyers} \frac{\max_{\price \in \simplex[\numgoods]} \hicksian[\buyer][\good](\price, 1) \budget[\buyer]}{\min_{t \in \N_+, c \in [0, 1]} \{(1-c) \expend[\buyer](\price[][t], 1) + c \expend[\buyer](\price[][\iter + 1], 1) \}}\\
    &= \max_{\good \in \goods}  \sum_{\buyer \in \buyers} \frac{\max_{\price \in \simplex[\numgoods]} \hicksian[\buyer][\good](\price, 1) \budget[\buyer]}{\min_{t \in \N_+}\{\min\{\expend[\buyer](\price[][t], 1), \expend[\buyer](\price[][\iter + 1], 1) \}\}}\\
    &= \max_{\good \in \goods}  \sum_{\buyer \in \buyers} \frac{\max_{\price \in \simplex[\numgoods]} \hicksian[\buyer][\good](\price, 1) \budget[\buyer]}{\min_{t \in \N_+} \expend[\buyer](\price[][\iter], 1)}\\
    &= \max_{\good \in \goods}  \sum_{\buyer \in \buyers} \max_{t \in \N_+} \frac{\max_{\price \in \simplex[\numgoods]} \hicksian[\buyer][\good](\price, 1) \budget[\buyer]}{ \expend[\buyer](\price[][\iter], 1)}\\
    &= \max_{\good \in \goods} \sum_{\buyer \in \buyers}  \max_{t \in \N_+} \indirectutil[\buyer](\price[][\iter], \budget[\buyer]) \max_{\price \in \simplex[\numgoods]} \hicksian[\buyer][\good](\price, 1) 
\end{align*}
where the last line follows from Corollary 1, Appendix A of \citet{goktas2022consumer}.
Plugging the above bound into \Cref{eq:intermediate_bound_bregman_smooth_fisher}, we then obtain the following bound which implies the result:
\begin{align*}
    &\potential(\price[][t+1]) - \lapprox[\potential][{\price[][t+1]}][{\price[][t]}] \\
    &\leq \max_{\good \in \goods} \left\{\sum_{\buyer \in \buyers} \max_{t \in \N_+} \indirectutil[\buyer](\price[][\iter], \budget[\buyer])  \max_{\price \in \simplex[\numgoods]} \hicksian[\buyer][\good](\price, 1) \right\} \left(6 + \frac{85 \elastic}{12} + \frac{25\elastic^2 }{72} \right)  \divergence[\mathrm{KL}][{\price[][t] + \pricediff}][{\price[][t]}] 
\end{align*}
}

\end{proof}

\sdeni{}{
For the above lemma to be applied in conjuction with \Cref{theorem-devanur}, we have to ensure that the quantity $\max_{\good \in \goods, t \in \N_+} \left\{\sum_{\buyer \in \buyers} \indirectutil[\buyer](\price[][\iter], \budget[\buyer])  \max_{\price \in \simplex[\numgoods]} \hicksian[\buyer][\good](\price, 1) \right\}$ is bounded throughout entropic \emph{t\^atonnement} for homothetic Fisher markets. To understand the relevance of this bound, we note that this quantity is an upper bound to the aggregate demand, that is:

\begin{align*}
     \demand[\good][\iter] &=\sum_{\buyer \in \buyers} \marshallian[\buyer][\good]^{(\iter)}\\ 
     &= \sum_{\buyer \in \buyers} \hicksian[\buyer][\good](\price[][\iter], \indirectutil[\buyer](\price[][\iter], \budget[\buyer]))\\
     &= \sum_{\buyer \in \buyers} \indirectutil[\buyer](\price[][\iter], \budget[\buyer]) \hicksian[\buyer][\good](\price[][\iter], 1) && \text{(Lemma 5 of \citet{goktas2022consumer})}\\ 
     &\leq \sum_{\buyer \in \buyers} \min_{\iter \in \N_+} \indirectutil[\buyer](\price[][\iter], \budget[\buyer]) \max_{\price \in \simplex[\numgoods]} \hicksian[\buyer][\good](\price, 1)
\end{align*}

As such, proving an upper bound to it implies the excess demand is bounded throughout entropic \emph{t\^atonnement}, which in turn implies Lipschitz-smoothness (and hence Bregman-smoothness for any choice of strongly convex kernel function) of the dual of our convex program over all trajectories of entropic \emph{t\^atonnement}. The following lemma establishes such a bound and shows that it depends on the initial choice of price $\price[][0]$, and the maximum possible Hicksian demand to obtain one unit of utility.
}

\sdeni{}{
\begin{restatable}
[Bounded Indirect Utility for Homothetic Fisher Markets]{lemma}{lemmachomothetic}\label{lemma:c_homothetic}
If entropic \emph{t\^atonnement\/} is run on a homothetic Fisher market $(\util, \budget)$, then, for all $\iter \in \N_+$,  the following bound holds:
\begin{align*}
    % \sum_{\buyer \in \buyers} 
    \indirectutil[\buyer](\price[][\iter], \budget[\buyer]) \max_{\price \in \simplex[\numgoods]} \hicksian[\buyer][\good](\price, 1) \leq \indirectutil[\buyer](\price[][0], \budget[\buyer]) \max_{\price \in \simplex[\numgoods]} \hicksian[\buyer][\good](\price, 1) + 2 \max\limits_{\substack{\price, \otherprice \in \simplex[\numgoods]\\ k \in \goods : \hicksian[\buyer][k](\price, 1) > 0}} \left\{ \frac{\hicksian[\buyer][\good](\otherprice, 1)^2}{\hicksian[\buyer][k](\price, 1)^2} \right\}
\end{align*}
\end{restatable}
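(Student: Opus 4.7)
The plan is to prove the bound by induction on $\iter$. The base case $\iter=0$ holds trivially because the second term on the right-hand side is nonnegative. For the inductive step, I would first exploit the homothetic identity $\indirectutil[\buyer](\price, \budget[\buyer]) = \budget[\buyer]/\expend[\buyer](\price, 1)$ (Lemma~5 of \citet{goktas2022consumer}) to recast the claim as a lower bound on the expenditure $\expend[\buyer](\price[][\iter], 1)$ along the entropic \emph{t\^atonnement\/} trajectory.

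The heart of the argument is a self-regulation mechanism. Suppose at iteration $\iter$ the indirect utility $\indirectutil[\buyer](\price[][\iter], \budget[\buyer])$ is large. Then for every good $k \in \goods$ on which the buyer places strictly positive Hicksian demand, the buyer's Marshallian demand $\marshallian[\buyer][k]^{(\iter)} = \indirectutil[\buyer](\price[][\iter], \budget[\buyer]) \hicksian[\buyer][k](\price[][\iter], 1)$ is large, pushing the excess demand $\excess[k](\price[][\iter])$ positive. The multiplicative update $\price[k][\iter+1] = \price[k][\iter] \exp(\excess[k](\price[][\iter])/\gamma_t)$ then strictly increases the prices of all such goods. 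By Shephard's lemma, $\grad[\price]\expend[\buyer](\price, 1) = \hicksian[\buyer](\price, 1)$, so raising the prices of goods with positive Hicksian demand strictly increases $\expend[\buyer](\price[][\iter+1], 1)$ and correspondingly decreases $\indirectutil[\buyer](\price[][\iter+1], \budget[\buyer])$. The inductive step thus splits into two cases: if $\indirectutil[\buyer](\price[][\iter+1], \budget[\buyer]) \leq \indirectutil[\buyer](\price[][\iter], \budget[\buyer])$, the hypothesis is preserved directly; otherwise, we must quantify the per-step rise.

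Quantifying this per-step rise is the main technical obstacle. I would identify, for each iterate $\price[][\iter]$, a ``pivot'' good $k = k(\price[][\iter]) \in \goods$ with $\hicksian[\buyer][k](\price[][\iter], 1) > 0$, and lower-bound the multiplicative growth of $\expend[\buyer]$ via the price rise on $k$, which is itself controlled by $\excess[k](\price[][\iter]) \geq \indirectutil[\buyer](\price[][\iter], \budget[\buyer]) \hicksian[\buyer][k](\price[][\iter], 1) - 1$. Comparing $\max_{\otherprice \in \simplex[\numgoods]} \hicksian[\buyer][\good](\otherprice, 1)$ against the pivot value $\hicksian[\buyer][k](\price[][\iter], 1)$, and taking the worst case over $(\price, \otherprice, k)$, produces the squared-ratio constant appearing in the bound; the factor of $2$ arises from a standard AM-GM absorption step, analogous to the one in the proof of \Cref{lemma:expend_change}. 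The delicate subtlety is that some coordinates of $\hicksian[\buyer](\price[][\iter], 1)$ may vanish along the trajectory (e.g., under Leontief preferences), which is precisely why the inner maximum on the right-hand side restricts to goods $k$ with $\hicksian[\buyer][k](\price, 1) > 0$; showing that a valid pivot always exists along the trajectory, and that the per-step increments telescope to a bounded total increment, is where the real work lies.
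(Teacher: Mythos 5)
Your overall frame---induction on $\iter$, the trivial base case, the identity $\indirectutil[\buyer](\price, \budget[\buyer]) = \nicefrac{\budget[\buyer]}{\expend[\buyer](\price,1)}$, and the monotonicity intuition that large demand pushes all prices up and hence indirect utility down---matches the paper's proof. But the heart of your inductive step has a genuine gap. You split on whether $\indirectutil[\buyer]$ rose or fell between $\iter$ and $\iter+1$, and in the rising case you propose to bound the per-step rise via a pivot good and then ``telescope to a bounded total increment.'' No mechanism is offered for why these increments sum to anything bounded: rises can a priori recur on infinitely many iterations, so a per-step bound alone gives an estimate that grows with $\iter$, not the uniform constant in the statement. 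The paper never telescopes. It splits instead on a quantitative threshold: whether $\marshallian[\buyer][\good]^{(\iter)}$ exceeds $\max_{\price,\otherprice \in \simplex[\numgoods],\, k:\, \hicksian[\buyer][k](\price,1)>0} \nicefrac{\hicksian[\buyer][\good](\otherprice,1)}{\hicksian[\buyer][k](\price,1)}$. Above the threshold, the identity $\marshallian[\buyer][k]^{(\iter)} = \bigl(\nicefrac{\hicksian[\buyer][k]^{(\iter)}}{\hicksian[\buyer][\good]^{(\iter)}}\bigr)\marshallian[\buyer][\good]^{(\iter)}$ (\Cref{equiv-def-demand}) gives $\marshallian[\buyer][k]^{(\iter)} \geq 1$ for every good, so every excess demand is nonnegative, every price weakly rises, expenditure weakly rises, and the tracked quantity weakly falls---the induction hypothesis is preserved. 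Below the threshold, no accumulation argument is needed because the quantity at $\iter+1$ is bounded by an \emph{absolute} constant: by the entropic update and the choice of step size (\Cref{price-change}), $\excess[k](\price[][\iter]) \geq -1$ forces every price to shrink by at most a factor $e^{-\nicefrac{1}{5}} \geq \nicefrac{1}{2}$ in a single step, so $\expend[\buyer]$ shrinks by at most a factor $2$ and $\indirectutil[\buyer]$ at most doubles; combined with the case hypothesis $\marshallian[\buyer][\good]^{(\iter)} < \max\{\cdot\}$ this yields $\indirectutil[\buyer](\price[][\iter+1], \budget[\buyer])\max_{\price\in\simplex[\numgoods]}\hicksian[\buyer][\good](\price,1) \leq 2\max\{\cdot\}^{2}$, which is at most the stated right-hand side on its own. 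That absolute one-step cap, triggered exactly when the demand is below the Hicksian-ratio threshold, is the missing idea in your plan.

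Two smaller corrections. First, your attribution of the factor of $2$ to ``a standard AM-GM absorption step'' is wrong; it is precisely the $e^{-\nicefrac{1}{5}} \geq \nicefrac{1}{2}$ floor on the one-step price decrease, and the squared ratio comes from multiplying the doubled indirect utility (bounded via the case hypothesis after dividing by the smallest positive Hicksian demand) by $\max_{\otherprice\in\simplex[\numgoods]}\hicksian[\buyer][\good](\otherprice,1)$. Second, your Case-1-style reasoning (``indirect utility large $\Rightarrow$ excess demand positive'') needs the same quantitative threshold to go through: what must be shown is that the buyer's \emph{own} Marshallian demand for every good is at least $1$, so that aggregate demand exceeds the unit supply and \emph{all} prices weakly rise; an unquantified ``large'' does not ensure $\excess[k](\price[][\iter]) \geq 0$, and without it the expenditure need not increase.
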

\begin{proof}[Proof of \Cref{lemma:c_homothetic}]
Fix a buyer $\buyer \in \buyers$ and good $\good \in \goods$.
First, note that since by Lemma 5 of \citet{goktas2022consumer}, since the expenditure function is homogeneous of degree $0$ in prices, we have for all $\good \in \goods$, $\max_{\price \in \R^\numgoods_+ / \{ \zeros \}} \hicksian[\buyer][\good](\price, 1) = \max_{\price \in \simplex[\numgoods]} \hicksian[\buyer][\good](\price, 1)$. In addition, note that $\max_{\price \in \simplex[\numgoods]} \hicksian[\buyer][\good](\price, 1)$ is well-defined since $\simplex[\numgoods]$ is compact, $\hicksian[\buyer][\good](\price, 1)$ exists for all $\price \in \R^\numgoods_+$, and is by Berge's maximum theorem \cite{berge1997topological} continuous in homothetic Fisher markets. Further, by the entropic t\^atonnement update rule for all time-steps $\iter \in \N_+$, and goods $\good \in \goods$, $\price[\good][\iter] > 0$, we then have $\hicksian[\buyer][\good](\price[][t], 1) \leq \max_{\price \in \simplex[\numgoods]} \hicksian[\buyer][\good](\price, 1)$.

We now proceed to prove the claim of the lemma by induction on $\iter$.
\if 0
that for any $\iter \in \N$, $\indirectutil[\buyer](\price[][\iter], \budget[\buyer]) \max_{\price \in \simplex[\numgoods]} \hicksian[\buyer][\good](\price, 1) \leq \indirectutil[\buyer](\price[][0], \budget[\buyer]) \max_{\price \in \simplex[\numgoods]} \hicksian[\buyer][\good](\price, 1) + 2 \max\limits_{\substack{\price, \otherprice \in \simplex[\numgoods]\\ k \in \goods : \hicksian[\buyer][k](\price, 1) > 0}} \left\{ \frac{\hicksian[\buyer][\good](\otherprice, 1)^2}{\hicksian[\buyer][k](\price, 1)^2} \right\}$ by induction. 
\fi 

\paragraph{Base case: $\iter = 0$.}
Since $\max\limits_{\substack{\price, \otherprice \in \simplex[\numgoods]\\ k \in \goods : \hicksian[\buyer][k](\price, 1) > 0}} \left\{ \frac{\hicksian[\buyer][\good](\otherprice, 1)^2}{\hicksian[\buyer][k](\price, 1)^2} \right\} \geq 0$, by definition, we have $$ \indirectutil[\buyer](\price[][0], \budget[\buyer]) \max_{\price \in \simplex[\numgoods]} \hicksian[\buyer][\good](\price, 1) \leq \indirectutil[\buyer](\price[][0], \budget[\buyer]) \max_{\price \in \simplex[\numgoods]} \hicksian[\buyer][\good](\price, 1) +  2 \max\limits_{\substack{\price, \otherprice \in \simplex[\numgoods]\\ k \in \goods : \hicksian[\buyer][k](\price, 1) > 0}} \left\{ \frac{\hicksian[\buyer][\good](\otherprice, 1)^2}{\hicksian[\buyer][k](\price, 1)^2} \right\} \enspace .$$

\paragraph{Inductive hypothesis.}

Suppose that for any $\iter \in \N$, we have: 
$$\indirectutil[\buyer](\price[][\iter], \budget[\buyer]) \max_{\price \in \simplex[\numgoods]} \hicksian[\buyer][\good](\price, 1) \leq \indirectutil[\buyer](\price[][0], \budget[\buyer]) \max_{\price \in \simplex[\numgoods]} \hicksian[\buyer][\good](\price, 1) + 2 \max\limits_{\substack{\price, \otherprice \in \simplex[\numgoods]\\ k \in \goods : \hicksian[\buyer][k](\price, 1) > 0}} \left\{ \frac{\hicksian[\buyer][\good](\otherprice, 1)^2}{\hicksian[\buyer][k](\price, 1)^2} \right\}$$

\paragraph{Inductive step.}

We will show that the inductive hypothesis holds for $\iter + 1$.
We proceed with a proof by cases.

\paragraph{Case 1:} $\marshallian[\buyer][\good]^{(\iter)} \geq   \max\limits_{\substack{\price, \otherprice \in \simplex[\numgoods]\\ k \in \goods : \hicksian[\buyer][k](\price, 1) > 0}} \left\{ \frac{\hicksian[\buyer][\good](\otherprice, 1)}{\hicksian[\buyer][k](\price, 1)} \right\}$. 

For all $k \in \goods$, we have:
% Suppose that $\marshallian[\buyer][\good]^{(\iter)} \geq   \max_{\price\in \simplex[\numgoods]} \max_{k \in \goods : \hicksian[\buyer][k](\price, 1) > 0} \left\{ \frac{\hicksian[\buyer][\good](\price, 1)}{\hicksian[\buyer][k](\price, 1)} \right\}$. Then, for all $k \in \goods$:
\begin{align*}
    \marshallian[\buyer][k]^{(\iter)} &= \frac{\hicksian[\buyer][k]^{(\iter)}}{\hicksian[\buyer][\good][\iter]} \marshallian[\buyer][\good]^{(\iter)} && \text{(\Cref{equiv-def-demand}, \Cref{sec_ap:proofs})}\\
    &\geq \frac{\hicksian[\buyer][k]^{(\iter)}}{\hicksian[\buyer][\good][\iter]} \max_{\substack{\price, \otherprice \in \simplex[\numgoods]\\ k \in \goods : \hicksian[\buyer][k](\price, 1) > 0}} \left\{ \frac{\hicksian[\buyer][\good](\otherprice, 1)}{\hicksian[\buyer][k](\price, 1)} \right\}\\
    &\geq \frac{\hicksian[\buyer][k]^{(\iter)}}{\hicksian[\buyer][\good][\iter]} \frac{\hicksian[\buyer][\good][\iter]}{\hicksian[\buyer][k]^{(\iter)}} = 1
\end{align*}
where the penultimate line follows from the case hypothesis.

The above means that the price of all goods will increase in the next time period, i.e., $\forall k \in \goods, \price[k][\iter+1] \geq \price[k][\iter]$ which implies that $\expend[\buyer](\price[][\iter+1], 1) \geq \expend[\buyer](\price[][\iter], 1) \geq 0$. In addition, note that the expenditure is positive since prices reach 0 only asymptotically under entropic \emph{t\^atonnement}. Which gives us:
\begin{align*}
    \frac{\budget[\buyer]}{\expend[\buyer](\price[][\iter+1], 1)} &\leq \frac{\budget[\buyer]}{\expend[\buyer](\price[][\iter], 1)}\\
    %\frac{\budget[\buyer]\hicksian[\buyer][\good]^{(\iter+1)}}{\expend[\buyer](\price[][\iter+1], 1)} &\leq \frac{\budget[\buyer] \hicksian[\buyer][\good]^{(\iter+1)}}{\expend[\buyer](\price[][\iter], 1)}\\
    % \marshallian[\buyer][\good]^{(\iter+1)} &\leq \frac{\budget[\buyer] \hicksian[\buyer][\good]^{(\iter+1)}}{\expend[\buyer](\price[][\iter], 1)}\\
    \indirectutil[\buyer](\price[][\iter + 1 ], \budget[\buyer]) &\leq  \indirectutil[\buyer](\price[][\iter], \budget[\buyer]) && \text{(Corollary 1 of \citet{goktas2022consumer})}
\end{align*}

Multiplying both sides by $\max_{\price \in \simplex[\numgoods]} \hicksian[\buyer][\good](\price, 1)$, we have for all $\good \in \goods$:
\begin{align*}
     \indirectutil[\buyer](\price[][\iter + 1 ], \budget[\buyer]) \max_{\price \in \simplex[\numgoods]} \hicksian[\buyer][\good](\price, 1) &\leq  \indirectutil[\buyer](\price[][\iter], \budget[\buyer]) \max_{\price \in \simplex[\numgoods]} \hicksian[\buyer][\good](\price, 1)\\
    &= \indirectutil[\buyer](\price[][0], \budget[\buyer]) \max_{\price \in \simplex[\numgoods]} \hicksian[\buyer][\good](\price, 1) + 2 \max_{\substack{\price, \otherprice \in \simplex[\numgoods]\\ k \in \goods : \hicksian[\buyer][k](\price, 1) > 0}} \left\{ \frac{\hicksian[\buyer][\good](\otherprice, 1)^2}{\hicksian[\buyer][k](\price, 1)^2} \right\}
\end{align*}

where the last line follows by the induction hypothesis. 

\paragraph{Case 2:} $\marshallian[\buyer][\good]^{(\iter)} <   \max_{\substack{\price, \otherprice \in \simplex[\numgoods]\\ k \in \goods : \hicksian[\buyer][k](\price, 1) > 0}} \left\{ \frac{\hicksian[\buyer][\good](\otherprice, 1)}{\hicksian[\buyer][k](\price, 1)} \right\}$. 

For all $k \in \goods$, we have:
% Suppose that $\marshallian[\buyer][\good]^{(\iter)} \geq   \max_{\price\in \simplex[\numgoods]} \max_{k \in \goods : \hicksian[\buyer][k](\price, 1) > 0} \left\{ \frac{\hicksian[\buyer][\good](\price, 1)}{\hicksian[\buyer][k](\price, 1)} \right\}$. Then, for all $k \in \goods$:
\begin{align*}
    \marshallian[\buyer][k]^{(\iter)} &= \frac{\hicksian[\buyer][k]^{(\iter)}}{\hicksian[\buyer][\good][\iter]} \marshallian[\buyer][\good]^{(\iter)}\\
    &\leq \frac{\hicksian[\buyer][k]^{(\iter)}}{\hicksian[\buyer][\good][\iter]} \max\limits_{\substack{\price, \otherprice \in \simplex[\numgoods]\\ k \in \goods : \hicksian[\buyer][k](\price, 1) > 0}} \left\{ \frac{\hicksian[\buyer][\good](\otherprice, 1)}{\hicksian[\buyer][k](\price, 1)} \right\}\\
    &= \frac{\hicksian[\buyer][k]^{(\iter)}}{\hicksian[\buyer][\good][\iter]} \frac{\hicksian[\buyer][\good][\iter]}{\hicksian[\buyer][k]^{(\iter)}} = 1
\end{align*}
where the penultimate line follows from the case hypothesis.

The above means that prices of all goods will decrease in the next time period. Now, note that regardless of the aggregate demand $\demand[][\iter]$ at time $\iter \in \N$, prices can decrease at most by a factor of $e^{-\frac{1}{5}} \geq \nicefrac{1}{2}$, that is, for all $\good \in \goods$
\begin{align*}
    \price[\good][\iter+1] &= \price[\good][\iter]\exp\left\{\frac{\excess[\good](\price[][\iter])}{\gamma}\right\}\\
    &= \price[\good][\iter] \exp\left\{\frac{\demand[\good][\iter] - 1}{\gamma}\right\} \\
    &\geq \price[\good][\iter] \exp\left\{\frac{- 1}{\gamma}\right\} \\
    &\geq \price[\good][\iter] \exp\left\{\frac{-1}{5 \max\limits_{\substack{t \in \N \\ \good \in \goods}} \{1, \demand[\good][\iter]\}}\right\}\\
    &\geq \price[\good][\iter] \exp\left\{\frac{-1}{5}\right\}\\
    &\geq \price[\good][\iter] e^{-\frac{1}{5}} \geq \frac{1}{2} \price[\good][\iter] 
\end{align*}

Now, notice that we have $\expend[\buyer](\price[][\iter+1], 1) \geq \expend[\buyer](\frac{1}{2} \price[][\iter], 1) = \frac{1}{2} \expend[\buyer](\price[][\iter], 1) \geq 0$, since the expenditure of the buyer decreases the most when the prices of all goods decrease simultaneously and the expenditure function is homogeneous of degree $1$ in prices. In addition, note that the expenditure is positive since prices reach 0 only asymptotically under entropic \emph{t\^atonnement}. Hence, we have:

\begin{align*}
    % \frac{\budget[\buyer]}{\expend[\buyer](\price[][\iter+1], 1)} &\leq \frac{\budget[\buyer]}{\expend[\buyer](\nicefrac{1}{2}\price[][\iter], 1)}\\
    \frac{\budget[\buyer]}{\expend[\buyer](\price[][\iter+1], 1)} &\leq 2\frac{\budget[\buyer]}{\expend[\buyer](\price[][\iter], 1)}\\
    % \indirectutil[\buyer](\price[][\iter + 1 ])  &\leq  \indirectutil[\buyer](\nicefrac{3}{4}\price[][\iter], \budget[\buyer]) \\
    \indirectutil[\buyer](\price[][\iter + 1 ], \budget[\buyer])  &\leq  2\indirectutil[\buyer](\price[][\iter], \budget[\buyer]) && \text{(Corollary 1 of \citet{goktas2022consumer})}
    \end{align*}

Multiplying both sides by $\hicksian[\buyer][\good][\iter]$, and applying \Cref{equiv-def-demand}, we have for all $\good \in \goods$:
    \begin{align*}
    \indirectutil[\buyer](\price[][\iter + 1 ], \budget[\buyer])  \hicksian[\buyer][\good][\iter] &\leq  2\marshallian[\buyer][\good]^{(\iter)}\\
    \indirectutil[\buyer](\price[][\iter + 1 ]) \hicksian[\buyer][\good][\iter] &\leq  2\max_{\substack{\price, \otherprice \in \simplex[\numgoods]\\ k \in \goods : \hicksian[\buyer][k](\price, 1) > 0}} \left\{ \frac{\hicksian[\buyer][\good](\otherprice, 1)}{\hicksian[\buyer][k](\price, 1)} \right\} && \text{(Case hypothesis)}
    \end{align*}

Now, taking a minimum over all $\good \in \goods$ s.t. $\hicksian[\buyer][\good][\iter] > 0$, we have
    \begin{align*}
    \indirectutil[\buyer](\price[][\iter + 1 ]) \min_{k \in \goods: \hicksian[\buyer][k]^{(\iter)} > 0} \hicksian[\buyer][k]^{(\iter)} &\leq  2\max_{\substack{\price, \otherprice \in \simplex[\numgoods]\\ k \in \goods : \hicksian[\buyer][k](\price, 1) > 0}} \left\{ \frac{\hicksian[\buyer][\good](\otherprice, 1)}{\hicksian[\buyer][k](\price, 1)} \right\}\\
    \indirectutil[\buyer](\price[][\iter + 1 ]) \min_{\substack{\price \in \simplex[\numgoods]\\k \in \goods: \hicksian[\buyer][k](\price, 1) > 0}} \hicksian[\buyer][k](\price, 1) &\leq  2\max_{\substack{\price, \otherprice \in \simplex[\numgoods]\\ k \in \goods : \hicksian[\buyer][k](\price, 1) > 0}} \left\{ \frac{\hicksian[\buyer][\good](\otherprice, 1)}{\hicksian[\buyer][k](\price, 1)} \right\}\\
    \indirectutil[\buyer](\price[][\iter + 1 ])  &\leq  \frac{2}{\min\limits_{\substack{\price \in \simplex[\numgoods]\\k \in \goods: \hicksian[\buyer][k](\price, 1) > 0}} \hicksian[\buyer][k](\price, 1)} \max_{\substack{\price, \otherprice \in \simplex[\numgoods]\\ k \in \goods : \hicksian[\buyer][k](\price, 1) > 0}}  \left\{ \frac{\hicksian[\buyer][\good](\otherprice, 1)}{\hicksian[\buyer][k](\price, 1)} \right\} 
\end{align*}

Finally, multiplying both sides by $\max_{\otherprice \in \simplex[\numgoods]} \hicksian[\buyer][\good](\otherprice, 1)$, we have:
\begin{align*}
    \indirectutil[\buyer](\price[][\iter + 1 ]) \max_{\otherprice \in \simplex[\numgoods]} \hicksian[\buyer][\good](\otherprice, 1) &\leq  2\frac{\max_{\otherprice \in \simplex[\numgoods]} \hicksian[\buyer][\good](\otherprice, 1)}{\min\limits_{\substack{\price \in \simplex[\numgoods]\\k \in \goods: \hicksian[\buyer][k](\price, 1) > 0}} \hicksian[\buyer][k](\price, 1)} \max_{\substack{\price, \otherprice \in \simplex[\numgoods]\\ k \in \goods : \hicksian[\buyer][k](\price, 1) > 0}} \left\{ \frac{\hicksian[\buyer][\good](\otherprice, 1)}{\hicksian[\buyer][k](\price, 1)} \right\} \\
    % \indirectutil[\buyer](\price[][\iter + 1 ])  \max_{\otherprice \in \simplex[\numgoods]} \hicksian[\buyer][\good](\otherprice, 1) &\leq  \max_{\substack{\price, \otherprice \in \simplex[\numgoods]\\ k \in \goods : \hicksian[\buyer][k](\price, 1) > 0}} \left\{ \frac{\hicksian[\buyer][\good](\otherprice, 1)}{\hicksian[\buyer][k](\price, 1)} \right\} \frac{\max_{\price \in \simplex[\numgoods]} \hicksian[\buyer][\good](\price, 1)}{\hicksian[\buyer][\good][\iter]}\\
    &=  2\left(\max_{\substack{\price, \otherprice \in \simplex[\numgoods]\\ k \in \goods : \hicksian[\buyer][k](\price, 1) > 0}} \left\{ \frac{\hicksian[\buyer][\good](\otherprice, 1)}{\hicksian[\buyer][k](\price, 1)} \right\} \right) \left(\max_{\substack{\price, \otherprice \in \simplex[\numgoods]\\ k \in \goods : \hicksian[\buyer][k](\price, 1) > 0}} \left\{ \frac{\hicksian[\buyer][\good](\otherprice, 1)}{\hicksian[\buyer][k](\price, 1)} \right\} \right) \\
    &=  2\max_{\substack{\price, \otherprice \in \simplex[\numgoods]\\ k \in \goods : \hicksian[\buyer][k](\price, 1) > 0}} \left\{ \frac{\hicksian[\buyer][\good](\otherprice, 1)}{\hicksian[\buyer][k](\price, 1)} \right\}^2 \\
    &\leq \indirectutil[\buyer](\price[][0], \budget[\buyer]) \max_{\price \in \simplex[\numgoods]} \hicksian[\buyer][\good](\price, 1) + 2 \max\limits_{\substack{\price, \otherprice \in \simplex[\numgoods]\\ k \in \goods : \hicksian[\buyer][k](\price, 1) > 0}} \left\{ \frac{\hicksian[\buyer][\good](\otherprice, 1)^2}{\hicksian[\buyer][k](\price, 1)^2} \right\}
\end{align*}

\if 0
\deni{OLD}
Now, taking a minimum over all $\good \in \goods$ s.t. $\hicksian[\buyer][\good][\iter] > 0$, we have
    \begin{align*}
    \indirectutil[\buyer](\price[][\iter + 1 ]) \min_{k \in \goods: \hicksian[\buyer][k]^{(\iter)} > 0} \hicksian[\buyer][k]^{(\iter)} &\leq  2\max_{\substack{\price, \otherprice \in \simplex[\numgoods]\\ k \in \goods : \hicksian[\buyer][k](\price, 1) > 0}} \left\{ \frac{\hicksian[\buyer][\good](\otherprice, 1)}{\hicksian[\buyer][k](\price, 1)} \right\}\\
    \indirectutil[\buyer](\price[][\iter + 1 ]) \max_{\substack{\price \in \simplex[\numgoods]\\k \in \goods: \hicksian[\buyer][k](\price, 1) > 0}} \hicksian[\buyer][k](\price, 1) &\leq  2\max_{\substack{\price, \otherprice \in \simplex[\numgoods]\\ k \in \goods : \hicksian[\buyer][k](\price, 1) > 0}} \left\{ \frac{\hicksian[\buyer][\good](\otherprice, 1)}{\hicksian[\buyer][k](\price, 1)} \right\}\\
    \indirectutil[\buyer](\price[][\iter + 1 ])  &\leq  \frac{2}{\max\limits_{\substack{\price \in \simplex[\numgoods]\\k \in \goods: \hicksian[\buyer][k](\price, 1) > 0}} \hicksian[\buyer][k](\price, 1)} \max\limits_{\price, \otherprice } \min_{k \in \goods : \hicksian[\buyer][k](\price, 1) > 0} \left\{ \frac{\hicksian[\buyer][\good](\otherprice, 1)}{\hicksian[\buyer][k](\price, 1)} \right\} 
\end{align*}

Finally, multiplying both sides by $\max_{\otherprice \in \simplex[\numgoods]} \hicksian[\buyer][\good](\otherprice, 1)$, we have:
\begin{align*}
    \indirectutil[\buyer](\price[][\iter + 1 ]) \max_{\otherprice \in \simplex[\numgoods]} \hicksian[\buyer][\good](\otherprice, 1) &\leq  2\frac{\max_{\otherprice \in \simplex[\numgoods]} \hicksian[\buyer][\good](\otherprice, 1)}{\max\limits_{\substack{\price \in \simplex[\numgoods]\\k \in \goods: \hicksian[\buyer][k](\price, 1) > 0}} \hicksian[\buyer][k](\price, 1)} \max\limits_{\price, \otherprice } \max_{k \in \goods : \hicksian[\buyer][k](\price, 1) > 0} \left\{ \frac{\hicksian[\buyer][\good](\otherprice, 1)}{\hicksian[\buyer][k](\price, 1)} \right\} \\
    % \indirectutil[\buyer](\price[][\iter + 1 ])  \max_{\otherprice \in \simplex[\numgoods]} \hicksian[\buyer][\good](\otherprice, 1) &\leq  \max_{\substack{\price, \otherprice \in \simplex[\numgoods]\\ k \in \goods : \hicksian[\buyer][k](\price, 1) > 0}} \left\{ \frac{\hicksian[\buyer][\good](\otherprice, 1)}{\hicksian[\buyer][k](\price, 1)} \right\} \frac{\max_{\price \in \simplex[\numgoods]} \hicksian[\buyer][\good](\price, 1)}{\hicksian[\buyer][\good][\iter]}\\
    \indirectutil[\buyer](\price[][\iter + 1 ])  \max_{\otherprice \in \simplex[\numgoods]} \hicksian[\buyer][\good](\otherprice, 1) &\leq  2\max_{\substack{\price, \otherprice \in \simplex[\numgoods]\\ k \in \goods : \hicksian[\buyer][k](\price, 1) > 0}} \left\{ \frac{\hicksian[\buyer][\good](\otherprice, 1)}{\hicksian[\buyer][k](\price, 1)} \right\}^2 \\
    \indirectutil[\buyer](\price[][\iter + 1 ])  \max_{\price \in \simplex[\numgoods]} \hicksian[\buyer][\good](\price, 1) &\leq \indirectutil[\buyer](\price[][0], \budget[\buyer]) \max_{\price \in \simplex[\numgoods]} \hicksian[\buyer][\good](\price, 1) + 2 \max\limits_{\substack{\price, \otherprice \in \simplex[\numgoods]\\ k \in \goods : \hicksian[\buyer][k](\price, 1) > 0}} \left\{ \frac{\hicksian[\buyer][\good](\otherprice, 1)^2}{\hicksian[\buyer][k](\price, 1)^2} \right\}
\end{align*}
\fi

Hence, the inductive hypothesis holds for $\iter +1$. Putting it all together, we have, for all $\iter \in \N$:
\begin{align*}
    \indirectutil[\buyer](\price[][\iter], \budget[\buyer]) \max_{\price \in \simplex[\numgoods]} \hicksian[\buyer][\good](\price, 1) \leq \indirectutil[\buyer](\price[][0], \budget[\buyer]) \max_{\price \in \simplex[\numgoods]} \hicksian[\buyer][\good](\price, 1) + 2  \max\limits_{\substack{\price, \otherprice \in \simplex[\numgoods]\\ k \in \goods : \hicksian[\buyer][k](\price, 1) > 0}} \left\{ \frac{\hicksian[\buyer][\good](\otherprice, 1)^2}{\hicksian[\buyer][k](\price, 1)^2} \right\}
\end{align*}

\end{proof}
}

Combining \Cref{ineq-devanur}, and \Cref{lemma:c_homothetic} with \Cref{theorem-devanur}, we obtain our main result, namely a worst-case convergence rate of $O(\nicefrac{(1 + \elastic^2)}{t})$ for entropic \emph{t\^atonnement\/} in homothetic Fisher markets.
% \deni{Mention the business abut bounds.}

\begin{theorem}
\label{main-convergence-thm}
Suppose $(\util, \budget)$ is a homothetic Fisher market and $\elastic = \max_{\price \in \simplex[\numgoods],  \good, k \in \goods} \left| \elastic[{\hicksian[\buyer][\good]}][{\price[k]}](\price, 1) \right|$. Then, the following holds for entropic \emph{t\^atonnement}:
for all $t \in \N$,
\begin{align}
    \potential(\price[][t]) - \potential(\price^*) \leq \frac{\gamma \divergence[\mathrm{KL}][\price^*][\price^0]}{t}
    \enspace ,
\end{align}
where $\gamma = \left(1 + \max\limits_{\good \in \goods} \sum_{\buyer \in \buyers} \left[
\indirectutil[\buyer](\price[][0], \budget[\buyer]) \max\limits_{\otherprice \in \simplex[\numgoods]} \hicksian[\buyer][\good](\otherprice, 1)  + 2 \max\limits_{\substack{\price, \otherprice \in \simplex[\numgoods]\\ k \in \goods : \hicksian[\buyer][k](\price, 1) > 0}} \left\{ \frac{\hicksian[\buyer][\good](\otherprice, 1)^2}{\hicksian[\buyer][k](\price, 1)^2} \right\} \right] \right) \left(6 + \frac{85 \elastic}{12} + \frac{25\elastic^2 }{72} \right) 
% \divergence[\mathrm{KL}][{\price[][t] + \pricediff}][{\price[][t]}]
% \sum_{l \in \goods} \left(\frac{9}{2} \right) \divergence[\mathrm{KL}][{\price[\good]^{(\iter)} + \pricediff[\good]}][{\price[\good]^{(\iter)}}]
$. 
% \sadie{$\left(6 + \frac{85 \elastic}{12} + \frac{25\elastic^2 }{72} \right)$? Is this constant same as the constant in the previous lemma?}\deni{Thank you!!!}
\end{theorem}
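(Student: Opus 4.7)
The plan is to combine the three ingredients already assembled in the excerpt: the mirror-descent convergence theorem (\Cref{theorem-devanur}), the per-step Bregman-smoothness estimate (\Cref{ineq-devanur}), and the a priori bound on indirect utility along the trajectory (\Cref{lemma:c_homothetic}). First, I would observe that by \Cref{excess-demand} the dual potential $\potential$ is convex with $-\excess(\price) \in \subdiff[\price] \potential(\price)$, so entropic \emph{t\^atonnement} with fixed stepsize $\gamma$ is exactly mirror descent on $\potential$ with kernel $h(\x) = \sum_{\good \in \goods} (x_\good \log x_\good - x_\good)$, whose induced Bregman divergence is $\divergence[\mathrm{KL}]$. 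This identifies the algorithmic object we are analyzing with an instance of the update rule \Crefrange{exp:subgradient-descent}{exp:subgradient-descent-init}, so that \Cref{theorem-devanur} is applicable as soon as its Bregman-smoothness hypothesis is verified.

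Next, I would invoke \Cref{ineq-devanur} to obtain, for every $t \in \N$, the one-step inequality $\potential(\price[][t+1]) \leq \lapprox[\potential][{\price[][t+1]}][{\price[][t]}] + \gamma_t \divergence[\mathrm{KL}][{\price[][t+1]}][{\price[][t]}]$, where $\gamma_t$ depends on the quantity $\max_{\good} \sum_{\buyer} \max_{\iter \in \N_+} \indirectutil[\buyer](\price[][\iter], \budget[\buyer]) \max_{\price \in \simplex[\numgoods]} \hicksian[\buyer][\good](\price, 1)$. As written, this depends on the entire future trajectory and cannot be used as the fixed stepsize required by \Cref{theorem-devanur}. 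The role of \Cref{lemma:c_homothetic} is precisely to replace this trajectory-dependent quantity by an iteration-independent upper bound: for every $\buyer$ and $\good$,
\begin{align*}
\indirectutil[\buyer](\price[][\iter], \budget[\buyer]) \max_{\price \in \simplex[\numgoods]} \hicksian[\buyer][\good](\price, 1) \leq \indirectutil[\buyer](\price[][0], \budget[\buyer]) \max_{\otherprice \in \simplex[\numgoods]} \hicksian[\buyer][\good](\otherprice, 1) + 2 \max_{\substack{\price, \otherprice \in \simplex[\numgoods]\\ k \in \goods : \hicksian[\buyer][k](\price, 1) > 0}} \frac{\hicksian[\buyer][\good](\otherprice, 1)^2}{\hicksian[\buyer][k](\price, 1)^2}.
\end{align*}
Plugging this into the formula for $\gamma_t$ yields the iteration-independent constant $\gamma$ displayed in the theorem, and with this choice the Bregman-smoothness inequality holds uniformly along the trajectory.

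Finally, with $f = \potential$, $h$ the negative entropy kernel, and the uniform constant $\gamma$ just derived, I would apply \Cref{theorem-devanur} directly. Its conclusion is exactly $\potential(\price[][t]) - \potential(\price^*) \leq \nicefrac{\gamma}{t}\, \divergence[\mathrm{KL}][\price^*][\price^0]$, where $\price^*$ minimizes $\potential$ over the feasible set and hence is a competitive-equilibrium price vector by \Cref{new-convex}. The main conceptual obstacle has already been surmounted in \Cref{ineq-devanur} and \Cref{lemma:c_homothetic}; the remaining task is purely bookkeeping, namely checking that the two lemmas are stated with respect to the same KL divergence, the same update rule, and the same constant $\gamma$ as \Cref{theorem-devanur}, and that $\gamma \geq 6$ so that the price-change hypothesis $|\pricediff[\good]|/\price[\good][\iter] \leq \nicefrac{1}{4}$ invoked inside \Cref{ineq-devanur} is preserved under the final, larger choice of $\gamma$.
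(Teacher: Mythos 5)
Your proposal is correct and follows essentially the same route as the paper, whose proof of this theorem is exactly the one-line combination of \Cref{ineq-devanur} (per-step Bregman-smoothness of $\potential$ w.r.t.\ the KL divergence), \Cref{lemma:c_homothetic} (replacing the trajectory-dependent factor in $\gamma$ by the iteration-independent bound in terms of $\price[][0]$), and \Cref{theorem-devanur}. Your additional bookkeeping remarks (that $\gamma \geq 6$ preserves the $\nicefrac{1}{4}$ relative price-change condition, and that the kernel/divergence match) are consistent with how the paper handles these points inside the two lemmas.
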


% \begin{proof}[Proof of \Cref{main-convergence-thm}]
    
% \end{proof}

\sdeni{Note, however, that for 
\Cref{main-convergence-thm} to hold, the step size $\gamma$ needs to be known in advance, so that it can be set at the start of the entropic \emph{t\^atonnement\/} process.
In other words, we need to upper bound the aggregate demand for all goods all throughout t\^atonnement.
%A naive upper bound for all homothetic Fisher markets can be derived as follows.
If the number of iterations $T$ for which \emph{t\^atonnement\/} is to be run were known at the outset, then one could use a naive bound of $\max_{t \in \N} \demand[\good]^{(\iter)} \leq e^{\nicefrac{T}{5}} \frac{\sum_{\buyer \in \buyers} \budget[\buyer]}{\min_{\good \in \goods} \price[\good]^0}$.
This bound follows by induction, since for the first iteration the prices of the goods cannot change by more than $\left|\nicefrac{\pricediff[\good]}{\price[\good]^{(\iter)}} \right| \leq \frac{1}{4}$, and then, for all other iterations, the price for a good can decrease at most by a factor of $e^{-\nicefrac{1}{5}}$ (\Cref{price-change}), ensuring that the learning rate is greater than the aggregate demand in the next iteration.
This time-dependent upper bound could then be combined with the doubling trick to extend the convergence result to an unknown time horizon \cite{auer1995gambling}.
In addition, for markets with homothetic \sdeni{}{net} complements, we provide the following bound, which holds even when the \emph{t\^atonnement} time horizon is not known. 
We note that if better upper bounds \amy{on what?} are found, they can be used to improve our convergence result \amy{what does ``our'' convergence result mean? Lemma 4.3? or something else?} by a constant factor.}{}

\if 0
\begin{restatable}
[Step Size for Markets with Homothetic \sdeni{}{Net} Complements]{lemma}{lemmachomothetic}\label{lemma:c_homothetic}
If \emph{t\^atonnement\/} is run on a \sdeni{}{net} complements Fisher market $(\util, \budget)$, then, for all $\iter \in \N_+$, prices satisfy $\left|\frac{\pricediff[\good]}{\price[\good][\iter]} \right| \leq \frac{1}{4}$ if and only if $\gamma \geq \max \left\{\marshallian[\buyer][\good][0], \max_{\price\in \simplex[\numgoods]} \max_{k \in \goods : \hicksian[\buyer][k](\price, 1) > 0} \left\{ \frac{\hicksian[\buyer][\good](\price, 1)}{\hicksian[\buyer][k](\price, 1)} \right\}  \right\}$.
\end{restatable}
\fi

\section{Conclusion}

We identified the maximum absolute value of the Hicksian price elasticity of demand as a sufficient parameter by which to analyze convergent and non-convergent behavior of \emph{t\^atonnement\/} in homothetic Fisher markets. 
We then showed that together with the KL divergence associated with a change in prices, we can use it to bound the percentage change in the expenditure of one unit of utility, assuming bounded price changes.
This observation motivated us to consider analyzing the convergence of mirror descent with KL divergence on a recently proposed \cite{goktas2022consumer} convex potential, making use of the expenditure function to characterize competitive equilibrium prices in homothetic Fisher markets.
An important property of this convex potential is that its gradient is equal to the negative excess demand in the market, implying that mirror descent on it is equivalent to \emph{t\^atonnement}, an observation used to prove previous convergence results regarding \emph{t\^atonnement\/} in Fisher markets \cite{fisher-tatonnement}.
Using the bound we derived on the change in the expenditure function as a function of the change in prices, we then showed that the potential function we considered is Bregman-smooth w.r.t.\@ the KL divergence throughout a trajectory of \emph{t\^atonnement}.
Combining this result with the sublinear convergence rate of mirror descent for Bregman-smooth functions \cite{grad-prop-response}, we concluded that \emph{t\^atonnement\/} converges at a rate of $O(\nicefrac{(1+ \elastic^2)}{T})$, where $\elastic$ is the maximum absolute value of the price elasticity of Hicksian demand across all buyers.
Our result not only generalizes existing convergence results for CES and nested CES Fisher markets, but extends them beyond Fisher markets with concave utility functions.
Our convergence rate covers the full spectrum of (nested) CES utilities, including Leontief and linear utilities, unifying previously existing disparate convergence and non-convergence results.
In particular, for $\elastic = 0$, i.e., Leontief markets, we recover the best-known convergence rate of $O(\nicefrac{1}{T})$ \cite{fisher-tatonnement}, and as $\elastic \to \infty$, i.e., linear Fisher markets, we obtain the non-convergent behavior of \emph{t\^atonnement}.

Future work could investigate the space of homogeneous utility functions with negative cross-price elasticity of Hicksian demand to possibly derive faster convergence rates than those provided in this paper.
Additionally, it remains to be seen if the bound we have provided in this paper is tight; the greatest lower bound known for the convergence of \emph{t\^atonnement\/} in homothetic Fisher markets is $O(\nicefrac{1}{t^2})$ for Leontief markets \cite{cheung2012tatonnement}, leaving space for improvement.
Finally, \Cref{lemma:homo_elasticity} suggests that to extend convergence results for \emph{t\^atonnement\/} beyond homothetic domains, one might have to consider the Hicksian demand elasticity \emph{w.r.t.\ utility level} rather than price.

\section*{Acknowledgments}
We would like to thank Richard Cole, Yun Kuen Cheung, and Yixin Tao for very carefully reading an earlier version of this paper and providing invaluable feedback. Their efforts helped us refine the results in the paper.
Denizalp Goktas was supported by a JP Morgan AI Fellowship. 
%This work was also supported by NSF Grant CMMI-1761546.

%Bibliography
\bibliographystyle{plainnat}  
\bibliography{references}  

\newpage
\appendix
\section{Ommited Results and Proofs}\label{sec_ap:proofs}

We start by presenting the first lemma, which shows that the utility level elasticity of Hicksian demand is equal to 1 in homothetic Fisher markets.

\lemmahomoelasticity*
\begin{proof}[Proof of \Cref{lemma:homo_elasticity}]
Recall from \citeauthor{goktas2022consumer} \cite{goktas2022consumer} that for homogeneous utility functions, the Hicksian demand is homogeneous in $\goalutil$, i.e., for all $\lambda \geq 0$, $\hicksian[\buyer](\price, \lambda\goalutil) = \lambda\hicksian[\buyer](\price, \goalutil)$.
Hence, we have:
\begin{align}
    \elastic[{\hicksian[\buyer][\good]}][{\price[k]}](\price, \goalutil[\buyer]) &= \subdiff[{\price[k]}] \hicksian[\buyer][\good](\price, \goalutil[\buyer]) \frac{\price[k]}{\hicksian[\buyer][\good](\price, \goalutil[\buyer])}\\
    &= \goalutil[\buyer] \subdiff[{\price[k]}]  \hicksian[\buyer][\good](\price, 1) \frac{\price[k]}{\goalutil[\buyer] \hicksian[\buyer][\good](\price, 1)} && \text{(Homogeneity of Hicksian demand)}\\
    &= \frac{\price[k]}{\hicksian[\buyer][\good](\price, 1)}\subdiff[{\price[k]}]  \hicksian[\buyer][\good](\price, 1)\\
    &= \frac{\price[k]}{\hicksian[\buyer][\good](\price, 1)}\subdiff[{\price[k]}]  \hicksian[\buyer][\good](\price, 1)\label{eq:hicksian_income}\\
    &= \elastic[{\hicksian[\buyer][\good]}][{\price[k]}](\price, 1)
\end{align}
Additionally, looking back at \Cref{eq:hicksian_income}, since Hicksian demand is homogeneous of degree 1 for homogeneous utility function (see Lemma 5 of \citet{goktas2022consumer}), by Euler's theorem for homogeneous functions \cite{border2017euler}, we have:
% \begin{align}
$
    \frac{\price[k]}{\hicksian[\buyer][\good](\price, 1)}\subdiff[{\price[k]}]  \hicksian[\buyer][\good](\price, 1) = \frac{\hicksian[\buyer][\good](\price, 1)}{\hicksian[\buyer][\good](\price, 1)} = 1 \enspace .
$
% \end{align}
\end{proof}

We recall Shephard's lemma which was used in \Cref{eq:expend_change}:
\begin{restatable}[Shephard's lemma \cite{duality-economics, shephard, generalized-shephard}]{lemma}{shepherd}\label{shephard}
Let $\expend[\buyer](\price, \goalutil[\buyer])$ be the expenditure function of buyer $\buyer$ and $\hicksian[\buyer](\price, \goalutil[\buyer])$ be the Hicksian demand set of buyer $\buyer$.
The subdifferential $\subdiff[{\price}] \expend[\buyer](\price, \goalutil[\buyer])$ is the Hicksian demand at prices $\price$ and utility level $\goalutil[\buyer]$, i.e., $\subdiff[\price] \expend[\buyer](\price, \goalutil[\buyer]) = \hicksian[\buyer](\price, \goalutil[\buyer])$.
%That is, $\bm{\chi}_\buyer \in \subdiff[{\price}] \expend[\buyer](\price, \goalutil[\buyer])$ iff $\bm{\chi}_{\buyer} \in \hicksian[\buyer](\price, \goalutil[\buyer])$.
%\amy{i like this notation, without the fraction, better. why don't we use it everywhere -- instead of fractions.}\deni{I avoid using it in cases where the object is supposed to not be a set, I guess we could abuse notation but we should say it somehwere if you prefer that.}
\end{restatable}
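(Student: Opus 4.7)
The plan is to prove Shephard's lemma using the standard cost-minimization/envelope argument, carefully noting a notational subtlety: the expenditure function $\expend[\buyer](\cdot, \goalutil[\buyer])$ is \emph{concave} in $\price$ (as a pointwise infimum of linear functions $\price \mapsto \price \cdot \allocation$), so the object we actually identify with $\hicksian[\buyer](\price, \goalutil[\buyer])$ is the superdifferential in the sense that $\expend[\buyer](\price', \goalutil[\buyer]) \leq \expend[\buyer](\price, \goalutil[\buyer]) + \subgrad^T(\price' - \price)$ for all $\price'$; the statement of the lemma in the paper uses $\subdiff[\price]$ to denote this set, consistent with how it is invoked later. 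I would first address this convention in one sentence and then prove the set equality by two inclusions.

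For the inclusion $\hicksian[\buyer](\price, \goalutil[\buyer]) \subseteq \subdiff[\price] \expend[\buyer](\price, \goalutil[\buyer])$, I would fix any $\x^* \in \hicksian[\buyer](\price, \goalutil[\buyer])$. By definition of the Hicksian demand, $\x^*$ satisfies $\util[\buyer](\x^*) \geq \goalutil[\buyer]$ and $\expend[\buyer](\price, \goalutil[\buyer]) = \price \cdot \x^*$. Now, for any alternative price vector $\price' \in \R^\numgoods_+$, $\x^*$ remains feasible for the expenditure-minimization problem at utility target $\goalutil[\buyer]$ (feasibility depends only on $\util[\buyer](\x^*) \geq \goalutil[\buyer]$, not on prices), so
\[
\expend[\buyer](\price', \goalutil[\buyer]) \;\leq\; \price' \cdot \x^* \;=\; \price \cdot \x^* + (\price' - \price) \cdot \x^* \;=\; \expend[\buyer](\price, \goalutil[\buyer]) + (\price' - \price)^T \x^*.
\]
This is exactly the supergradient inequality at $\price$, so $\x^* \in \subdiff[\price] \expend[\buyer](\price, \goalutil[\buyer])$.

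For the reverse inclusion $\subdiff[\price] \expend[\buyer](\price, \goalutil[\buyer]) \subseteq \hicksian[\buyer](\price, \goalutil[\buyer])$, I would use the concavity and positive homogeneity of degree $1$ of $\expend[\buyer](\cdot, \goalutil[\buyer])$, which as noted in the preliminaries follows from continuity of $\util[\buyer]$. Homogeneity plus the supergradient inequality give $\subgrad^T \price = \expend[\buyer](\price, \goalutil[\buyer])$ (Euler's identity combined with the two inequalities obtained by plugging in $\price' = \lambda \price$ for $\lambda$ above and below $1$). Pointwise monotonicity of $\expend[\buyer]$ in $\price$ yields $\subgrad \in \R^\numgoods_+$. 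Finally, for any feasible $\y \in \choiceset$ with $\util[\buyer](\y) \geq \goalutil[\buyer]$, the supergradient inequality applied coordinate by coordinate (replacing $\price$ with $t\price + (1-t)(\y\text{-supporting prices})$ style perturbations; equivalently, using that $\expend[\buyer](\price', \goalutil[\buyer]) \leq \price' \cdot \y$ for all $\price'$) forces $\subgrad \cdot \price \leq \price \cdot \y$, so $\subgrad$ itself minimizes $\price \cdot \x$ subject to $\util[\buyer](\x) \geq \goalutil[\buyer]$, i.e., $\subgrad \in \hicksian[\buyer](\price, \goalutil[\buyer])$.

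The main obstacle I expect is the reverse inclusion: under the paper's minimal regularity assumptions on $\util[\buyer]$ (continuous, homogeneous, possibly non-concave), one has to be careful that an abstract supergradient genuinely solves the expenditure-minimization problem rather than just satisfying a supporting-hyperplane condition. I would handle this by leaning on homogeneity of degree $1$ of $\expend[\buyer]$ in $\price$ (so any supergradient $\subgrad$ satisfies $\subgrad^T\price = \expend[\buyer](\price, \goalutil[\buyer])$) and on the fact that $\expend[\buyer](\cdot, \goalutil[\buyer])$ is the support function of the upper contour set $\{\y : \util[\buyer](\y) \geq \goalutil[\buyer]\}$'s closed convex hull, identifying its superdifferential with the exposed face of that hull in direction $\price$; this exposed face coincides with $\hicksian[\buyer](\price, \goalutil[\buyer])$ whenever the Hicksian demand is nonempty, which is the standing assumption of the paper.
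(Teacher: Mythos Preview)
The paper does not prove this lemma: it is stated as a classical result and cited to \cite{duality-economics, shephard, generalized-shephard}, with no accompanying argument. So there is no ``paper's proof'' to compare against; what follows is a brief assessment of your attempt on its own merits.

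Your forward inclusion $\hicksian[\buyer](\price, \goalutil[\buyer]) \subseteq \subdiff[\price] \expend[\buyer](\price, \goalutil[\buyer])$ is correct and is exactly the standard envelope/cost-minimization argument. This is also the direction the paper actually uses (to replace $\grad[\price]\expend[\buyer]$ by $\hicksian[\buyer]$ in the single-valued, differentiable regime).

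Your reverse inclusion, however, has a real gap. From Euler's identity you correctly obtain $\subgrad^T \price = \expend[\buyer](\price, \goalutil[\buyer])$, and you then observe that $\subgrad^T\price \leq \price \cdot \y$ for every feasible $\y$. But this only restates the definition of $\expend[\buyer]$; it does not show that $\subgrad$ is \emph{feasible}, i.e., that $\util[\buyer](\subgrad) \geq \goalutil[\buyer]$, which is what membership in $\hicksian[\buyer](\price, \goalutil[\buyer])$ requires. Your fallback to the support-function picture does not close this gap under the paper's hypotheses: since the paper explicitly allows non-concave (hence possibly non-quasiconcave) $\util[\buyer]$, the upper contour set $\{\y : \util[\buyer](\y) \geq \goalutil[\buyer]\}$ need not be convex, and the exposed face of its closed convex hull in direction $\price$ can contain points outside the upper contour set itself. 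In that case such a point is a bona fide supergradient of $\expend[\buyer](\cdot,\goalutil[\buyer])$ but is \emph{not} a Hicksian demand. To recover the reverse inclusion at the generality the paper claims, you would need an additional hypothesis (e.g., quasiconcavity of $\util[\buyer]$, or the standing single-valuedness of $\hicksian[\buyer]$ combined with an envelope/Danskin-type differentiability argument) rather than the bare support-function identification.
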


We first prove that by setting $\gamma$ to be 5 times the maximum demand for any good throughout the 
entropic t\^atonnement process
% algorithm for the t\^atonnement rule given in \Crefrange{tatonnement-KL}{tatonnement-KL2}
, we can bound the change in the prices of goods in each round. We will use the fact that the change in the price of each good is bounded as an assumption in most of the following results.

\begin{lemma}
\label{price-change}
Suppose that entropic t\^atonnement process 
% given in \Cref{tatonnement-KL} and \Cref{tatonnement-KL2} 
is run for all $\iter \in \iters \subseteq \N_+$ with $\gamma \geq 5 \max\limits_{\substack{t \in \iters \\ \good \in \goods}} \{1, \demand[\good][\iter]\}$ and let $\pricediff = \price[][\iter+1] - \price[][\iter]$. then the following holds for all $t \in \N$:
\begin{align*}
    e^{-\frac{1}{5}} \price[\good][\iter]\leq \price[\good][\iter+1] \leq e^{\frac{1}{5}} \price[\good][\iter] \text{ and }\frac{|\pricediff[\good]|}{\price[\good][\iter]} \leq \frac{1}{4}
\end{align*}
\end{lemma}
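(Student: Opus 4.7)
The plan is to unfold the entropic \emph{t\^atonnement\/} update rule and directly bound the multiplicative change in prices using the hypothesis on $\gamma$. Recall that by \Cref{excess-demand}, $\subdiff[{\price[\good]}]\potential(\price[][\iter]) = -\excess[\good](\price[][\iter]) = 1 - \demand[\good][\iter]$, so the entropic update \Crefrange{exp:subgradient-descent}{exp:subgradient-descent-init} reduces to
\[
\price[\good][\iter+1] \;=\; \price[\good][\iter]\, \exp\!\left\{\frac{\demand[\good][\iter] - 1}{\gamma}\right\}.
\]
Thus the ratio $\price[\good][\iter+1]/\price[\good][\iter]$ depends only on the sign and magnitude of $(\demand[\good][\iter] - 1)/\gamma$.

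The first step is to bound this exponent using $\gamma \geq 5\max_{t \in \iters,\,\good \in \goods}\{1, \demand[\good][\iter]\}$. For the upper side, $\demand[\good][\iter] - 1 \leq \demand[\good][\iter] \leq \gamma/5$, giving $(\demand[\good][\iter]-1)/\gamma \leq 1/5$. For the lower side, $\demand[\good][\iter] - 1 \geq -1 \geq -\gamma/5$, giving $(\demand[\good][\iter]-1)/\gamma \geq -1/5$. Exponentiating both sides yields
\[
e^{-\frac{1}{5}}\,\price[\good][\iter] \;\leq\; \price[\good][\iter+1] \;\leq\; e^{\frac{1}{5}}\,\price[\good][\iter],
\]
which is the first half of the statement.

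The second half follows by combining these two bounds:
\[
\frac{|\pricediff[\good]|}{\price[\good][\iter]} \;=\; \left|\exp\!\left\{\tfrac{\demand[\good][\iter]-1}{\gamma}\right\} - 1\right| \;\leq\; \max\!\left\{e^{1/5} - 1,\; 1 - e^{-1/5}\right\}.
\]
Numerically $e^{1/5} - 1 \approx 0.2214$ and $1 - e^{-1/5} \approx 0.1813$, both strictly less than $1/4$, so the stated bound $|\pricediff[\good]|/\price[\good][\iter] \leq 1/4$ follows. There is no real obstacle here; the only subtlety is recognizing that the hypothesis $\gamma \geq 5\max\{1, \demand[\good][\iter]\}$ controls both the numerator $\demand[\good][\iter]$ (for the upward excursion) and the constant $1$ (for the downward excursion), and that the choice of constant $5$ is calibrated so that $e^{\pm 1/5}$ stays within the multiplicative window $[3/4, 5/4]$.
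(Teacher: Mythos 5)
Your proposal is correct and follows essentially the same argument as the paper: unfold the entropic update $\price[\good][\iter+1] = \price[\good][\iter]\exp\{(\demand[\good][\iter]-1)/\gamma\}$, bound the exponent in $[-\nicefrac{1}{5},\nicefrac{1}{5}]$ using $\gamma \geq 5\max\{1,\demand[\good][\iter]\}$ (the demand controlling the upward excursion, the constant $1$ the downward one), and then note $e^{1/5}-1 \leq \nicefrac{1}{4}$. The only cosmetic difference is that you route the update rule through the dual potential of \Cref{excess-demand}, whereas the paper invokes the t\^atonnement update directly; the bounds themselves are identical.
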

 
\begin{proof}[Proof of \Cref{price-change}]
The price of of a good $\good \in \goods$ can at most increase by a factor of $e^{\frac{1}{5}}$:
\begin{align*}
    \price[\good][\iter+1] &= \price[\good][\iter]\exp\left\{\frac{\excess[\good](\price[][\iter])}{\gamma}\right\} = \price[\good][\iter] \exp\left\{\frac{\demand[\good][\iter] - 1}{\gamma}\right\} \leq \price[\good][\iter] \exp\left\{\frac{\demand[\good][\iter]}{\gamma}\right\} \leq \price[\good][\iter] \exp\left\{\frac{\demand[\good][\iter]}{5 \max\limits_{\substack{t \in \N\\ \good \in \goods}} \{1, \demand[\good][\iter]\}}\right\}    % &\leq \price[\good][\iter] \exp\left\{ \frac{\max\limits_{\substack{t \in \N\\ \good \in \goods}} \{ \demand[\good][\iter] \}}{5 \max\limits_{\substack{t \in \N\\ \good \in \goods}} \{\demand[\good][\iter]\}}\right\}\\
    \leq \price[\good][\iter] e^{\frac{1}{5}}
\end{align*}
and decrease by a factor of $e^{-\frac{1}{5}}$:
\begin{align*}
    \price[\good][\iter+1] &= \price[\good][\iter]\exp\left\{\frac{\excess[\good](\price[][\iter])}{\gamma}\right\} = \price[\good][\iter] \exp\left\{\frac{\demand[\good][\iter] - 1}{\gamma}\right\} \geq \price[\good][\iter] \exp\left\{\frac{- 1}{\gamma}\right\} \geq \price[\good][\iter] \exp\left\{\frac{-1}{5 \max\limits_{\substack{t \in \N \\ \good \in \goods}} \{1, \demand[\good][\iter]\}}\right\} \geq \price[\good][\iter] e^{-\frac{1}{5}}
\end{align*}

Hence, we have $e^{-\frac{1}{5}} \price[\good][\iter]\leq \price[\good][\iter+1] \leq e^{\frac{1}{5}} \price[\good][\iter]$. substracting $\price[\good][\iter]$ from both sides and dividing by $\price[\good][\iter]$, we obtain:
\begin{align*}
    \frac{|\pricediff[\good]|}{\price[\good][\iter]} = \frac{|\price[\good][\iter+1] - \price[\good][\iter]|}{\price[\good][\iter]} \leq e^{\nicefrac{1}{5}} - 1 \leq \frac{1}{4} 
\end{align*}
\end{proof}

The following two results are due to \citeauthor{fisher-tatonnement} \cite{fisher-tatonnement}. We include their proofs for completeness. They allows us to relate the change in prices to the KL-divergence.

\begin{lemma}[\citeauthor{fisher-tatonnement}\cite{fisher-tatonnement}]\label{kl-divergence-1}
Fix $\iter \in \N_+$ and let $\pricediff = \price[][\iter+1] - \price[][\iter]$. Suppose that for all $\good \in \goods$, $\frac{|\pricediff[\good]|}{\price[\good][\iter]} \leq \frac{1}{4}$, then:
\begin{align}
    \frac{(\pricediff[\good])^2}{\price[\good][\iter]} \leq \frac{9}{2} \divergence[\mathrm{KL}][{\price[\good][\iter] + \pricediff[\good]}][{\price[\good][\iter]}]
\end{align}
\end{lemma}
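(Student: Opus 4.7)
My plan is to reduce the inequality to a one-variable scalar inequality in the relative price change, and then verify it by a short convexity argument.

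First, I would substitute $r = \pricediff[\good]/\price[\good][\iter]$, where by hypothesis $|r| \leq \tfrac14$. Writing out the definition of the generalized KL divergence for a single pair of positive scalars,
\begin{align*}
\divergence[\mathrm{KL}][{\price[\good][\iter]+\pricediff[\good]}][{\price[\good][\iter]}]
&= (\price[\good][\iter]+\pricediff[\good])\log\!\left(\tfrac{\price[\good][\iter]+\pricediff[\good]}{\price[\good][\iter]}\right) - \pricediff[\good] \\
&= \price[\good][\iter]\bigl[(1+r)\log(1+r) - r\bigr].
\end{align*}
The left-hand side of the target inequality equals $\price[\good][\iter]\cdot r^2$, so after dividing by $\price[\good][\iter]>0$ the claim reduces to the scalar inequality
\[
r^2 \;\leq\; \tfrac{9}{2}\bigl[(1+r)\log(1+r) - r\bigr]
\qquad \text{for all } r \in [-\tfrac14,\tfrac14].
\]

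Next, I would prove this scalar inequality by defining
\(g(r) \doteq \tfrac{9}{2}\bigl[(1+r)\log(1+r) - r\bigr] - r^2\)
and showing $g \geq 0$ on $[-\tfrac14,\tfrac14]$. A direct computation gives $g(0)=0$, $g'(r) = \tfrac{9}{2}\log(1+r) - 2r$ so $g'(0)=0$, and
\[
g''(r) \;=\; \frac{9}{2(1+r)} - 2,
\]
which is nonnegative precisely when $r \leq \tfrac{5}{4}$. In particular $g'' \geq 0$ on the interval $[-\tfrac14,\tfrac14]$, so $g$ is convex there. Convexity together with $g(0)=g'(0)=0$ yields $g(r) \geq g(0) + g'(0)\,r = 0$ for every $r$ in this interval, which is exactly what we needed.

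Multiplying back through by $\price[\good][\iter] > 0$ gives the lemma. There is no real obstacle here beyond keeping track of the constants: the looseness in the $\tfrac14$ bound on $|r|$ is comfortable, since the key threshold $g''\geq 0$ only requires $r \leq \tfrac{5}{4}$, so the constant $\tfrac92$ is not tight but suffices for the downstream application in \Cref{ineq-devanur}.
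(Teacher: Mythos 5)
Your proof is correct. It shares the paper's overall reduction—normalize by $\price[\good][\iter]$ and view the claim as a one-variable inequality in the relative change $r = \nicefrac{\pricediff[\good]}{\price[\good][\iter]}$—but the verification of that scalar inequality is genuinely different. The paper (following \citet{fisher-tatonnement}) invokes, without proof, the quadratic lower bound $\log(1+x) \geq x - \tfrac{11}{18}x^2$ for $|x| \leq \tfrac14$, expands $\divergence[\mathrm{KL}][{\price[\good][\iter]+\pricediff[\good]}][{\price[\good][\iter]}] = -\pricediff[\good] + (\price[\good][\iter]+\pricediff[\good])\log\bigl(1+\nicefrac{\pricediff[\good]}{\price[\good][\iter]}\bigr)$, and tracks constants to reach $\divergence[\mathrm{KL}][{\price[\good][\iter]+\pricediff[\good]}][{\price[\good][\iter]}] \geq \tfrac{17}{72}\,\nicefrac{(\pricediff[\good])^2}{\price[\good][\iter]} \geq \tfrac{2}{9}\,\nicefrac{(\pricediff[\good])^2}{\price[\good][\iter]}$. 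Your argument instead sets $g(r) = \tfrac92\bigl[(1+r)\log(1+r)-r\bigr] - r^2$ and uses $g(0)=g'(0)=0$ together with $g''(r) = \tfrac{9}{2(1+r)} - 2 \geq 0$ for $r \leq \tfrac54$, so convexity and the tangent line at $0$ give $g \geq 0$ on the relevant interval. What your route buys is self-containedness (no unproved ad hoc logarithm bound—note the paper even mis-states it as $\log(x)\geq x-x^2$) and an explicit view of the slack, since the convexity threshold $r \leq \tfrac54$ comfortably exceeds the hypothesis $|r|\leq\tfrac14$; what the paper's route buys is a sharper intermediate constant ($\tfrac{17}{72}$ rather than the needed $\tfrac29$) obtained by purely mechanical arithmetic once the log bound is granted. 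The only implicit points to keep in mind are $\price[\good][\iter] > 0$ and $1+r>0$, both guaranteed here (entropic t\^atonnement keeps prices strictly positive and $|r|\leq\tfrac14$), so there is no gap.
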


\begin{proof}[Proof of \Cref{kl-divergence-1}]
The bound $\log(x) \geq x - x^2$ for $|x| \leq \frac{1}{4}$ is used below:
\begin{align*}
    \divergence[\mathrm{KL}][{\price[\good][\iter] + \pricediff[\good]}][{\price[\good][\iter]}] &= (\price[\good][\iter] + \pricediff[\good])(\log(\price[\good][\iter] + \pricediff[\good])) - ( \price[\good][\iter] + \pricediff[\good] - \price[\good][\iter]\log(\price[\good]) + \price[\good][\iter] - \log(\price[\good][\iter]) \pricediff[\good]\\
    &= - \pricediff[\good] + (\price[\good][\iter] + \pricediff[\good]) \log \left(1 + \frac{\pricediff[\good]}{\price[\good][\iter]}\right)\\
    &\geq - \pricediff[\good] + (\price[\good][\iter] + \pricediff[\good]) \left( \frac{\pricediff[\good]}{\price[\good][\iter]} - \frac{11}{18} \frac{(\pricediff[\good])^2}{(\price[\good][\iter])^2} \right)\\
    &\geq \frac{7}{18} \frac{(\pricediff[\good])^2}{\price[\good][\iter]} \left( 1 - \frac{11}{7} \frac{\pricediff[\good]}{\price[\good][\iter]}\right)\\
    &= \frac{7}{18} \frac{17}{28} \frac{(\pricediff[\good])^2}{\price[\good][\iter]}\\
    &\geq \frac{2}{9} \frac{(\pricediff[\good])^2}{\price[\good][\iter]}
\end{align*}
\end{proof}

\begin{lemma}\label{lemma:spending_change_squared}
Fix $\iter \in \N_+$ and let $\pricediff = \price[][\iter+1] - \price[][\iter]$. Suppose that $\frac{|\pricediff[\good]|}{\price[\good]} \leq \frac{1}{4}$, then for any $c \in (0,1)$, and $\A \in \R^{\numbuyers \times \numgoods}$, and for all $\good \in \goods$:
\begin{align*}
    \frac{1}{\budget[\buyer] }   \sum_{\good \in \goods} \sum_{k \in \goods} a_{il} \marshallian[\buyer][k](\price[][\iter] + c \pricediff, \budget[\buyer]) |\pricediff[\good]||\pricediff[k]| \leq \frac{4}{3} \sum_{l \in \goods} \frac{a_{il}}{\price[l][\iter]}(\pricediff[l])^2
\end{align*}
\end{lemma}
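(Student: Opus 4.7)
The plan is to apply the AM-GM inequality $\sqrt{xy} \leq (x+y)/2$ with weights chosen to convert $|\pricediff[j]||\pricediff[k]|$ into a sum of terms of the target form $(\pricediff[l])^2/\price[l][\iter]$, and then collapse the resulting cross-factors to $\budget[\buyer]$ via Walras' law (budget exhaustion). Specifically, taking $x = \tfrac{\price[k][\iter]}{\price[j][\iter]}(\pricediff[j])^2$ and $y = \tfrac{\price[j][\iter]}{\price[k][\iter]}(\pricediff[k])^2$, so that $\sqrt{xy} = |\pricediff[j]||\pricediff[k]|$, AM-GM yields
$$|\pricediff[j]||\pricediff[k]| \leq \tfrac{1}{2}\left(\tfrac{\price[k][\iter]}{\price[j][\iter]}(\pricediff[j])^2 + \tfrac{\price[j][\iter]}{\price[k][\iter]}(\pricediff[k])^2\right).$$

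Next, I would multiply both sides by $a_{ij}\,\marshallian[\buyer][k](\price[][\iter] + c\pricediff,\budget[\buyer])$ and sum over $j,k \in \goods$; the double sum factors into a term of the form $\tfrac{1}{2}\sum_j \tfrac{a_{ij}(\pricediff[j])^2}{\price[j][\iter]}\cdot\sum_k \marshallian[\buyer][k](\price[][\iter]+c\pricediff,\budget[\buyer])\price[k][\iter]$ plus its symmetric counterpart. The step I would then do is bound $\sum_k \marshallian[\buyer][k](\price[][\iter]+c\pricediff,\budget[\buyer])\price[k][\iter]$ by $\tfrac{4}{3}\budget[\buyer]$: the hypothesis $|\pricediff[k]|/\price[k][\iter] \leq 1/4$ gives $\price[k][\iter] + c\pricediff[k] \geq \tfrac{3}{4}\price[k][\iter]$, so $\price[k][\iter] \leq \tfrac{4}{3}(\price[k][\iter]+c\pricediff[k])$, and Walras' law, $\sum_k \marshallian[\buyer][k](\price[][\iter]+c\pricediff,\budget[\buyer])(\price[k][\iter]+c\pricediff[k]) = \budget[\buyer]$ (valid for any utility-maximizing demand of a locally non-satiated homothetic buyer, as ensured in \Cref{sec:prelim}), then collapses the cross-factor. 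Substituting back bounds the first contribution by $\tfrac{2\budget[\buyer]}{3}\sum_l \tfrac{a_{il}(\pricediff[l])^2}{\price[l][\iter]}$, and combining with the symmetric contribution gives the desired $\tfrac{4\budget[\buyer]}{3}\sum_l \tfrac{a_{il}(\pricediff[l])^2}{\price[l][\iter]}$; dividing by $\budget[\buyer]$ yields the lemma.

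The main obstacle is the symmetric AM-GM term, which produces a factor $\sum_j a_{ij}\price[j][\iter]$ that, for a truly arbitrary $A$, is not controlled by a Walras-style identity. I expect to resolve this either by (i) noting that in the invocation inside \Cref{ineq-devanur} the row $a_{i\cdot}$ plays the role of a Marshallian-demand entry, so that $\sum_j a_{ij}\price[j][\iter] \leq \tfrac{4}{3}\budget[\buyer]$ via the same Walras-plus-$(3/4)$-bound argument, or (ii) re-selecting the AM-GM weights as $\alpha_{jk} = \marshallian[\buyer][k]\price[k][\iter]/(a_{ij}\price[j][\iter])$, which forces both resulting terms to carry a Marshallian weight and thus admit the same budget-exhaustion collapse. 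Careful tracking of the constants — the $\tfrac{1}{2}$ from AM-GM against the two $\tfrac{4}{3}\budget[\buyer]$ weights — is what produces exactly the $\tfrac{4}{3}$ constant on the right-hand side.
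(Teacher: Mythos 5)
Your plan is built from the same three ingredients as the paper's proof of this lemma --- the price-ratio AM-GM bound, Walras' law evaluated at the intermediate price $\price[][\iter] + c\pricediff$, and the estimate $\price[k][\iter] + c\pricediff[k] \geq \frac{3}{4}\price[k][\iter]$ --- just run in the opposite direction: you expand the left-hand side upward, whereas the paper starts from $\budget[\buyer]\sum_{l \in \goods}\frac{a_{il}}{\price[l][\iter]}(\pricediff[l])^2$, substitutes Walras' law for $\budget[\buyer]$, applies the $\frac{3}{4}$ estimate, and only then pairs the $(l,k)$ and $(k,l)$ terms and uses $\frac{\price[k][\iter]}{\price[l][\iter]}(\pricediff[l])^2 + \frac{\price[l][\iter]}{\price[k][\iter]}(\pricediff[k])^2 \geq 2|\pricediff[l]||\pricediff[k]|$. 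The obstacle you flag at the end, however, is not a deferrable detail: it is the crux, and neither of your proposed fixes closes it for the statement as written. After your symmetric AM-GM split, the second term carries the factor $\sum_{l \in \goods} a_{il}\price[l][\iter]$, which no budget identity controls for an arbitrary $\A$; and this is not an artifact of your route, since for arbitrary nonnegative $\A$ the claimed inequality can fail --- take two goods, $a_{i1}=1$, $a_{i2}=0$, a buyer whose demand at $\price[][\iter]+c\pricediff$ is concentrated on good $2$, $|\pricediff[1]|$ tiny relative to $\price[1][\iter]$, and $|\pricediff[2]| = \frac{1}{4}\price[2][\iter]$: the left side is of order $\frac{|\pricediff[1]||\pricediff[2]|}{\price[2][\iter]}$ while the right side is $\frac{4}{3}\frac{(\pricediff[1])^2}{\price[1][\iter]}$, which is much smaller. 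Consequently a restriction like your fix (i) (taking $a_{i\cdot}$ to be a Marshallian-demand row) is essentially forced, but even then your symmetric term collapses only to $\frac{\budget[\buyer]}{2}\sum_{k \in \goods}\frac{\marshallian[\buyer][k](\price[][\iter] + c\pricediff, \budget[\buyer])}{\price[k][\iter]}(\pricediff[k])^2$, a sum weighted by the intermediate-price demand rather than by $a_{i\cdot}$, so you get a mixed bound and the advertised ``combine to obtain $\frac{4}{3}\sum_{l}\frac{a_{il}}{\price[l][\iter]}(\pricediff[l])^2$'' step does not go through; fix (ii) fails outright, because the reweighted AM-GM produces the squared weights $\marshallian[\buyer][k](\price[][\iter] + c\pricediff, \budget[\buyer])^2\price[k][\iter]$ and $a_{il}^2\price[l][\iter]$, which Walras' law does not collapse and which do not yield the target $a$-weighted sum.

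For comparison, the paper's proof absorbs this same difficulty at its pairing step: it writes both members of the $(l,k)$/$(k,l)$ pair with the single coefficient $a_{il}\marshallian[\buyer][k](\price[][\iter] + c\pricediff, \budget[\buyer])$, i.e.\ it uses the symmetry $a_{il}\marshallian[\buyer][k](\price[][\iter] + c\pricediff, \budget[\buyer]) = a_{ik}\marshallian[\buyer][l](\price[][\iter] + c\pricediff, \budget[\buyer])$, which holds when $a_{i\cdot}$ is proportional to the intermediate-price demand (the symmetric situation of \Cref{leontief-ineq}) but not for a general $\A$; AM-GM applied to the true pair yields only $2\sqrt{a_{il}\marshallian[\buyer][k](\price[][\iter] + c\pricediff, \budget[\buyer])\, a_{ik}\marshallian[\buyer][l](\price[][\iter] + c\pricediff, \budget[\buyer])}\,|\pricediff[l]||\pricediff[k]|$, which is too weak. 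So you have correctly located the one point where the ``for any $\A$'' generality does real work, but your proposal as written establishes at most a weaker mixed-weight inequality (or the proportional special case), not the stated lemma, and must be counted as having a genuine gap.
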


\begin{proof}[Proof of \Cref{lemma:spending_change_squared}]
First, note that since by our assumption the utilities are locally non-satiated, Walras' law is satisfied, i.e., we have $\budget[\buyer] = \sum_{k \in \goods} \marshallian[\buyer][k](\price[][\iter] + c \pricediff, \budget[\buyer]) (\price[k][\iter] + c \pricediff[k])$;
\begin{align*}
    &\budget[\buyer]\sum_{l \in \goods} \frac{a_{il}}{\price[l][\iter]}(\pricediff[l])^2\\ &= \sum_{l \in \goods} \frac{\left(\sum_{k \in \goods} \marshallian[\buyer][k](\price[][\iter] + c \pricediff, \budget[\buyer]) (\price[k][\iter] + c \pricediff[k]) \right)\marshallian[\buyer][l](\price[][\iter] + c \pricediff, \budget[\buyer])}{\price[l][\iter]}(\pricediff[l])^2\\
    &\geq \sum_{l \in \goods} \frac{\left(\sum_{k \in \goods} \marshallian[\buyer][k](\price[][\iter] + c \pricediff, \budget[\buyer]) (\price[k][\iter] - \frac{1}{4}\price[k][\iter]) \right) a_{il}}{\price[l][\iter]}(\pricediff[l])^2\\
    &= \sum_{l \in \goods} \frac{\left(\sum_{k \in \goods} \marshallian[\buyer][k](\price[][\iter] + c \pricediff, \budget[\buyer]) ( \frac{3}{4} \price[k][\iter] ) \right) a_{il}}{\price[l][\iter]}(\pricediff[l])^2\\
    &=\frac{3}{4}\sum_{l \in \goods} \sum_{k \in \goods} a_{il} \marshallian[\buyer][l](\price[][\iter] + c \pricediff, \budget[\buyer])  \frac{\price[k][\iter]}{\price[l][\iter]} (\pricediff[l])^2\\
    &= \frac{3}{4} \left[\sum_{l \in \goods} a_{il} \marshallian[\buyer][l](\price[][\iter] + c \pricediff, \budget[\buyer]) (\pricediff[l])^2 + \sum_{l \in \goods} \sum_{k \neq l } a_{il} \marshallian[\buyer][k](\price[][\iter] + c \pricediff, \budget[\buyer]) \frac{\price[k][\iter]}{\price[l][\iter]} (\pricediff[l])^2 \right]\\
    &= \frac{3}{4} \left[ \sum_{l \in \goods} a_{il} \marshallian[\buyer][l](\price[][\iter] + c \pricediff, \budget[\buyer]) (\pricediff[l])^2 + \sum_{k \in \goods} \sum_{k \leq l} a_{il} \marshallian[\buyer][k](\price[][\iter] + c \pricediff, \budget[\buyer]) \left( \frac{\price[k][\iter]}{\price[l][\iter]} |\pricediff[l]|^2 + \frac{\price[l][\iter]}{\price[k][\iter]}  |\pricediff[k]|^2\right) \right]
\end{align*}

Now, we apply the AM-GM inequality, i.e., for all $x, y \in \R_+$ since $\sqrt{xy} \leq \frac{x+y}{2}$, we have: 
\begin{align*}
\budget[\buyer] \sum_{l \in \goods} \frac{a_{il}}{\price[l][\iter]} (\pricediff[l])^2 &\geq \frac{3}{4} \sum_{l \in \goods} a_{il} \marshallian[\buyer][l](\price[][\iter] + c \pricediff, \budget[\buyer]) (\pricediff[l])^2 + \sum_{k < l} a_{il} \marshallian[\buyer][k](\price[][\iter] + c \pricediff, \budget[\buyer])\left(2 |\pricediff[l]||\pricediff[k]||\right)\\
&= \frac{3}{4} \sum_{\good \in \goods} \sum_{k \in \goods} a_{il} \marshallian[\buyer][k](\price[][\iter] + c \pricediff, \budget[\buyer]) |\pricediff[\good]||\pricediff[k]|
\end{align*}
\end{proof}

%% OLD CHEUNG LEMMA AND PROOF 
\begin{lemma}{\cite{fisher-tatonnement}}\label{leontief-ineq}
For all $\good \in \goods$:
\begin{align*}
    \frac{1}{\budget[\buyer] } \sum_{\good \in \goods} \sum_{k \in \goods} \marshallian[\buyer][\good][\iter]\marshallian[\buyer][k][\iter]|\pricediff[\good]||\pricediff[k]| \leq \sum_{l \in \goods} \frac{\marshallian[\buyer][l][\iter]}{\price[l][\iter]}(\pricediff[l])^2
\end{align*}
\end{lemma}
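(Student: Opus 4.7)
The plan is to mirror the structure of the preceding \Cref{lemma:spending_change_squared}, which already handled a more general version of this type of bound using only Walras' law and the AM-GM inequality. Here the weights $a_{il}$ are specialized to $\marshallian[\buyer][l][\iter]$, and there is no intermediate-time demand to contend with (everything is evaluated at the single iterate $\iter$), so the argument simplifies. I would begin by invoking Walras' law for buyer $\buyer$ at prices $\price[][\iter]$: namely $\budget[\buyer] = \sum_{k \in \goods} \marshallian[\buyer][k][\iter]\, \price[k][\iter]$.

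With this identity in hand, the next step is to multiply the right-hand side of the inequality by $\budget[\buyer]$ and expand, giving
\begin{align*}
\budget[\buyer] \sum_{l \in \goods} \frac{\marshallian[\buyer][l][\iter]}{\price[l][\iter]}(\pricediff[l])^2 = \sum_{l,k \in \goods} \marshallian[\buyer][l][\iter]\, \marshallian[\buyer][k][\iter]\, \frac{\price[k][\iter]}{\price[l][\iter]}\, (\pricediff[l])^2.
\end{align*}
I would then separate the diagonal terms ($l=k$), which contribute exactly $(\marshallian[\buyer][l][\iter])^2 (\pricediff[l])^2$, from the off-diagonal terms, and pair up the two symmetric off-diagonal contributions indexed by $(l,k)$ and $(k,l)$ for each unordered pair $l\neq k$.

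The key inequality is then the AM-GM bound applied to each such pair:
\begin{align*}
\frac{\price[k][\iter]}{\price[l][\iter]}(\pricediff[l])^2 + \frac{\price[l][\iter]}{\price[k][\iter]}(\pricediff[k])^2 \geq 2\, |\pricediff[l]||\pricediff[k]|,
\end{align*}
since the geometric mean of the two summands equals $|\pricediff[l]||\pricediff[k]|$. Multiplying through by $\marshallian[\buyer][l][\iter] \marshallian[\buyer][k][\iter]$ and summing over all unordered pairs produces the off-diagonal piece of $\sum_{l,k}\marshallian[\buyer][l][\iter]\marshallian[\buyer][k][\iter]|\pricediff[l]||\pricediff[k]|$, while the diagonal terms match exactly. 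Dividing both sides of the resulting inequality by $\budget[\buyer]$ yields the claim.

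I do not foresee a serious obstacle here: this is essentially the specialization of \Cref{lemma:spending_change_squared} to $a_{il} = \marshallian[\buyer][l][\iter]$ evaluated at $c=0$, so Walras' law plus AM-GM suffice. The only minor bookkeeping step is to be careful that the double sum $\sum_{l,k}$ in the target inequality is over \emph{all} ordered pairs (including $l=k$), so that the diagonal squares $(\marshallian[\buyer][l][\iter])^2(\pricediff[l])^2$ are correctly accounted for on both sides of the chain of inequalities before dividing through by $\budget[\buyer]$.
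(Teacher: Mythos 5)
Your proposal is correct and follows essentially the same route as the paper's own proof: apply Walras' law to write $\budget[\buyer] = \sum_{k}\marshallian[\buyer][k][\iter]\price[k][\iter]$, expand $\budget[\buyer]\sum_{l}\frac{\marshallian[\buyer][l][\iter]}{\price[l][\iter]}(\pricediff[l])^2$ into diagonal and paired off-diagonal terms, and bound each symmetric pair via AM-GM before dividing by $\budget[\buyer]$. No gaps to report.
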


\begin{proof}[Proof of \Cref{leontief-ineq}]
First, note that by Walras' law we have $\budget[\buyer] = \sum_{k \in \goods} \marshallian[\buyer][k][\iter] \price[k][\iter]$;
\begin{align*}
    \budget[\buyer]\sum_{l \in \goods} \frac{\marshallian[\buyer][l][\iter]}{\price[l][\iter]}(\pricediff[l])^2 &= \sum_{l \in \goods} \frac{\left(\sum_{k \in \goods} \marshallian[\buyer][k][\iter] \price[k][\iter] \right)\marshallian[\buyer][l][\iter]}{\price[l][\iter]}(\pricediff[l])^2\\
    &=\sum_{l \in \goods} \sum_{k \in \goods} \marshallian[\buyer][l][\iter] \marshallian[\buyer][k][\iter] \frac{\price[k][\iter]}{\price[l][\iter]} (\pricediff[l])^2\\
    &= \sum_{l \in \goods} (\marshallian[\buyer][l][\iter])^2 (\pricediff[l])^2 + \sum_{l \in \goods} \sum_{k \neq l } \marshallian[\buyer][l][\iter] \marshallian[\buyer][k][\iter] \frac{\price[k][\iter]}{\price[l][\iter]} (\pricediff[l])^2\\
    &= \sum_{l \in \goods} (\marshallian[\buyer][l][\iter])^2 (\pricediff[l])^2 + \sum_{k \in \goods} \sum_{k \leq l} \marshallian[\buyer][k][\iter] \marshallian[\buyer][l][\iter] \left( \frac{\price[k][\iter]}{\price[l][\iter]} (\pricediff[l])^2 + \frac{\price[l][\iter]}{\price[k][\iter]} \ (\pricediff[k])^2\right)
\end{align*}

Now, we apply the AM-GM inequality: 
\begin{align*}
\budget[\buyer] \sum_{l \in \goods} \frac{\marshallian[\buyer][l][\iter]}{\price[l][\iter]} (\pricediff[l])^2 &\geq \sum_{l \in \goods} (\marshallian[\buyer][l][\iter])^2 (\pricediff[l])^2 + \sum_{k < l} \marshallian[\buyer][k][\iter] \marshallian[\buyer][l][\iter]\left(2 |\pricediff[l]||\pricediff[k]||\right)\\
&= \sum_{\good \in \goods} \sum_{k \in \goods} \marshallian[\buyer][\good][\iter] \marshallian[\buyer][k][\iter] |\pricediff[\good]||\pricediff[k]|
\end{align*}
\end{proof}

An important result in microeconomics is the \mydef{law of demand} which states that when the price of a good increases, the Hicksian demand for that good decreases in a very general setting of utility functions \cite{levin-notes, mas-colell}. We state a weaker version of the law of demand which is re-formulated to fit the t\^atonnement framework.

\begin{lemma}[Law of Demand]\cite{levin-notes,mas-colell}\label{law-of-demand}
Suppose that $\forall \good \in \goods, t \in \N, \price[\good][\iter], \price[\good][\iter+1] \geq 0$ and $\util[\buyer]$ is continuous and concave. Then, $\sum_{\good \in \goods} \pricediff[\good] \left(\hicksian[\buyer][\good][\iter+1] - \hicksian[\buyer][\good][t]\right) \leq 0$.
\end{lemma}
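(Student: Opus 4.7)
The plan is to give a short revealed-preference argument using only the cost-minimizing property that defines Hicksian demand, together with mutual feasibility of the two demand bundles. Continuity of $\util[\buyer]$ will be invoked only to ensure that the Hicksian demand exists (via the Weierstrass / Berge maximum theorem applied to $\min_{\x} \price \cdot \x$ subject to $\util[\buyer](\x) \geq 1$); concavity will not actually be needed for the argument, so the stated hypotheses are more than enough.

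First, fix the buyer $\buyer$ and write $\hicksian[\buyer][][t] \doteq \hicksian[\buyer](\price[][t], 1)$ and $\hicksian[\buyer][][t+1] \doteq \hicksian[\buyer](\price[][t+1], 1)$. By definition of the Hicksian demand, $\hicksian[\buyer][][t]$ solves $\min_{\x \in \choiceset : \util[\buyer](\x) \geq 1} \price[][t] \cdot \x$, and $\hicksian[\buyer][][t+1]$ is feasible for this program because $\util[\buyer](\hicksian[\buyer][][t+1]) \geq 1$. Hence
\[
\price[][t] \cdot \hicksian[\buyer][][t] \;\leq\; \price[][t] \cdot \hicksian[\buyer][][t+1] .
\]
Symmetrically, since $\hicksian[\buyer][][t]$ is feasible for the expenditure-minimization problem at prices $\price[][t+1]$,
\[
\price[][t+1] \cdot \hicksian[\buyer][][t+1] \;\leq\; \price[][t+1] \cdot \hicksian[\buyer][][t] .
\]

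Adding the two inequalities and rearranging yields
\[
0 \;\geq\; \bigl(\price[][t+1] - \price[][t]\bigr) \cdot \bigl(\hicksian[\buyer][][t+1] - \hicksian[\buyer][][t]\bigr) \;=\; \sum_{\good \in \goods} \pricediff[\good] \bigl(\hicksian[\buyer][\good][t+1] - \hicksian[\buyer][\good][t]\bigr),
\]
which is exactly the desired conclusion.

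There is essentially no ``hard step'' here; the only subtlety is ensuring that both Hicksian bundles are mutually feasible across the two price vectors, which is immediate from the constraint $\util[\buyer](\x) \geq 1$ entering both minimization problems and from $\hicksian[\buyer][][t], \hicksian[\buyer][][t+1]$ each attaining that constraint. The nonnegativity of $\price[][t]$ and $\price[][t+1]$ is used implicitly in guaranteeing that the expenditure-minimization problems are well-posed (bounded below, with interior feasibility preserved by homogeneity of $\util[\buyer]$ on $\choiceset$).
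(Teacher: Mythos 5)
Your proof is correct: the paper does not prove this lemma itself but cites it from \citet{levin-notes} and \citet{mas-colell}, and the argument in those references is exactly your revealed-preference argument (cross-feasibility of the two cost-minimizing bundles at utility level 1, add the two inequalities, rearrange). Your observation that concavity is not needed for the inequality itself—only existence of the Hicksian demands, which the paper assumes throughout—is also accurate.
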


A simple corollary of the law of demand which is used throughout the rest of this paper is that, during t\^atonnement, the change in expenditure of the next time period is always less than or equal to the change in expenditure of the previous time period's.

\begin{corollary}\label{law-of-demand-corollary}
Suppose that $\forall t \in \N, \good \in \goods, \price[\good][\iter],  \price[\good][\iter+1] \geq 0$ and $\util[\buyer]$ is continuous and concave then $\forall t \in \N, \sum_{\good \in \goods} \pricediff[\good] \hicksian[\buyer][\good][\iter+1]  \leq \sum_{\good \in \goods} \pricediff[\good] \hicksian[\buyer][\good][\iter]$.
\end{corollary}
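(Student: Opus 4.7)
The plan is to derive this corollary directly from the Law of Demand (\Cref{law-of-demand}) stated immediately above it. The hypotheses of the corollary, namely that $\price[\good][\iter], \price[\good][\iter+1] \geq 0$ for all $\good \in \goods$ and $t \in \N$, together with continuity and concavity of $\util[\buyer]$, are exactly the hypotheses required to invoke \Cref{law-of-demand}.

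First I would write down the conclusion of \Cref{law-of-demand} at the fixed but arbitrary time $t \in \N$:
\[
\sum_{\good \in \goods} \pricediff[\good] \left( \hicksian[\buyer][\good][\iter+1] - \hicksian[\buyer][\good][\iter] \right) \leq 0.
\]
Next I would distribute the price-difference term inside the sum, splitting the left-hand side into
\[
\sum_{\good \in \goods} \pricediff[\good] \hicksian[\buyer][\good][\iter+1] - \sum_{\good \in \goods} \pricediff[\good] \hicksian[\buyer][\good][\iter] \leq 0.
\]
Finally, moving the second sum to the right-hand side yields exactly the claimed inequality
\[
\sum_{\good \in \goods} \pricediff[\good] \hicksian[\buyer][\good][\iter+1] \leq \sum_{\good \in \goods} \pricediff[\good] \hicksian[\buyer][\good][\iter].
\]

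There is essentially no technical obstacle here; the corollary is a trivial algebraic rearrangement of the inequality furnished by \Cref{law-of-demand}, and since $t$ is arbitrary the quantifier $\forall t \in \N$ carries over immediately. The only thing worth remarking is that \emph{t\^atonnement} maintains strictly positive prices (so the non-negativity hypothesis is never an issue in our setting), and that the concavity hypothesis on $\util[\buyer]$ is inherited from assumptions imposed earlier when invoking \Cref{law-of-demand}.
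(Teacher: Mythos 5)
Your proof is correct and matches the paper's intent: the paper states this as an immediate consequence of \Cref{law-of-demand} without a separate proof, and your rearrangement of the law-of-demand inequality into the two sums is exactly the (trivial) argument being relied upon.
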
 

The following lemma simply restates an essential fact about expenditure functions and Hicksian demand, namely that the Hicksian demand is the minimizer of the expenditure function.

\begin{lemma}\label{hicksian-is-expend-minimizer}
Suppose that $\forall \good \in \goods, \price[\good][\iter], \iter \in \N, \hicksian[\buyer][\good][\iter], \hicksian[\buyer][\good][\iter+1] \geq 0$ and $\util[\buyer]$ is continuous and concave then $\sum_{\good \in \goods} \hicksian[\buyer][\good][\iter] \price[\good][\iter] \leq \sum_{\good \in \goods} \hicksian[\buyer][\good][\iter+1] \price[\good][\iter]$.
\end{lemma}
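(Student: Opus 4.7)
The statement to prove is the inequality $\sum_{\good \in \goods} \hicksian[\buyer][\good][\iter] \price[\good][\iter] \leq \sum_{\good \in \goods} \hicksian[\buyer][\good][\iter+1] \price[\good][\iter]$, i.e., evaluated at the prices $\price[][\iter]$, the Hicksian bundle of period $\iter$ is no more expensive than the Hicksian bundle of period $\iter+1$. The plan is to derive this directly from the definition of Hicksian demand as the solution to the expenditure minimization problem (EMP) at a fixed utility target, using the fact that both bundles achieve at least the target utility level of $1$.

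First I would recall that, by definition, $\hicksian[\buyer](\price[][\iter], 1) \in \argmin_{\allocation[ ] \in \choiceset : \util[\buyer](\allocation[ ]) \geq 1} \price[][\iter] \cdot \allocation[ ]$. Under the hypotheses (continuity and concavity of $\util[\buyer]$ and non-negative prices), the EMP has a nonempty optimal solution set, so $\hicksian[\buyer][\good][\iter]$ is a well-defined optimum. Next I would observe that $\hicksian[\buyer][\good][\iter+1] = \hicksian[\buyer](\price[][\iter+1], 1)$ is by construction feasible for its own EMP at utility level $1$, meaning $\util[\buyer](\hicksian[\buyer][\good][\iter+1]) \geq 1$ and $\hicksian[\buyer][\good][\iter+1] \geq 0$. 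The crucial remark is that this feasibility is a property of the bundle alone, independent of the prices used in the objective: hence $\hicksian[\buyer][\good][\iter+1]$ also lies in the feasible set of the EMP defined at prices $\price[][\iter]$ with utility target $1$.

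Given that both bundles lie in the same feasible set, and that $\hicksian[\buyer][\good][\iter]$ is by definition a minimizer of $\price[][\iter] \cdot \allocation[ ]$ over this set, the optimality inequality
\[
\sum_{\good \in \goods} \hicksian[\buyer][\good][\iter] \price[\good][\iter] \;=\; \price[][\iter] \cdot \hicksian[\buyer](\price[][\iter],1) \;\leq\; \price[][\iter] \cdot \hicksian[\buyer](\price[][\iter+1],1) \;=\; \sum_{\good \in \goods} \hicksian[\buyer][\good][\iter+1] \price[\good][\iter]
\]
follows immediately, which is the desired bound.

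This is essentially a one-line consequence of the defining variational inequality for Hicksian demand, so there is no real obstacle; the only subtlety is making explicit that the feasibility constraint $\util[\buyer](\cdot) \geq 1$ does not depend on the price vector, so a bundle that is Hicksian-feasible at one price system remains feasible at any other. The hypotheses of continuity and concavity of $\util[\buyer]$ plus non-negativity of prices and bundles are used only to ensure the EMP admits a minimizer (and thus that the appeal to optimality is legitimate); nothing else in the proof uses the concavity assumption or any specific property of the iterates $\price[][\iter], \price[][\iter+1]$ beyond non-negativity.
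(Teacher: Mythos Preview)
Your proof is correct and follows essentially the same approach as the paper: both arguments use that $\hicksian[\buyer][][\iter+1]$ is feasible for the expenditure minimization problem at prices $\price[][\iter]$ with utility target $1$, and then invoke the optimality of $\hicksian[\buyer][][\iter]$ for that same problem. The only cosmetic difference is that the paper phrases it by contradiction while you give the direct optimality inequality.
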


\begin{proof}[\Cref{hicksian-is-expend-minimizer}]
For the sake of contradiction, assume that $\sum_{\good \in \goods} \hicksian[\buyer][\good][\iter] \price[\good][\iter] > \sum_{\good \in \goods} \hicksian[\buyer][\good][\iter+1] \price[\good][\iter]$. By the definition of the Hicksian demand, we know that the bundle $\hicksian[\buyer][][\iter]$ provides the buyer with one unit of utility. 
Recall that the expenditure at any price $\price$ is equal to the sum of the product of the Hicksian demands and prices, that is $\expend[\buyer](\price, 1) = \sum_{\good \in \goods} \hicksian[\buyer][\good](\price, 1) \price[\good]$. Hence, we have $\expend[\buyer](\price[\good][\iter], 1) = \sum_{\good \in \goods} \hicksian[\buyer][\good][\iter] \price[\good][\iter] > \sum_{\good \in \goods} \hicksian[\buyer][\good][\iter+1] \price[\good][\iter] = \expend[\buyer](\price[][\iter], 1)$, a contradiction.
\end{proof}

We now introduce the following lemma which makes use of results on the behavior of Hicksian demand and expenditure functions in homothetic Fisher markets introduced by \citeauthor{goktas2022consumer} \cite{goktas2022consumer}. In conjunction with \Cref{law-of-demand-corollary} and \Cref{hicksian-is-expend-minimizer} are key in proving that \Cref{ineq-devanur} holds allowing us to establish convergence of t\^atonnement in a general setting of utility functions.
Additionally, the lemma relates the Marshallian demand of homogeneous utility functions to their Hicksian demand. Before we present the lemma, we recall the following identities \cite{mas-colell}:
\begin{align}
    &\forall \budget[\buyer] \in \mathbb{R}_{+} & \expend[\buyer](\price, \indirectutil[\buyer](\price, \budget[\buyer])) = \budget[\buyer]\label{expend-to-budget}\\
    &\forall \goalutil[\buyer] \in \mathbb{R}_{+} &  \indirectutil[\buyer](\price, \expend[\buyer](\price, \goalutil[\buyer])) = \goalutil[\buyer]\label{indirect-to-value}\\
    & \forall \budget[\buyer] \in \mathbb{R}_{+} & \hicksian[\buyer](\price, \indirectutil[\buyer](\price, \budget[\buyer])) = \marshallian[\buyer](\price, \budget[\buyer])
    \label{hicksian-marshallian}\\
    &\forall \goalutil[\buyer] \in \mathbb{R}_{+} &  \marshallian[\buyer](\price, \expend[\buyer](\price, \goalutil[\buyer])) = \hicksian[\buyer](\price, \goalutil[\buyer])\label{marshallian-hicksian}
\end{align}

\begin{lemma}\label{equiv-def-demand}
Suppose that $\util[\buyer]$ is continuous and homogeneous, then the following holds:
\begin{align*}
    &\forall \good \in \goods &\marshallian[\buyer][\good](\price, \budget[\buyer]) = \frac{\budget[\buyer] \hicksian[\buyer][\good](\price, 1)}{\sum_{\good \in \goods} \hicksian[\buyer][\good](\price, 1) \price[\good]}
\end{align*}
\end{lemma}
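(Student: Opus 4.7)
The plan is to derive the identity by chaining together three known facts: the identity $\marshallian[\buyer](\price, \budget[\buyer]) = \hicksian[\buyer](\price, \indirectutil[\buyer](\price, \budget[\buyer]))$ from \eqref{hicksian-marshallian}, the homogeneity of Hicksian demand in the utility level for homogeneous utilities (cited in the proof of \Cref{lemma:homo_elasticity} as Lemma 5 of \citet{goktas2022consumer}, i.e., $\hicksian[\buyer](\price, \lambda \goalutil) = \lambda \hicksian[\buyer](\price, \goalutil)$), and Shephard's lemma via the representation of the expenditure function as the inner product of prices and Hicksian demand.

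First, I would apply \eqref{hicksian-marshallian} and then use the degree-1 homogeneity of $\hicksian[\buyer]$ in the utility level to rewrite
\[
\marshallian[\buyer][\good](\price, \budget[\buyer]) \;=\; \hicksian[\buyer][\good](\price, \indirectutil[\buyer](\price, \budget[\buyer])) \;=\; \indirectutil[\buyer](\price, \budget[\buyer])\, \hicksian[\buyer][\good](\price, 1).
\]
Next, I would identify $\indirectutil[\buyer](\price, \budget[\buyer])$ explicitly. From \eqref{expend-to-budget}, $\expend[\buyer](\price, \indirectutil[\buyer](\price, \budget[\buyer])) = \budget[\buyer]$. Since the expenditure function is homogeneous of degree $1$ in the utility level when $\util[\buyer]$ is homogeneous (a direct consequence of homogeneity of $\hicksian[\buyer]$ in $\goalutil[\buyer]$ together with $\expend[\buyer](\price, \goalutil[\buyer]) = \langle \price, \hicksian[\buyer](\price, \goalutil[\buyer]) \rangle$), this gives $\indirectutil[\buyer](\price, \budget[\buyer]) \cdot \expend[\buyer](\price, 1) = \budget[\buyer]$, hence
\[
\indirectutil[\buyer](\price, \budget[\buyer]) \;=\; \frac{\budget[\buyer]}{\expend[\buyer](\price, 1)}.
\]

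Finally, I would substitute this expression into the earlier display and expand $\expend[\buyer](\price, 1) = \sum_{\good \in \goods} \hicksian[\buyer][\good](\price, 1)\, \price[\good]$ (either by definition of the expenditure function evaluated at the minimizer $\hicksian[\buyer](\price, 1)$, or by Shephard's lemma \Cref{shephard} combined with Euler's theorem applied to $\expend[\buyer]$'s homogeneity of degree $1$ in prices), yielding
\[
\marshallian[\buyer][\good](\price, \budget[\buyer]) \;=\; \frac{\budget[\buyer]\, \hicksian[\buyer][\good](\price, 1)}{\sum_{\good \in \goods} \hicksian[\buyer][\good](\price, 1)\, \price[\good]},
\]
which is the desired identity. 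I do not anticipate any real obstacle; the entire argument is a short chain of substitutions using identities already stated in the paper, with the only mild care being the justification that $\expend[\buyer]$ inherits degree-$1$ homogeneity in $\goalutil[\buyer]$ from the homogeneity of $\util[\buyer]$, which is handled cleanly by the cited result of \citet{goktas2022consumer}.
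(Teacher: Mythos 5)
Your proposal is correct and follows essentially the same route as the paper's proof: both chain the identity $\marshallian[\buyer](\price, \budget[\buyer]) = \hicksian[\buyer](\price, \indirectutil[\buyer](\price, \budget[\buyer]))$ with the degree-1 homogeneity of the Hicksian demand in the utility level and the expansion $\expend[\buyer](\price,1) = \sum_{\good \in \goods} \hicksian[\buyer][\good](\price,1)\price[\good]$, just traversed in the opposite direction. The only cosmetic difference is that where the paper cites Corollary 1 of \citet{goktas2022consumer} for $\indirectutil[\buyer](\price, \budget[\buyer]) = \nicefrac{\budget[\buyer]}{\expend[\buyer](\price, 1)}$, you re-derive it from the identity $\expend[\buyer](\price, \indirectutil[\buyer](\price, \budget[\buyer])) = \budget[\buyer]$ and homogeneity of the expenditure function in the utility level, which is equally valid.
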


\begin{proof}[Proof of \Cref{equiv-def-demand}]
We note that when utility function $\util[\buyer]$ is strictly concave, the Marshallian and Hicksian demand are unique making the following equalities well-defined.
\begin{align*}
    \frac{\budget[\buyer] \hicksian[\buyer][\good](\price, 1)}{\sum_{\good \in \goods} \hicksian[\buyer][\good](\price, 1) \price[\good]} &= \frac{\budget[\buyer] \hicksian[\buyer][\good](\price, 1)}{\expend[\buyer](\price, 1)} && \text{(Definition of expenditure function)}\\
    &= \budget[\buyer]\indirectutil[\buyer](\price, 1) \hicksian[\buyer][\good](\price, 1) && \text{(Corollary 1 of \citeauthor{goktas2022consumer} \cite{goktas2022consumer})} 
    % && \text{(\Cref{inverse-expend})}
    \\
    &= \indirectutil[\buyer](\price, \budget[\buyer]) \hicksian[\buyer][\good](\price, 1)
    % && \text{(\Cref{homo-indirect-util})}
    \\
    &= \hicksian[\buyer][\good](\price, \indirectutil[\buyer](\price, \budget[\buyer])) 
    % && \text{(\Cref{homo-expend})}
    \\
    &= \marshallian[\buyer][\good](\price, \budget[\buyer]) && \text{(Marshallian Demand Identity \Cref{hicksian-marshallian})}
\end{align*}
\end{proof}

The following lemma proves that the relative change in expenditures at each iteration of tatonnement is bounded when the relative change in prices is bounded.
\begin{lemma}\label{lemma:expend_relative_change}
Suppose that $\forall \good \in \goods, \frac{|\pricediff[\good]|}{\price[\good][\iter]} \leq \frac{1}{4}$, then for any $t \in \N_+$ and $\buyer \in \buyers$:
\begin{align}
    \left|\frac{\left< \hicksian[\buyer](\price[][\iter+1], 1), \price[][\iter+1]\right> - \left< \hicksian[\buyer](\price[][\iter], 1), \price[][\iter]\right> }{\left<\hicksian[\buyer][][\iter], \price[][\iter] \right>} \right| \leq \frac{1}{4}
\end{align}
\end{lemma}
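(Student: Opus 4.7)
The plan is to exploit two standard properties of the expenditure function, namely monotonicity and positive homogeneity of degree 1 in prices, both of which hold whenever utilities are continuous and homothetic (as stated in the consumer theory preliminaries). Observe that by the definition of the Hicksian demand and expenditure function, $\left<\hicksian[\buyer](\price,1),\price\right>=\expend[\buyer](\price,1)$, so the quantity I want to bound is simply the relative change in expenditure
\[
\frac{\expend[\buyer](\price[][\iter+1],1)-\expend[\buyer](\price[][\iter],1)}{\expend[\buyer](\price[][\iter],1)}.
\]

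First I would translate the componentwise bound $|\pricediff[\good]|/\price[\good][\iter]\le 1/4$ into a componentwise sandwich on the price vectors: $\tfrac{3}{4}\price[\good][\iter]\le \price[\good][\iter+1]\le \tfrac{5}{4}\price[\good][\iter]$ for every $\good\in\goods$. Next, I would invoke the fact that $\expend[\buyer](\cdot,1)$ is non-decreasing in prices (a standard property of expenditure functions, noted right after the definition of $\expend[\buyer]$ in the preliminaries), which gives
\[
\expend[\buyer]\bigl(\tfrac{3}{4}\price[][\iter],1\bigr)\le \expend[\buyer](\price[][\iter+1],1)\le \expend[\buyer]\bigl(\tfrac{5}{4}\price[][\iter],1\bigr).
\]
Then I would apply homogeneity of degree $1$ in $\price$, another standard property from the preliminaries, to pull the scalars out: the left-hand side equals $\tfrac{3}{4}\expend[\buyer](\price[][\iter],1)$ and the right-hand side equals $\tfrac{5}{4}\expend[\buyer](\price[][\iter],1)$.

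Subtracting $\expend[\buyer](\price[][\iter],1)$ throughout yields $-\tfrac{1}{4}\expend[\buyer](\price[][\iter],1)\le \expend[\buyer](\price[][\iter+1],1)-\expend[\buyer](\price[][\iter],1)\le \tfrac{1}{4}\expend[\buyer](\price[][\iter],1)$, and dividing by $\expend[\buyer](\price[][\iter],1)>0$ (which is positive since prices stay strictly positive along entropic \emph{t\^atonnement\/} and $\util[\buyer]$ is locally non-satiated) produces exactly the stated bound of $\tfrac{1}{4}$.

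Honestly there is no significant obstacle here; the only subtlety is checking that $\expend[\buyer](\price[][\iter],1)>0$ so the division is legal, and being explicit that monotonicity and homogeneity apply componentwise/multiplicatively as claimed. One could alternatively give a proof using concavity of $\expend[\buyer]$ in $\price$ together with Shephard's lemma (\Cref{shephard}), bounding $\expend[\buyer](\price[][\iter+1],1)-\expend[\buyer](\price[][\iter],1)$ above by $\langle\hicksian[\buyer](\price[][\iter],1),\pricediff\rangle$ and below by $\langle\hicksian[\buyer](\price[][\iter+1],1),\pricediff\rangle$, but the monotonicity–homogeneity route is cleaner and avoids needing concavity.
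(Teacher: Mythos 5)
Your proposal is correct, and it reaches the bound by a slightly different route than the paper. You reduce everything to the identity $\left<\hicksian[\buyer](\price,1),\price\right>=\expend[\buyer](\price,1)$ and then sandwich $\expend[\buyer](\price[][\iter+1],1)$ between $\tfrac{3}{4}\expend[\buyer](\price[][\iter],1)$ and $\tfrac{5}{4}\expend[\buyer](\price[][\iter],1)$ using monotonicity and degree-one homogeneity of the expenditure function in prices; the paper instead argues by two cases on the sign of the change, replacing $\hicksian[\buyer](\price[][\iter+1],1)$ by $\hicksian[\buyer](\price[][\iter],1)$ (or vice versa) via the cost-minimality of the Hicksian bundle (cited there as \Cref{law-of-demand-corollary}, with \Cref{hicksian-is-expend-minimizer} playing the same role) and then applying the componentwise $\tfrac{5}{4}$ and $\tfrac{3}{4}$ price factors directly to the inner products. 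The two arguments rest on the same underlying fact—optimality of the Hicksian demand—but your packaging avoids the case split entirely and, notably, does not invoke the law-of-demand corollary, which the paper states under a concavity hypothesis that sits awkwardly with its emphasis on non-concave homogeneous utilities; monotonicity and homogeneity of $\expend[\buyer](\cdot,1)$ hold for any continuous utility. The paper's version, on the other hand, stays phrased in exactly the inner-product form appearing in the lemma and reuses lemmas already established in its appendix. Your side remarks are also sound: positivity of $\expend[\buyer](\price[][\iter],1)$ follows since entropic \emph{t\^atonnement} keeps prices strictly positive and $\util[\buyer](\zeros)=0<1$ forces a nonzero, hence costly, bundle.
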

\begin{proof}[Proof of \Cref{lemma:expend_relative_change}]
\noindent Case 1: $\left< \hicksian[\buyer](\price[][\iter+1], 1), \price[][\iter+1]\right> \geq \left< \hicksian[\buyer](\price[][\iter], 1), \price[][\iter]\right>$ 
    \begin{align}
        &\frac{\left< \hicksian[\buyer](\price[][\iter+1], 1), \price[][\iter+1]\right> - \left< \hicksian[\buyer](\price[][\iter], 1), \price[][\iter]\right> }{\left<\hicksian[\buyer][][\iter], \price[][\iter] \right>} \\
        &\leq \frac{\left< \hicksian[\buyer](\price[][\iter], 1), \price[][\iter+1]\right> - \left< \hicksian[\buyer](\price[][\iter], 1), \price[][\iter]\right> }{\left<\hicksian[\buyer][][\iter], \price[][\iter] \right>}  && \text{(\Cref{law-of-demand-corollary})}\\
        &= \frac{\left< \hicksian[\buyer](\price[][\iter], 1), \price[][\iter+1] \right>}{\left<\hicksian[\buyer][][\iter], \price[][\iter] \right>} - 1  \\
        &\leq \frac{5}{4}\frac{\left< \hicksian[\buyer](\price[][\iter], 1), \price[][\iter]\right>}{\left<\hicksian[\buyer][][\iter], \price[][\iter] \right>} - 1  \\
        &= \frac{1}{4}
    \end{align}
\noindent 
where the penultimate line follows from the assumption that $\forall \good \in \goods, \frac{|\pricediff[\good]|}{\price[\good][\iter]} \leq \frac{1}{4}$.

\noindent Case 2: $\left< \hicksian[\buyer](\price[][\iter+1], 1), \price[][\iter+1]\right> \leq \left< \hicksian[\buyer](\price[][\iter], 1), \price[][\iter]\right>$
\begin{align}
    \frac{\left< \hicksian[\buyer](\price[][\iter], 1), \price[][\iter]\right> - \left< \hicksian[\buyer](\price[][\iter+1], 1), \price[][\iter+1]\right> }{\left<\hicksian[\buyer][][\iter], \price[][\iter] \right>} &= 1 - \frac{\left< \hicksian[\buyer](\price[][\iter+1], 1), \price[][\iter+1]\right> }{\left<\hicksian[\buyer][][\iter], \price[][\iter] \right>}\\
    &\leq 1 - \frac{3}{4} \frac{\left< \hicksian[\buyer](\price[][\iter+1], 1), \price[][\iter]\right> }{\left<\hicksian[\buyer][][\iter], \price[][\iter] \right>}\\
    &\leq 1 - \frac{3}{4} \frac{\left< \hicksian[\buyer](\price[][\iter], 1), \price[][\iter]\right> }{\left<\hicksian[\buyer][][\iter], \price[][\iter] \right>} && \text{(\Cref{law-of-demand-corollary})}\\
    &= \frac{1}{4}
\end{align}
where the second line follows from the assumption that $\forall \good \in \goods, \frac{|\pricediff[\good]|}{\price[\good][\iter]} \leq \frac{1}{4}$.
\end{proof}

\begin{lemma}\label{lemma:expend_change_squared}
Suppose that for all $\good \in \goods$, $\frac{|\pricediff[\good]|}{\price[\good]} \leq \frac{1}{4}$, then for some $c \in (0, 1)$ and $t \in \N_+$, we have:
\begin{align}
    &\frac{1}{\budget[\buyer]} \left( \left<\marshallian[\buyer][][t], \pricediff \right> +\frac{\budget[\buyer]}{2} \frac{\left< \grad[\price]^2 \expend[\buyer](\price[][\iter] + c \pricediff, 1) \pricediff, \pricediff \right>}{\expend[\buyer](\price[][\iter], 1)} \right)^2\\ 
    &\leq \left(1+ \frac{5 \elastic}{9} \right)\sum_{l \in \goods} \frac{\marshallian[\buyer][l][\iter]}{\price[l][\iter]}(\pricediff[l])^2  + \left(\frac{25\elastic^2 }{432} \right) \sum_{l \in \goods} \frac{\marshallian[\buyer][l](\price[][\iter] + c \pricediff, \budget[\buyer])}{\price[l][\iter]}(\pricediff[l])^2
\end{align}
\end{lemma}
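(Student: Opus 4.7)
The plan is to expand the square as $(A+B)^2 = A^2 + 2AB + B^2$, where $A \doteq \left\langle \marshallian[\buyer][][t], \pricediff \right\rangle$ is the first-order Taylor term and $B \doteq \tfrac{\budget[\buyer]}{2}\langle \grad[\price]^2 \expend[\buyer](\price[][t] + c\pricediff, 1)\,\pricediff,\, \pricediff\rangle / \expend[\buyer](\price[][t], 1)$ is the second-order term already analyzed in the paragraph preceding \Cref{lemma:expend_change}. Writing $X \doteq \sum_l \marshallian[\buyer][l][t](\pricediff[l])^2/\price[l][t]$ and $Y \doteq \sum_l \marshallian[\buyer][l](\price[][t] + c\pricediff, \budget[\buyer])(\pricediff[l])^2/\price[l][t]$ for the two sums appearing on the RHS, I will bound each of $A^2$, $B^2$, and $2AB$ separately in terms of $\budget[\buyer]$ times a linear combination of $X$ and $Y$, using Young's inequality to redistribute the cross term.

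For $A^2/\budget[\buyer]$, Cauchy--Schwarz with the splitting $\marshallian[\buyer][l][t]\,\pricediff[l] = \sqrt{\marshallian[\buyer][l][t]\,\price[l][t]}\cdot \sqrt{\marshallian[\buyer][l][t]/\price[l][t]}\,\pricediff[l]$ combined with Walras' law $\sum_l \marshallian[\buyer][l][t]\price[l][t] = \budget[\buyer]$ yields $A^2/\budget[\buyer] \le X$, the ``unit'' contribution to the first sum. For $B^2/\budget[\buyer]$, I begin from the bound $|B| \le \tfrac{5\elastic}{6}\sum_l \marshallian[\buyer][l](\price[][t] + c\pricediff, \budget[\buyer])(\pricediff[l])^2/\price[l][t]$ supplied by \Cref{lemma:expend_change}, square it, and apply Cauchy--Schwarz via the splitting $\sqrt{\marshallian[\buyer][l](\cdot)(\price[l][t] + c\pricediff[l])}\cdot\sqrt{\marshallian[\buyer][l](\cdot)/(\price[l][t] + c\pricediff[l])}\,(\pricediff[l])^2/\price[l][t]$; Walras at $\price[][t] + c\pricediff$ controls the first factor by $\budget[\buyer]$, while the hypothesis $|\pricediff[l]|/\price[l][t] \le 1/4$ (giving $\price[l][t] + c\pricediff[l] \ge \tfrac{3}{4}\price[l][t]$ and $(\pricediff[l])^2/(\price[l][t])^2 \le \tfrac{1}{16}$) reduces the second factor to at most $\tfrac{1}{12}\budget[\buyer] Y$, so that $B^2/\budget[\buyer] \le \tfrac{25\elastic^2}{432}\,Y$. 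For the cross term $2AB/\budget[\buyer]$, Young's inequality $2|AB| \le \lambda A^2 + B^2/\lambda$ with $\lambda = 5\elastic/9$ gives the coefficient $\tfrac{5\elastic}{9}$ on $X$, which when added to the unit contribution of $A^2/\budget[\buyer]$ matches the target $1 + \tfrac{5\elastic}{9}$; the residual $B^2/(\lambda\budget[\buyer])$ is absorbed into the coefficient of $Y$.

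The main obstacle is purely bookkeeping: the $\tfrac{5\elastic}{6}$ factor from \Cref{lemma:expend_change}, the $\tfrac{3}{4}$ and $\tfrac{1}{16}$ losses from the price-gap hypothesis, the two Cauchy--Schwarz splittings, and the choice $\lambda = 5\elastic/9$ in Young's inequality must all funnel into the final coefficients $1 + \tfrac{5\elastic}{9}$ and $\tfrac{25\elastic^2}{432}$ without overshoot. Each individual step is a routine Cauchy--Schwarz or AM--GM manipulation; only the constant-matching requires careful accounting, and in particular a slightly different choice of $\lambda$ or of the split in the second Cauchy--Schwarz would yield a weaker, though qualitatively identical, bound.
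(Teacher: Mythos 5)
Your decomposition $(A+B)^2$ with $A=\left\langle \marshallian[\buyer][][t], \pricediff\right\rangle$ and $B$ the second-order Taylor term, and your bounds on the two pure terms, are sound and match the paper's constants: $A^2/\budget[\buyer]\le X$ follows from Cauchy--Schwarz plus Walras' law exactly as in the paper's \Cref{leontief-ineq}, and your estimate $B^2/\budget[\buyer]\le \frac{25\elastic^2}{432}Y$ (via \Cref{lemma:expend_change}, Walras' law at $\price[][\iter]+c\pricediff$, and the $\tfrac{3}{4}$ and $\tfrac{1}{16}$ losses) reproduces the paper's $\frac{25\elastic^2}{576}\cdot\frac43$ arithmetic. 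The genuine gap is the cross term. Young's inequality $2|AB|\le \lambda A^2+B^2/\lambda$ cannot yield the stated constants for any $\lambda$: matching the $X$-coefficient forces $\lambda\le\frac{5\elastic}{9}$, and then the residual $B^2/(\lambda\budget[\buyer])$ contributes at least $\frac{9}{5\elastic}\cdot\frac{25\elastic^2}{432}Y=\frac{5\elastic}{48}Y$, so your final bound has $Y$-coefficient $\frac{25\elastic^2}{432}+\frac{5\elastic}{48}$, strictly larger than the claimed $\frac{25\elastic^2}{432}$, which has no $O(\elastic)$ slack. Your closing remark anticipates a ``weaker, though qualitatively identical'' bound, but as written the proposal does not prove the lemma as stated (and the downstream constants in \Cref{fix-buyer-bound} and \Cref{ineq-devanur} are computed from these exact coefficients).

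The missing idea is to keep the cross term as a product rather than splitting it by Young. The paper bounds $2|A|\,|B|\le 2\left\langle\marshallian[\buyer][][t],|\pricediff|\right\rangle\cdot\frac{5\elastic}{6}\sum_{\good}\frac{(\pricediff[\good])^2}{\price[\good][\iter]}\marshallian[\buyer][\good](\price[][\iter]+c\pricediff,\budget[\buyer])$, uses the hypothesis to trade $\frac{(\pricediff[\good])^2}{\price[\good][\iter]}\le\frac14|\pricediff[\good]|$, and obtains $\frac{5\elastic}{12}$ times the mixed double sum $\sum_{\good,k}\marshallian[\buyer][k][\iter]\,\marshallian[\buyer][\good](\price[][\iter]+c\pricediff,\budget[\buyer])\,|\pricediff[\good]||\pricediff[k]|$. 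The key estimate, \Cref{lemma:spending_change_squared}, shows via Walras' law at the intermediate price (using $\price[k][\iter]+c\pricediff[k]\ge\frac34\price[k][\iter]$) and AM--GM that this mixed sum is at most $\frac43\budget[\buyer]\sum_{l}\frac{\marshallian[\buyer][l][\iter]}{\price[l][\iter]}(\pricediff[l])^2$, so the entire cross term lands on the $X$-coefficient as $\frac{5\elastic}{12}\cdot\frac43=\frac{5\elastic}{9}$ and contributes nothing to $Y$. Replacing your Young step with this direct mixed-sum bound closes the gap and recovers the stated constants.
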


\begin{proof}[Proof of \Cref{lemma:expend_change_squared}]
\begin{align}
     &\frac{1}{\budget[\buyer]} \left( \left<\marshallian[\buyer][][t], \pricediff \right> +\frac{\budget[\buyer]}{2} \frac{\left< \grad[\price]^2 \expend[\buyer](\price[][\iter] + c \pricediff, 1) \pricediff, \pricediff \right>}{\expend[\buyer](\price[][\iter], 1)} \right)^2 \\
     &= \frac{1}{\budget[\buyer]} \left[  \left<\marshallian[\buyer][][t], \pricediff \right>^2 + 2 \left<\marshallian[\buyer][][t], \pricediff \right> \left(\frac{\budget[\buyer]}{2} \frac{\left< \grad[\price]^2 \expend[\buyer](\price[][\iter] + c \pricediff, 1) \pricediff, \pricediff \right>}{\expend[\buyer](\price[][\iter], 1)} \right)  + \left(\frac{\budget[\buyer]}{2} \frac{\left< \grad[\price]^2 \expend[\buyer](\price[][\iter] + c \pricediff, 1) \pricediff, \pricediff \right>}{\expend[\buyer](\price[][\iter], 1)}  \right)^2  \right]\\
     &\leq \frac{1}{\budget[\buyer]} \left[  \left| \left<\marshallian[\buyer][][t], \pricediff \right>^2 \right| + 2 \left|\left<\marshallian[\buyer][][t], \pricediff \right> \right| \left|\frac{\budget[\buyer]}{2} \frac{\left< \grad[\price]^2 \expend[\buyer](\price[][\iter] + c \pricediff, 1) \pricediff, \pricediff \right>}{\expend[\buyer](\price[][\iter], 1)} \right|  + \left|\frac{\budget[\buyer]}{2} \frac{\left< \grad[\price]^2 \expend[\buyer](\price[][\iter] + c \pricediff, 1) \pricediff, \pricediff \right>}{\expend[\buyer](\price[][\iter], 1)}  \right|^2  \right]\\
     &\leq \frac{1}{\budget[\buyer]} \left[ \left<\marshallian[\buyer][][t], \left|\pricediff\right| \right>^2  + 2 \left<\marshallian[\buyer][][t], \left|\pricediff\right| \right>  \left|\frac{\budget[\buyer]}{2} \frac{\left< \grad[\price]^2 \expend[\buyer](\price[][\iter] + c \pricediff, 1) \pricediff, \pricediff \right>}{\expend[\buyer](\price[][\iter], 1)} \right|  + \left|\frac{\budget[\buyer]}{2} \frac{\left< \grad[\price]^2 \expend[\buyer](\price[][\iter] + c \pricediff, 1) \pricediff, \pricediff \right>}{\expend[\buyer](\price[][\iter], 1)}  \right|^2  \right]
\end{align}
\noindent where we denote $\left|\pricediff \right| = (\left|\pricediff[1] \right|, \hdots, \left|\pricediff[\numgoods] \right|) $.

\begin{align}
    &\leq \frac{1}{\budget[\buyer]} \left[ \left<\marshallian[\buyer][][t], \left|\pricediff\right| \right>^2  + 2 \left<\marshallian[\buyer][][t], \left|\pricediff \right| \right> \notag \right. \\ 
    &\left. \left( \frac{5\elastic  }{6}    \sum_{\good} \frac{(\pricediff[\good])^2}{\price[\good]}  \marshallian[\buyer][\good](\price[][\iter] + c \pricediff, \budget[\buyer]) \right) 
    %  \right. \notag \\ &+ \left. 
      +\left(\frac{5\elastic  }{6}  \sum_{\good} \frac{(\pricediff[\good])^2}{\price[\good]}  \marshallian[\buyer][\good](\price[][\iter] + c \pricediff, \budget[\buyer])  \right)^2  \right]&& \text{(\Cref{lemma:expend_change})}\\
     &= \frac{1}{\budget[\buyer]} \left[ \left<\marshallian[\buyer][][t], \left|\pricediff\right| \right>^2  + \frac{5\elastic }{3}  \left<\marshallian[\buyer][][t], \left|\pricediff\right| \right>  \left(   \sum_{\good} \frac{(\pricediff[\good])^2}{\price[\good]}  \marshallian[\buyer][\good](\price[][\iter] + c \pricediff, \budget[\buyer]) \right) 
     \right. \notag \\ &+ \left.
     \frac{25\elastic^2 }{36}   \left(  \sum_{\good} \frac{(\pricediff[\good])^2}{\price[\good]}  \marshallian[\buyer][\good](\price[][\iter] + c \pricediff, \budget[\buyer])  \right)^2  \right]
\end{align}

Since $\forall \good \in \goods, \frac{|\pricediff[\good]|}{\price[\good][\iter]} \leq \frac{1}{4}$, we have: 
\begin{align}
    &\leq\frac{1}{\budget[\buyer]} \left[ \left<\marshallian[\buyer][][t], \left|\pricediff \right|\right>^2  + \frac{5 \elastic }{12}  \left<\marshallian[\buyer][][t], \left|\pricediff\right| \right>  \left(   \sum_{\good} \left|\pricediff[\good]\right| \marshallian[\buyer][\good](\price[][\iter] + c \pricediff, 1) \right) 
    %  \right. \notag \\ &+ \left.
    +\frac{25\elastic^2 }{576} \left(  \sum_{\good} |\pricediff[\good]| \marshallian[\buyer][\good](\price[][\iter] + c \pricediff, \budget[\buyer])  \right)^2  \right]\\
    %  &=\frac{1}{\budget[\buyer]}  \left<\marshallian[\buyer][][t], \pricediff \right>^2  + \frac{1}{\budget[\buyer]} \frac{2 \elastic  (2 \pricechange - 1)}{(2\pricechange - 3)(2 \pricechange + 1)}  \left|\left<\marshallian[\buyer][][t], \pricediff \right> \right| \left(   \sum_{\good} \left|\pricediff[\good]\right| \marshallian[\buyer][\good](\price[][\iter] + c \pricediff, 1) \right) 
    %   \notag \\ &+ 
    %  \frac{1}{\budget[\buyer]} \frac{\elastic^2  (2 \pricechange - 1)^2}{(2 \pricechange - 3)^2(2 \pricechange + 1)^2}  \left(  \sum_{\good} |\pricediff[\good]| \marshallian[\buyer][\good](\price[][\iter] + c \pricediff, 1)  \right)^2 \\
     &= \frac{1}{\budget[\buyer]}  \sum_{\good \in \goods} \sum_{k \in \goods} \marshallian[\buyer][\good][\iter] \marshallian[\buyer][k][\iter] |\pricediff[\good]| |\pricediff[k]|   + \frac{1}{\budget[\buyer]} \frac{5 \elastic }{12} \sum_{\good} \sum_{k} \marshallian[\buyer][k][\iter]  \marshallian[\buyer][\good](\price[][\iter] + c \pricediff, \budget[\buyer]) \left|\pricediff[k] \right|     \left|\pricediff[\good]\right|
      \notag \\ &+ 
    \frac{1}{\budget[\buyer]} \frac{25\elastic^2 }{576}   \sum_{\good} \sum_{k} \marshallian[\buyer][\good](\price[][\iter] + c \pricediff, \budget[\buyer])  \marshallian[\buyer][k](\price[][\iter] + c \pricediff, \budget[\buyer])  |\pricediff[k]| |\pricediff[\good]|  \\
    &\leq  \sum_{l \in \goods} \frac{\marshallian[\buyer][l][\iter]}{\price[l][\iter]}(\pricediff[l])^2  + \frac{1}{\budget[\buyer]} \frac{5 \elastic }{12} \sum_{\good} \sum_{k} \marshallian[\buyer][k][\iter]  \marshallian[\buyer][\good](\price[][\iter] + c \pricediff, \budget[\buyer]) \left|\pricediff[k] \right|     \left|\pricediff[\good]\right|
      \notag \\ &+ 
    \frac{1}{\budget[\buyer]} \frac{25\elastic^2 }{576}   \sum_{\good} \sum_{k} \marshallian[\buyer][\good](\price[][\iter] + c \pricediff, \budget[\buyer])  \marshallian[\buyer][k](\price[][\iter] + c \pricediff, \budget[\buyer])  |\pricediff[k]| |\pricediff[\good]| 
\end{align}
where the last line was obtained by (\Cref{leontief-ineq}). Continuing, by \Cref{lemma:spending_change_squared}, we have:
\begin{align}
    &\leq  \sum_{l \in \goods} \frac{\marshallian[\buyer][l][\iter]}{\price[l][\iter]}(\pricediff[l])^2  + \frac{5 \elastic }{12}  \frac{4}{3} \sum_{l \in \goods} \frac{\marshallian[\buyer][l][\iter]}{\price[l][\iter]}(\pricediff[l])^2 + \frac{25\elastic^2 }{576}  \frac{4}{3} \sum_{l \in \goods} \frac{\marshallian[\buyer][l](\price[][\iter] + c \pricediff, \budget[\buyer])}{\price[l][\iter]}(\pricediff[l])^2  \\
    % &= \sum_{l \in \goods} \frac{\marshallian[\buyer][l][\iter]}{\price[l][\iter]}(\pricediff[l])^2  + \left(\frac{5 \elastic}{9} + \frac{25\elastic^2 }{432} \right) \sum_{l \in \goods} \frac{\marshallian[\buyer][l](\price[][\iter] + c \pricediff, \budget[\buyer])}{\price[l][\iter]}(\pricediff[l])^2\\
    &= \left(1+ \frac{5 \elastic}{9} \right)\sum_{l \in \goods} \frac{\marshallian[\buyer][l][\iter]}{\price[l][\iter]}(\pricediff[l])^2  + \left(\frac{25\elastic^2 }{432} \right) \sum_{l \in \goods} \frac{\marshallian[\buyer][l](\price[][\iter] + c \pricediff, \budget[\buyer])}{\price[l][\iter]}(\pricediff[l])^2
    %   \notag \\ &
    % +   \sum_{l \in \goods} \frac{\marshallian[\buyer][l](\price[][\iter] + c \pricediff, \budget[\buyer])}{\price[l][\iter]}(\pricediff[l])^2\\
\end{align}

\end{proof}

\begin{lemma}\label{fix-buyer-bound}
Suppose that $\frac{|\pricediff[\good]|}{\price[\good][\iter]} \leq \frac{1}{4}$, then
\begin{align*}
    &\budget[\buyer] \log \left( 1 - \frac{\expend[\buyer](\price[][\iter+1], 1) - \expend[\buyer](\price[][\iter], 1)}{\expend[\buyer](\price[][\iter], 1)} \left(1+ \frac{\expend[\buyer](\price[][\iter+1], 1) - \expend[\buyer](\price[][\iter], 1)}{\expend[\buyer](\price[][\iter], 1)} \right)^{-1} \right) \\
    % &\leq \frac{4}{3} \sum_{l \in \goods} \frac{\marshallian[\buyer][l][\iter]}{\price[l][\iter]}(\pricediff[l])^2  + \left(\frac{155 \elastic}{42} + \frac{25\elastic^2 }{324} \right) \sum_{l \in \goods} \frac{(\pricediff[l])^2}{\price[l][\iter]}\marshallian[\buyer][l](\price[][\iter] + c \pricediff, \budget[\buyer]) - \left< \marshallian[\buyer](\price[][\iter], 1), \pricediff \right>
    &\leq \left(\frac{4}{3}+\frac{20\elastic}{27}\right) \sum_{l \in \goods} \frac{\marshallian[\buyer][l][\iter]}{\price[l][\iter]}(\pricediff[l])^2  + \left(\frac{5 \elastic}{6} + \frac{25\elastic^2 }{324} \right) \sum_{l \in \goods} \frac{(\pricediff[l])^2}{\price[l][\iter]}\marshallian[\buyer][l](\price[][\iter] + c \pricediff, \budget[\buyer]) - \left< \marshallian[\buyer](\price[][\iter], 1), \pricediff \right>
\end{align*}
\end{lemma}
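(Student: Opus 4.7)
The plan is to massage the left-hand side into a form to which the earlier lemmas (\Cref{lemma:expend_change}, \Cref{lemma:expend_change_squared}, \Cref{lemma:expend_relative_change}, \Cref{equiv-def-demand}) apply directly. Define
\[
\delta \doteq \frac{\expend[\buyer](\price[][\iter+1],1) - \expend[\buyer](\price[][\iter],1)}{\expend[\buyer](\price[][\iter],1)} .
\]
The first observation is the algebraic identity $1 - \delta(1+\delta)^{-1} = (1+\delta)^{-1}$, which lets us rewrite the left-hand side as $-\budget[\buyer]\log(1+\delta)$. By \Cref{lemma:expend_relative_change}, the hypothesis $|\pricediff[\good]|/\price[\good][\iter] \leq 1/4$ yields $|\delta| \leq 1/4$, so we may invoke the scalar inequality
\[
-\log(1+x) \;\leq\; -x + \tfrac{4}{3}x^2 \qquad \text{for } |x|\leq \tfrac{1}{4},
\]
which is elementary to verify from the Taylor series of $\log(1+x)$. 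This yields
\[
\budget[\buyer]\log\!\bigl(1-\delta(1+\delta)^{-1}\bigr) \;\leq\; -\budget[\buyer]\delta + \tfrac{4}{3}\budget[\buyer]\delta^{2}.
\]

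Next I would unpack $\budget[\buyer]\delta$ using a second-order Taylor expansion of $\expend[\buyer](\cdot,1)$ around $\price[][\iter]$, exactly as on the page preceding \Cref{lemma:expend_change}. By Shephard's lemma (\Cref{shephard}) the first-order term becomes $\langle \hicksian[\buyer](\price[][\iter],1),\pricediff\rangle$; dividing by $\expend[\buyer](\price[][\iter],1) = \langle\hicksian[\buyer](\price[][\iter],1),\price[][\iter]\rangle$ and multiplying by $\budget[\buyer]$ and then applying \Cref{equiv-def-demand} converts the ratio $\budget[\buyer]\hicksian[\buyer][\good](\price[][\iter],1)/\expend[\buyer](\price[][\iter],1)$ into the Marshallian demand $\marshallian[\buyer][\good]^{(\iter)}$. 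Hence
\[
\budget[\buyer]\delta \;=\; \langle \marshallian[\buyer]^{(\iter)}, \pricediff\rangle \;+\; \tfrac{\budget[\buyer]}{2}\,\frac{\langle \grad[\price]^2\expend[\buyer](\price[][\iter]+c\pricediff,1)\pricediff,\pricediff\rangle}{\expend[\buyer](\price[][\iter],1)}
\]
for some $c\in(0,1)$ supplied by Taylor's theorem. Upper bounding the absolute value of the Hessian term by \Cref{lemma:expend_change} gives
\[
-\budget[\buyer]\delta \;\leq\; -\langle \marshallian[\buyer]^{(\iter)},\pricediff\rangle + \tfrac{5\elastic}{6}\sum_{\good}\frac{(\pricediff[\good])^{2}}{\price[\good][\iter]}\marshallian[\buyer][\good](\price[][\iter]+c\pricediff,\budget[\buyer]).
\]

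For the quadratic term $\tfrac{4}{3}\budget[\buyer]\delta^{2}=\tfrac{4}{3}\cdot\tfrac{(\budget[\buyer]\delta)^{2}}{\budget[\buyer]}$, I would apply \Cref{lemma:expend_change_squared} verbatim (with the same Taylor decomposition of $\budget[\buyer]\delta$), which produces
\[
\tfrac{4}{3}\budget[\buyer]\delta^{2} \;\leq\; \tfrac{4}{3}\!\left(1+\tfrac{5\elastic}{9}\right)\sum_{l}\frac{\marshallian[\buyer][l]^{(\iter)}}{\price[l][\iter]}(\pricediff[l])^{2} + \tfrac{4}{3}\cdot\tfrac{25\elastic^{2}}{432}\sum_{l}\frac{\marshallian[\buyer][l](\price[][\iter]+c\pricediff,\budget[\buyer])}{\price[l][\iter]}(\pricediff[l])^{2}.
\]
Simplifying the constants gives $\tfrac{4}{3}+\tfrac{20\elastic}{27}$ and $\tfrac{25\elastic^{2}}{324}$, and combining with the bound on $-\budget[\buyer]\delta$ (the $\tfrac{5\elastic}{6}$ term merging with $\tfrac{25\elastic^2}{324}$ into the coefficient $\tfrac{5\elastic}{6}+\tfrac{25\elastic^2}{324}$) produces precisely the claimed inequality.

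The only non-routine step is pinning down the right constant in the scalar inequality $-\log(1+x)\leq -x+Cx^{2}$: the constant $C=\tfrac{4}{3}$ (rather than $\tfrac{1}{2}$, which is the Taylor-series leading coefficient) is what yields the coefficients $\tfrac{4}{3}+\tfrac{20\elastic}{27}$ and $\tfrac{25\elastic^{2}}{324}$ in the statement. I expect this calibration against the $1/4$-bound on $|\delta|$ from \Cref{lemma:expend_relative_change} to be the main bookkeeping obstacle; once it is in place, the remaining work is a mechanical chaining of \Cref{lemma:expend_change} and \Cref{lemma:expend_change_squared}.
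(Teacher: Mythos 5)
Your proposal is correct and follows essentially the same route as the paper's proof: bound the logarithmic term by $\budget[\buyer]\left(\tfrac{4}{3}\delta^2 - \delta\right)$ using $|\delta|\leq\tfrac14$ from \Cref{lemma:expend_relative_change}, decompose $\budget[\buyer]\delta$ via Taylor's theorem, Shephard's lemma and \Cref{equiv-def-demand}, then control the Hessian term with \Cref{lemma:expend_change} and the squared term with \Cref{lemma:expend_change_squared}. The only (cosmetic) difference is that you collapse the paper's two-step logarithm bound, namely $1-x(1+x)^{-1}\leq 1+\tfrac43 x^2 - x$ followed by $\log(1+y)\leq y$, into the single equivalent scalar inequality $-\log(1+x)\leq -x+\tfrac43 x^2$ for $|x|\leq\tfrac14$.
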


\begin{proof}[Proof of \Cref{fix-buyer-bound}]
First, we note that $ \hicksian[\buyer][][\iter] \cdot \price[][\iter] > 0$ because prices during our t\^atonnement rule reach 0 only asymptotically and Hicksian demand for one unit of utility at prices $\price[][\iter] > 0$ is strictly positive; and likewise, prices reach $\infty$ only asymptotically, which implies that Hicksian demand is always strictly positive. This fact will come handy, as we divide some expressions by  $ \hicksian[\buyer][][\iter] \cdot \price[][\iter]$.

Fix $t \in \N_+$ and $\buyer \in \buyers$.
Since by our assumptions $\frac{|\pricediff[\good]|}{\price[\good][\iter]} \leq \frac{1}{4} $, by \Cref{lemma:expend_relative_change}, we have $0 \leq \left|\frac{\expend[\buyer](\price[][\iter+1], 1) - \expend[\buyer](\price[][\iter], 1)}{\expend[\buyer](\price[][\iter], 1)}\right| \leq \frac{1}{4}$. We can then use the bound $1-x(1+x)^{-1} \leq 1 + \frac{4}{3} x^2 - x$, for $0 \leq |x| \leq \frac{1}{4}$, with $x = \frac{\expend[\buyer](\price[][\iter+1], 1) - \expend[\buyer](\price[][\iter], 1)}{\expend[\buyer](\price[][\iter], 1)}$, to get:
\begin{align*}
    \budget[\buyer] \log \left( 1 - \frac{\expend[\buyer](\price[][\iter+1], 1) - \expend[\buyer](\price[][\iter], 1)}{\expend[\buyer](\price[][\iter], 1)} \left(1+ \frac{\expend[\buyer](\price[][\iter+1], 1) - \expend[\buyer](\price[][\iter], 1)}{\expend[\buyer](\price[][\iter], 1)} \right)^{-1} \right) \\
    \leq \budget[\buyer] \log \left( 1 + \frac{4}{3}\left(\frac{\expend[\buyer](\price[][\iter+1], 1) - \expend[\buyer](\price[][\iter], 1)}{\expend[\buyer](\price[][\iter], 1)} \right)^2 - \frac{\expend[\buyer](\price[][\iter+1], 1) - \expend[\buyer](\price[][\iter], 1)}{\expend[\buyer](\price[][\iter], 1)}\right)
\end{align*}

\noindent
Let $a = \frac{4}{3}\left(\frac{\expend[\buyer](\price[][\iter+1], 1) - \expend[\buyer](\price[][\iter], 1)}{\expend[\buyer](\price[][\iter], 1)} \right)^2 - \frac{\expend[\buyer](\price[][\iter+1], 1) - \expend[\buyer](\price[][\iter], 1)}{\expend[\buyer](\price[][\iter], 1)}$. By \Cref{lemma:expend_relative_change}, we know that $0 + (-\nicefrac{1}{4}) \leq a \leq \frac{1}{12} + \nicefrac{1}{4} \Leftrightarrow -\nicefrac{1}{4} \leq a \leq \frac{1}{3}$. We now use the bound $x \geq \log \left( 1 + x \right)$ for $x > -1$, with $x = a$ to get:
\begin{align*}
    &\budget[\buyer] \log \left(1  + \frac{4}{3}\left(\frac{\expend[\buyer](\price[][\iter+1], 1) - \expend[\buyer](\price[][\iter], 1)}{\expend[\buyer](\price[][\iter], 1)} \right)^2 - \frac{\expend[\buyer](\price[][\iter+1], 1) - \expend[\buyer](\price[][\iter], 1)}{\expend[\buyer](\price[][\iter], 1)}\right)\\
    &\leq \budget[\buyer] \left( \frac{4}{3}\left(\frac{\expend[\buyer](\price[][\iter+1], 1) - \expend[\buyer](\price[][\iter], 1)}{\expend[\buyer](\price[][\iter], 1)} \right)^2 - \frac{\expend[\buyer](\price[][\iter+1], 1) - \expend[\buyer](\price[][\iter], 1)}{\expend[\buyer](\price[][\iter], 1)} \right)
    % &= \budget[\buyer] \left( \frac{4}{3}\left(\frac{\left<\hicksian[\buyer](\price[][\iter+1], 1), \price[][\iter+1] \right> - \left<\hicksian[\buyer](\price[][\iter], 1), \price[][\iter] \right>}{\expend[\buyer](\price[][\iter], 1)} \right)^2 - \frac{\expend[\buyer](\price[][\iter+1], 1) - \expend[\buyer](\price[][\iter], 1)}{\expend[\buyer](\price[][\iter], 1)} \right)
    % &= \budget[\buyer] \left( \frac{4}{3}\left(\frac{\left<\hicksian[\buyer](\price[][\iter+1], 1), \price[][\iter+1] \right>^2 - 2 \left<\hicksian[\buyer](\price[][\iter+1], 1), \price[][\iter+1] \right> \left<\hicksian[\buyer](\price[][\iter], 1), \price[][\iter] \right> + \left<\hicksian[\buyer](\price[][\iter], 1), \price[][\iter] \right>^2 }{\expend[\buyer](\price[][\iter], 1)^2} \right) - \frac{\expend[\buyer](\price[][\iter+1], 1) - \expend[\buyer](\price[][\iter], 1)}{\expend[\buyer](\price[][\iter], 1)} \right)\\
    % &= \budget[\buyer] \left( \frac{4}{3}\left(\frac{\left<\hicksian[\buyer](\price[][\iter], 1), \price[][\iter+1] \right>^2 - 2 \left<\hicksian[\buyer](\price[][\iter+1], 1), \price[][\iter+1] \right> \left<\hicksian[\buyer](\price[][\iter], 1), \price[][\iter] \right> + \left<\hicksian[\buyer](\price[][\iter], 1), \price[][\iter] \right>^2 }{\expend[\buyer](\price[][\iter], 1)^2} \right) - \frac{\expend[\buyer](\price[][\iter+1], 1) - \expend[\buyer](\price[][\iter], 1)}{\expend[\buyer](\price[][\iter], 1)} \right)
\end{align*}
% \noindent 
% where the last line follows from \Cref{hicksian-is-expend-minimizer}. Collapsing terms, we obtain:
% \begin{align}
%     &= \budget[\buyer] \left( \frac{4}{3}\left(\frac{2\left<\hicksian[\buyer](\price[][\iter], 1), \price[][\iter+1] \right>^2 - 2 \left<\hicksian[\buyer](\price[][\iter+1], 1), \price[][\iter+1] \right> \left<\hicksian[\buyer](\price[][\iter], 1), \price[][\iter] \right> }{\expend[\buyer](\price[][\iter], 1)^2} \right) - \frac{\expend[\buyer](\price[][\iter+1], 1) - \expend[\buyer](\price[][\iter], 1)}{\expend[\buyer](\price[][\iter], 1)} \right)
% \end{align}

Using a first order Taylor expansion of $\expend[\buyer](\price[][\iter] + \pricediff, 1)$ around $\price[][\iter]$, by Taylor's theorem \cite{graves1927riemann}, we have: $\expend[\buyer](\price[][\iter] + \pricediff, 1) = \expend[\buyer](\price[][\iter], 1) + \left<\grad[\price] \expend[\buyer](\price[][\iter], 1), \pricediff \right> + \nicefrac{1}{2} \left< \grad[\price]^2 \expend[\buyer](\price[][\iter] + c \pricediff, 1) \pricediff, \pricediff \right>$ for some $c \in (0, 1)$. Re-organizing terms around, we get $\expend[\buyer](\price[][\iter+1], 1) - \expend[\buyer](\price[][\iter], 1) =  \left<\grad[\price] \expend[\buyer](\price[][\iter], 1), \pricediff \right> + \nicefrac{1}{2} \left< \grad[\price]^2 \expend[\buyer](\price[][\iter] + c \pricediff, 1) \pricediff, \pricediff \right>$, which gives us:
\begin{align}
    &= \budget[\buyer] \left( \frac{4}{3}\left(\frac{\left<\grad[\price] \expend[\buyer](\price[][\iter], 1), \pricediff \right> + \nicefrac{1}{2} \left< \grad[\price]^2 \expend[\buyer](\price[][\iter] + c \pricediff, 1) \pricediff, \pricediff \right>}{\expend[\buyer](\price[][\iter], 1)} \right)^2 - \right. \notag \\ 
    & \left. \frac{\left<\grad[\price] \expend[\buyer](\price[][\iter], 1), \pricediff \right> + \nicefrac{1}{2} \left< \grad[\price]^2 \expend[\buyer](\price[][\iter] + c \pricediff, 1) \pricediff, \pricediff \right>}{\expend[\buyer](\price[][\iter], 1)} \right)
\end{align}

\noindent
Continuing, by Shepherd's lemma \cite{shephard}, we have:

\begin{align}
    &= \frac{4}{3} \budget[\buyer] \left( \frac{\left< \hicksian[\buyer](\price[][\iter], 1), \pricediff \right> + \nicefrac{1}{2} \left< \grad[\price]^2 \expend[\buyer](\price[][\iter] + c \pricediff, 1) \pricediff, \pricediff \right>}{\expend[\buyer](\price[][\iter], 1)} \right)^2 - \notag \\
     &\budget[\buyer] \frac{\left<\hicksian[\buyer](\price[][\iter], 1), \pricediff \right> + \nicefrac{1}{2} \left< \grad[\price]^2 \expend[\buyer](\price[][\iter] + c \pricediff, 1) \pricediff, \pricediff \right>}{\expend[\buyer](\price[][\iter], 1)}\\
    &= \frac{4}{3} \frac{1}{\budget[\buyer]} \left( \frac{\budget[\buyer] \left< \hicksian[\buyer](\price[][\iter], 1), \pricediff \right>}{\expend[\buyer](\price[][\iter], 1)} + \frac{\nicefrac{\budget[\buyer]}{2} \left< \grad[\price]^2 \expend[\buyer](\price[][\iter] + c \pricediff, 1) \pricediff, \pricediff \right>}{\expend[\buyer](\price[][\iter], 1)} \right)^2 - \notag \\
     &\frac{\budget[\buyer] \left<\hicksian[\buyer](\price[][\iter], 1), \pricediff \right>}{\expend[\buyer](\price[][\iter], 1)} - \frac{\nicefrac{\budget[\buyer]}{2} \left< \grad[\price]^2 \expend[\buyer](\price[][\iter] + c \pricediff, 1) \pricediff, \pricediff \right>}{\expend[\buyer](\price[][\iter], 1)}\\
    &= \frac{4}{3} \frac{1}{\budget[\buyer]}  \left( \left< \marshallian[\buyer](\price[][\iter], 1), \pricediff \right> + \frac{\nicefrac{\budget[\buyer]}{2} \left< \grad[\price]^2 \expend[\buyer](\price[][\iter] + c \pricediff, 1) \pricediff, \pricediff \right>}{\expend[\buyer](\price[][\iter], 1)} \right)^2 - \notag \\
     &\left< \marshallian[\buyer](\price[][\iter], 1), \pricediff \right> - \frac{\nicefrac{\budget[\buyer]}{2} \left< \grad[\price]^2 \expend[\buyer](\price[][\iter] + c \pricediff, 1) \pricediff, \pricediff \right>}{\expend[\buyer](\price[][\iter], 1)}
\end{align}
\noindent where the last line was obtained from \Cref{equiv-def-demand}.

Using \Cref{lemma:expend_change_squared}, we have: 
% \sadie{Here is constant is not fixed!}
% \begin{align}
%     &\leq \frac{4}{3} \left( \sum_{l \in \goods} \frac{\marshallian[\buyer][l][\iter]}{\price[l][\iter]}(\pricediff[l])^2  + \left(\frac{5 \elastic}{9} + \frac{25\elastic^2 }{432} \right) \sum_{l \in \goods} \frac{(\pricediff[l])^2}{\price[l][\iter]}\marshallian[\buyer][l](\price[][\iter] + c \pricediff, \budget[\buyer]) \right) - \left< \marshallian[\buyer](\price[][\iter], 1), \pricediff \right> - \frac{\nicefrac{\budget[\buyer]}{2} \left< \grad[\price]^2 \expend[\buyer](\price[][\iter] + c \pricediff, 1) \pricediff, \pricediff \right>}{\expend[\buyer](\price[][\iter], 1)}\\
%     &=\frac{4}{3} \sum_{l \in \goods} \frac{\marshallian[\buyer][l][\iter]}{\price[l][\iter]}(\pricediff[l])^2  + \left(\frac{20 \elastic}{27} + \frac{25\elastic^2 }{324} \right) \sum_{l \in \goods} \frac{(\pricediff[l])^2}{\price[l][\iter]}\marshallian[\buyer][l](\price[][\iter] + c \pricediff, \budget[\buyer]) - \left< \marshallian[\buyer](\price[][\iter], 1), \pricediff \right> - \frac{\nicefrac{\budget[\buyer]}{2} \left< \grad[\price]^2 \expend[\buyer](\price[][\iter] + c \pricediff, 1) \pricediff, \pricediff \right>}{\expend[\buyer](\price[][\iter], 1)} \label{eq:neg_exp_pos_def}
% \end{align}
\begin{align}
    &\leq \frac{4}{3} \left( \left(1+\frac{5\elastic}{9}\right)\sum_{l \in \goods} \frac{\marshallian[\buyer][l][\iter]}{\price[l][\iter]}(\pricediff[l])^2  + \frac{25\elastic^2 }{432}  \sum_{l \in \goods} \frac{(\pricediff[l])^2}{\price[l][\iter]}\marshallian[\buyer][l](\price[][\iter] + c \pricediff, \budget[\buyer]) \right) - \notag \\
     &\left< \marshallian[\buyer](\price[][\iter], 1), \pricediff \right> - \frac{\nicefrac{\budget[\buyer]}{2} \left< \grad[\price]^2 \expend[\buyer](\price[][\iter] + c \pricediff, 1) \pricediff, \pricediff \right>}{\expend[\buyer](\price[][\iter], 1)}\\
    &=\left(\frac{4}{3}+\frac{20\elastic}{27}\right) \sum_{l \in \goods} \frac{\marshallian[\buyer][l][\iter]}{\price[l][\iter]}(\pricediff[l])^2  +\frac{25\elastic^2 }{324} \sum_{l \in \goods} \frac{(\pricediff[l])^2}{\price[l][\iter]}\marshallian[\buyer][l](\price[][\iter] + c \pricediff, \budget[\buyer]) - \notag \\
     &\left< \marshallian[\buyer](\price[][\iter], 1), \pricediff \right> - \frac{\nicefrac{\budget[\buyer]}{2} \left< \grad[\price]^2 \expend[\buyer](\price[][\iter] + c \pricediff, 1) \pricediff, \pricediff \right>}{\expend[\buyer](\price[][\iter], 1)} \label{eq:neg_exp_pos_def}
\end{align}
% \sadie{\begin{align}
%     &\leq \frac{4}{3} \left( \left(1+\frac{5\elastic}{9}\right)\sum_{l \in \goods} \frac{\marshallian[\buyer][l][\iter]}{\price[l][\iter]}(\pricediff[l])^2  + \frac{25\elastic^2 }{432}  \sum_{l \in \goods} \frac{(\pricediff[l])^2}{\price[l][\iter]}\marshallian[\buyer][l](\price[][\iter] + c \pricediff, \budget[\buyer]) \right) - \left< \marshallian[\buyer](\price[][\iter], 1), \pricediff \right> - \frac{\nicefrac{\budget[\buyer]}{2} \left< \grad[\price]^2 \expend[\buyer](\price[][\iter] + c \pricediff, 1) \pricediff, \pricediff \right>}{\expend[\buyer](\price[][\iter], 1)}\\
%     &=\left(\frac{4}{3}+\frac{20\elastic}{27}\right) \sum_{l \in \goods} \frac{\marshallian[\buyer][l][\iter]}{\price[l][\iter]}(\pricediff[l])^2  +\frac{25\elastic^2 }{324} \sum_{l \in \goods} \frac{(\pricediff[l])^2}{\price[l][\iter]}\marshallian[\buyer][l](\price[][\iter] + c \pricediff, \budget[\buyer]) - \left< \marshallian[\buyer](\price[][\iter], 1), \pricediff \right> - \frac{\nicefrac{\budget[\buyer]}{2} \left< \grad[\price]^2 \expend[\buyer](\price[][\iter] + c \pricediff, 1) \pricediff, \pricediff \right>}{\expend[\buyer](\price[][\iter], 1)} \label{eq:neg_exp_pos_def}
% \end{align}}
Finally, we note that $\grad[\price]^2 \expend[\buyer]$ is negative semi-definite, meaning that we have $\left< \grad[\price]^2 \expend[\buyer](\price[][\iter] + c \pricediff, 1) \pricediff, \pricediff \right> \leq 0$, allowing us to re-express \Cref{eq:neg_exp_pos_def} as follows:
% \begin{align}
%   &= \frac{4}{3} \sum_{l \in \goods} \frac{\marshallian[\buyer][l][\iter]}{\price[l][\iter]}(\pricediff[l])^2  + \left(\frac{20 \elastic}{27} + \frac{25\elastic^2 }{324} \right) \sum_{l \in \goods} \frac{(\pricediff[l])^2}{\price[l][\iter]}\marshallian[\buyer][l](\price[][\iter] + c \pricediff, \budget[\buyer]) - \left< \marshallian[\buyer](\price[][\iter], 1), \pricediff \right> + \left|\frac{\nicefrac{\budget[\buyer]}{2} \left< \grad[\price]^2 \expend[\buyer](\price[][\iter] + c \pricediff, 1) \pricediff, \pricediff \right>}{\expend[\buyer](\price[][\iter], 1)} \right|
% \end{align}
\begin{align}
   &= \left(\frac{4}{3}+\frac{20\elastic}{27}\right) \sum_{l \in \goods} \frac{\marshallian[\buyer][l][\iter]}{\price[l][\iter]}(\pricediff[l])^2  + \frac{25\elastic^2 }{324} \sum_{l \in \goods} \frac{(\pricediff[l])^2}{\price[l][\iter]}\marshallian[\buyer][l](\price[][\iter] + c \pricediff, \budget[\buyer]) - \notag \\
     & \left< \marshallian[\buyer](\price[][\iter], 1), \pricediff \right> + \left|\frac{\nicefrac{\budget[\buyer]}{2} \left< \grad[\price]^2 \expend[\buyer](\price[][\iter] + c \pricediff, 1) \pricediff, \pricediff \right>}{\expend[\buyer](\price[][\iter], 1)} \right|
\end{align}

% \begin{align}
%     &\leq \frac{4}{3} \sum_{l \in \goods} \frac{\marshallian[\buyer][l][\iter]}{\price[l][\iter]}(\pricediff[l])^2  + \left(\frac{20 \elastic}{27} + \frac{25\elastic^2 }{324} \right) \sum_{l \in \goods} \frac{(\pricediff[l])^2}{\price[l][\iter]}\marshallian[\buyer][l](\price[][\iter] + c \pricediff, \budget[\buyer]) - \left< \marshallian[\buyer](\price[][\iter], 1), \pricediff \right> + \frac{5\elastic }{6}    \sum_{l \in \goods} \frac{(\pricediff[l])^2}{\price[l]}  \marshallian[\buyer][l](\price[][\iter] + c \pricediff, \budget[\buyer])\\
%     &\leq \frac{4}{3} \sum_{l \in \goods} \frac{\marshallian[\buyer][l][\iter]}{\price[l][\iter]}(\pricediff[l])^2  + \left(\frac{50 \elastic}{81} + \frac{25\elastic^2 }{324} \right) \sum_{l \in \goods} \frac{(\pricediff[l])^2}{\price[l][\iter]}\marshallian[\buyer][l](\price[][\iter] + c \pricediff, \budget[\buyer]) - \left< \marshallian[\buyer](\price[][\iter], 1), \pricediff \right>
% \end{align}

\begin{align}
    &\leq \left(\frac{4}{3}+\frac{20\elastic}{27}\right) \sum_{l \in \goods} \frac{\marshallian[\buyer][l][\iter]}{\price[l][\iter]}(\pricediff[l])^2  + \frac{25\elastic^2 }{324} \sum_{l \in \goods} \frac{(\pricediff[l])^2}{\price[l][\iter]}\marshallian[\buyer][l](\price[][\iter] + c \pricediff, \budget[\buyer]) - \left< \marshallian[\buyer](\price[][\iter], 1), \pricediff \right> + \notag \\
     &\frac{5\elastic }{6}    \sum_{l \in \goods} \frac{(\pricediff[l])^2}{\price[l]}  \marshallian[\buyer][l](\price[][\iter] + c \pricediff, \budget[\buyer])\\
    &\leq \left(\frac{4}{3}+\frac{20\elastic}{27}\right) \sum_{l \in \goods} \frac{\marshallian[\buyer][l][\iter]}{\price[l][\iter]}(\pricediff[l])^2  + \notag \\
     & \left(\frac{5 \elastic}{6} + \frac{25\elastic^2 }{324} \right) \sum_{l \in \goods} \frac{(\pricediff[l])^2}{\price[l][\iter]}\marshallian[\buyer][l](\price[][\iter] + c \pricediff, \budget[\buyer]) - \left< \marshallian[\buyer](\price[][\iter], 1), \pricediff \right>
\end{align}

where the penultimate line was obtained from \Cref{lemma:expend_change}.
\end{proof}

\if 0

\lemmachomothetic*
\begin{proof}[Proof of \Cref{lemma:c_homothetic}]
($\implies$)
Suppose that $\marshallian[\buyer][\good]^{(\iter)} \geq   \max_{\price\in \simplex[\numgoods]} \max_{k \in \goods : \hicksian[\buyer][k](\price, 1) > 0} \left\{ \frac{\hicksian[\buyer][\good](\price, 1)}{\hicksian[\buyer][k](\price, 1)} \right\}$. Then, for all $k \in \goods$:

\begin{align}
    \marshallian[\buyer][k]^{(\iter)} &= \frac{\hicksian[\buyer][k]^{(\iter)}}{\hicksian[\buyer][\good]^{(\iter)}} \marshallian[\buyer][\good]^{(\iter)}\\
    &\geq \frac{\hicksian[\buyer][k]^{(\iter)}}{\hicksian[\buyer][\good]^{(\iter)}} \max_{\price\in \simplex[\numgoods]} \max_{l \in \goods : \hicksian[\buyer][l](\price, 1) > 0} \left\{ \frac{\hicksian[\buyer][\good](\price, 1)}{\hicksian[\buyer][l](\price, 1)} \right\}\\
    &\geq \frac{\hicksian[\buyer][k]^{(\iter)}}{\hicksian[\buyer][\good]^{(\iter)}} \frac{\hicksian[\buyer][\good]^{(\iter)}}{\hicksian[\buyer][k]^{(\iter)}} = 1
\end{align}
This means that the price of good $\good$ will increase in the next time period, i.e., $\forall \good \in \goods, \price[\good]^{(\iter+1)} \geq \price[\good][\iter]$ which implies that $\expend[\buyer](\price[][\iter+1], 1) \geq \expend[\buyer](\price[][\iter], 1)$. Which gives us:
\begin{align}
    \frac{\budget[\buyer]}{\expend[\buyer](\price[][\iter+1], 1)} \leq \frac{\budget[\buyer]}{\expend[\buyer](\price[][\iter], 1)}\\
    \frac{\budget[\buyer]\hicksian[\buyer][\good]^{(\iter+1)}}{\expend[\buyer](\price[][\iter+1], 1)} \leq \frac{\budget[\buyer] \hicksian[\buyer][\good]^{(\iter+1)}}{\expend[\buyer](\price[][\iter], 1)}\\
    \marshallian[\buyer][\good]^{(\iter+1)} \leq \frac{\budget[\buyer] \hicksian[\buyer][\good]^{(\iter+1)}}{\expend[\buyer](\price[][\iter], 1)}\\
    \marshallian[\buyer][\good]^{(\iter+1)} \leq \frac{\budget[\buyer] \hicksian[\buyer][\good][]t]}{\expend[\buyer](\price[][\iter], 1)} = \marshallian[\buyer][\good]^{(\iter)}
\end{align}
where the last line follows from the net substitutes assumption, since all prices increase at time $t+1$.

Now, instead suppose that  $\marshallian[\buyer][\good]^{(\iter)} \leq   \max_{\price\in \simplex[\numgoods]} \max_{k \in \goods : \hicksian[\buyer][k](\price, 1) > 0} \left\{ \frac{\hicksian[\buyer][\good](\price, 1)}{\hicksian[\buyer][k](\price, 1)} \right\}$, then the demand for good $\good$ increases the most when the prices of all goods go down, which by our assumption that $\forall \good \in \goods, \frac{|\pricediff[\good]|}{\price[\good]^{(\iter)}}$ gives us:
\begin{align}
    \marshallian[\buyer][\good](\price[][\iter+1], \budget[\buyer]) &\leq \marshallian[\buyer][\good](\nicefrac{3}{4} \price[][\iter], \budget[\buyer])\\
    &\leq \nicefrac{4}{3} \marshallian[\buyer][\good](\price[][\iter], \budget[\buyer])\\ &\leq  \max_{\price\in \simplex[\numgoods]} \max_{k \in \goods : \hicksian[\buyer][k](\price, 1) > 0} \left\{ \frac{\hicksian[\buyer][\good](\price, 1)}{\hicksian[\buyer][k](\price, 1)} \right\}
\end{align}
where the last line is obtained by the homogeneity of degree -1 of Marshallian demand in $\price$ for homothetic preferences \cite{goktas2022consumer} and by our assumption. Then, by induction on $t$, we obtain:
\begin{align}
    \marshallian[\buyer][\good]^{(\iter)} \leq \max \left\{\marshallian[\buyer][\good]^0, \max_{\price\in \simplex[\numgoods]} \max_{k \in \goods : \hicksian[\buyer][k](\price, 1) > 0} \left\{ \frac{\hicksian[\buyer][\good](\price, 1)}{\hicksian[\buyer][k](\price, 1)} \right\}  \right\} 
\end{align}

($\;\Longleftarrow\;$)

The opposite direction follows by induction on the time steps of t\^atonnement. On iteration $0$, the condition holds trivially, then by the induction hypothesis can be proven by the same argument as above.
\end{proof}
\fi

\end{document}